\newtheorem{theorem}{Theorem}
\newtheorem{definition}[theorem]{Definition}
\newtheorem{lemma}[theorem]{Lemma}
\newtheorem{proposition}[theorem]{Proposition}
\newtheorem{remark}[theorem]{Remark}
\newenvironment{proof}[1][Proof]{\noindent\textbf{#1.} }{\ \rule{0.5em}{0.5em}}
\begin{document}

\title{Hylomorphic Vortices in Abelian Gauge Theories}
\author{Vieri Benci$^{\ast}$, Donato Fortunato$^{\ast\ast}$\\$^{\ast}$Dipartimento di Matematica Applicata \textquotedblleft U.
Dini\textquotedblright\\Universit\`{a} di Pisa\\via F. Buonarroti 1/c 56127 Pisa, Italy\\e-mail: benci@dma.unipi.it\\$^{\ast\ast}$Dipartimento di Matematica \\Universit\`{a} di Bari and INFN sezione di Bari\\Via Orabona 4, 70125 Bari, Italy\\e-mail: fortunat@dm.uniba.it}
\date{}
\maketitle

\begin{abstract}
We consider an Abelian Gauge Theory in $\mathbb{R}^{4}$ equipped with the
Minkowski metric. This theory leads to a system of equations, the
Klein-Gordon-Maxwell equations, which provide models for the interaction
between the electromagnetic field and matter. We assume that the nonlinear
term is such that the energy functional is positive; this fact makes the
theory more suitable for physical models.

A three dimensional vortex is a finite energy, stationary solution of these
equations such that the matter field has nontrivial angular momentum and the
magnetic field looks like the field created by a finite solenoid. Under
suitable assumptions, we prove the existence of three dimensional vortex-solutions.

\end{abstract}
\tableofcontents

\section{Introduction\label{ded}}

Roughly speaking, a \textit{vortex} is a \textit{solitary wave} $\psi$\ with
non-vanishing angular momentum ($\mathbf{M}\left(  \psi\right)  \neq0$). A
\textit{solitary wave} is a solution of a field equation whose energy is
localized and which preserves this localization in time.

Here we are interested in proving the existence of vortices in Abelian gauge
theories. Abelian gauge theories, in $\mathbb{R}^{4}$ equipped with the
Minkowski metric, provide models for the interaction between the
electromagnetic field and matter. Actually an Abelian gauge theory leads to a
system of equations (see (\ref{e1}), (\ref{e2}), (\ref{e3})), the
Klein-Gordon-Maxwell equations (KGM), which occur in various physical problems
such as elementary particles, superconductivity, cosmology, nonlinear optics
(see e.g. \cite{rub}, \cite{fel}, \cite{raj}, \cite{vil}).

The KGM can be regarded as a perturbation of the nonlinear Klein-Gordon
equation (KG) (see (\ref{KG})).

So first we recall some existence results for KG:

\begin{itemize}
\item For the case $\mathbf{M}$ $\left(  \psi\right)  =0,$ we recall the
pioneering paper of Rosen \cite{rosen68} and \cite{Coleman78}, \cite{strauss},
\cite{Beres-Lions}. When the lower order term $W$ $\geq0$ (see (\ref{KG})),
the spherically symmetric solitary waves have been called $Q$ -balls by
Coleman in \cite{Coleman86} and this is the name used in the physics literature.

\item Vortices for KG in two space dimensions have been investigated in
\cite{Kim93}; later also three dimensional vortices for KG have been
investigated (see \cite{Be-Visc}, \cite{bad}).
\end{itemize}

Now let us see some literature on KGM. We notice that the peculiarities of the
model depend on the lower order term $W$ and it is relevant to distinguish
various situations.

\begin{itemize}
\item For the case $\mathbf{M}$ $\left(  \psi\right)  =0,$ the existence of
solitary waves for KGM was first proved in \cite{bf} assuming that
\begin{equation}
W(s)=\frac{1}{2}s^{2}-\frac{s^{p}}{p},\text{ }4<p<6,\text{ }s\geq0.\label{oc}%
\end{equation}
The existence of solitary waves for KGM in this situation (i.e. with
$\mathbf{M}\left(  \psi\right)  =0$ and $W$ as in (\ref{oc})) has been studied
also in \cite{ca}, \cite{tea}, \cite{tea2}, \cite{dav}. In these papers the
existence and the non-existence of stationary solutions has been proved under
different assumptions.
\end{itemize}

However the lower order term $W$ defined by (\ref{oc}) is not suitable to
model interesting physical models since in this case there are configurations
with negative energy and the evolution problems relative to KGM does not
possess in general global solutions (cf. e. g. \cite{befogranas}). So the
request
\[
W\geq0
\]
seems to be necessary to get solutions which are physically meaningful.

\begin{itemize}
\item The case $W\geq0$ and $\mathbf{M}\left(  \psi\right)  =0$ has been
treated in \cite{befo}.
\end{itemize}

Now let us consider the existence of vortices ($\mathbf{M}\left(  \psi\right)
\neq0$) for KGM.

\begin{itemize}
\item The existence of vortices for Abelian gauge theories in two space
dimensions has been discovered in a seminal paper by Abrikosov \cite{ab} in
the study of the superconductivity. Then, in \cite{nil}, the planar vortices
are studied in the context of elementary particles (see also the books
\cite{fel}, \cite{raj}, \cite{rub}, \cite{yangL} with their references). We
point out that, in these cases, the function $W$ that has been considered is
of the type
\begin{equation}
W(s)=\left(  1-s^{2}\right)  ^{2}\label{v1}%
\end{equation}
namely it is a double well shaped and positive function. However, if
(\ref{v1}) holds, there are not vortices in 3 space dimensions for KGM (see
Theorem \ref{b}).

\item In \cite{befov07}, \cite{bis} the existence of vortices in 3 space
dimensions has been proved assuming (\ref{oc}).
\end{itemize}

The aim of this paper is to to prove the existence of vortices in 3 space
dimensions also when $W(s)\geq0\;$and $W(0)=0.$ More precisely we consider the
term $W$ with the assumpion used in \cite{Coleman86} and similar papers.

The existence of solitary waves in this situation is based on the fact that
the ratio between energy and charge can be sufficiently low; thus, following
\cite{bel}, these vortices are called hylomorphic (cf. section \ref{hc}).

Moreover these vortices are related with a non trivial magnetic field and a
non trivial electric field. In particular the magnetic field looks like to the
field created by a finite solenoid.

Since the KGM are invariant for the Lorentz group, a Lorentz boost of a vortex
creates a travelling solitary wave.

The paper is organized as follows. In section 2 we introduce the
KGM-equations, we study some of their general features, we give the definition
of three dimensional vortex and finally state the main result in Theorem
\ref{main}. Section 3 is devoted to the proof of Theorem \ref{main}.

\section{Statement of the problem and results}

\subsection{The Klein-Gordon-Maxwell system}

The nonlinear Klein-Gordon equation\ for a complex valued field $\psi,$
defined on the spacetime $\mathbb{R}^{4},$ can be written as follows:%
\begin{equation}
\square\psi+W^{\prime}(\left\vert \psi\right\vert )\frac{\psi}{\left\vert
\psi\right\vert }=0\label{KG}%
\end{equation}
where
\[
\square\psi=\frac{\partial^{2}\psi}{\partial t^{2}}-\Delta\psi,\;\;\text{\ }%
\Delta\psi=\frac{\partial^{2}\psi}{\partial x_{1}^{2}}+\frac{\partial^{2}\psi
}{\partial x_{2}^{2}}+\frac{\partial^{2}\psi}{\partial x_{3}^{2}}%
\]
and $W:\mathbb{R}_{+}\rightarrow\mathbb{R.}$

Hereafter $x=(x_{1},x_{2},x_{3})$ and $t$ will denote the space and time variables.

The field $\psi:$ $\mathbb{R}^{4}\rightarrow\mathbb{C}$ will be called
\textit{matter field}. If $W^{\prime}(s)$ is linear, $W^{\prime}(s)=m_{0}%
^{2}s,$ $m_{0}\neq0,$ equation (\ref{KG}) reduces to the Klein-Gordon equation.

Now let $\Gamma$ be a 1- form on $\mathbb{R}^{4}$ whose coefficients
$\Gamma_{j}$ are in the Lie algebra $u(1)$ of the group $U(1)=S^{1},$ i.e.
$\Gamma_{j}=-iA_{j}$, where $i$ is the imaginary unit and $A_{j}$ ($j=0,..,3)
$ are real maps defined in $\mathbb{R}^{4}.$

Consider the Abelian gauge theory related to $\psi$ and to $\Gamma$ and
described by the Lagrangian density (see e.g. \cite{yangL}, \cite{rub})
\begin{equation}
\mathcal{L}=\mathcal{L}_{0}+\mathcal{L}_{1}-W(\left\vert \psi\right\vert
)\label{marisa}%
\end{equation}
where
\[
\mathcal{L}_{0}=-\frac{1}{2}\left\langle d_{A}\psi,d_{A}\psi\right\rangle
,\text{ }\mathcal{L}_{1}=-\frac{1}{2}\left\langle d_{A}A,d_{A}A\right\rangle
,\text{ }A=\overset{3}{\underset{j=0}{\sum}}A_{j}dx^{j}%
\]
and $W$ is a real $C^{1}$-function. Here
\[
d_{A}=d-iqA=\overset{3}{\underset{j=0}{\sum}}\left(  \frac{\partial}{\partial
x^{j}}-iqA_{j}\right)
\]
denotes the gauge covariant differential and $\left\langle \cdot
,\cdot\right\rangle $ denotes the scalar product between forms with respect
the Minkowski metric in $\mathbb{R}^{4}$ and $q$ is a constant.

Since the $A_{j}$'s are real,
\[
d_{A}A=dA-iA\wedge A=dA.
\]

From now on, we shall use the following notation:%
\[
\mathbf{%
%TCIMACRO{\QTR{mathbf}{A\mathbf{=(}}}%
%BeginExpansion
A\mathbf{=(}%
%EndExpansion
}A_{1},A_{2},A_{3}\mathbf{%
%TCIMACRO{\QTR{mathbf}{)}}%
%BeginExpansion
)%
%EndExpansion
}\text{ and }\phi=-A_{0}.
\]
If we set $t=x^{0}=-x_{0}$ and $x=(x^{1},x^{2},x^{3})=(x_{1},x_{2},x_{3})$ and
use vector notation, the Lagrangian densities $\mathcal{L}_{0},\mathcal{L}%
_{1}$ can be written as follows
\begin{equation}
\mathcal{L}_{0}=\frac{1}{2}\left[  \left\vert \left(  \partial_{t}%
+iq\phi\right)  \psi\right\vert ^{2}-\left\vert \left(  \nabla-iq\mathbf{A}%
\right)  \psi\right\vert ^{2}\right]  .
\end{equation}

\[
\mathcal{L}_{1}=\frac{1}{2}\left\vert \partial_{t}\mathbf{%
%TCIMACRO{\QTR{mathbf}{A}}%
%BeginExpansion
A%
%EndExpansion
}+\nabla\phi\right\vert ^{2}-\frac{1}{2}\left\vert \nabla\times\mathbf{A}%
\right\vert ^{2}.
\]
Here $\nabla\times$ and $\nabla$ denote respectively the curl and the gradient
operators with respect to the $x$ variable.

Now consider the total action of the Abelian gauge theory
\begin{equation}
\mathcal{S}=\int\left(  \mathcal{L}_{0}+\mathcal{L}_{1}-W(\left\vert
\psi\right\vert )\right)  dxdt.
\end{equation}

Making the variation of $\mathcal{S}$ with respect to $\psi,$ $\phi$ and
$\mathbf{A}$ we get the system of equations (KGM)
\begin{equation}
\left(  \partial_{t}+iq\phi\right)  ^{2}\psi-\left(  \nabla-iq\mathbf{A}%
\right)  ^{2}\psi+W^{^{\prime}}(\left\vert \psi\right\vert )\frac{\psi
}{\left\vert \psi\right\vert }=0\label{e1}%
\end{equation}%
\begin{equation}
\nabla\cdot\left(  \partial_{t}\mathbf{%
%TCIMACRO{\QTR{mathbf}{A}}%
%BeginExpansion
A%
%EndExpansion
}+\nabla\phi\right)  =q\left(  \operatorname{Im}\frac{\partial_{t}\psi}{\psi
}+q\phi\right)  \left\vert \psi\right\vert ^{2}\label{e2}%
\end{equation}%
\begin{equation}
\nabla\times\left(  \nabla\times\mathbf{A}\right)  +\partial_{t}\left(
\partial_{t}\mathbf{%
%TCIMACRO{\QTR{mathbf}{A}}%
%BeginExpansion
A%
%EndExpansion
}+\nabla\phi\right)  =q\left(  \operatorname{Im}\frac{\nabla\psi}{\psi
}-q\mathbf{A}\right)  \left\vert \psi\right\vert ^{2}.\label{e3}%
\end{equation}
Here $\nabla\cdot$ denotes the divergence operator.

In order to show the relation of the above equations with the Maxwell
equations and to get a model for Electrodynamics, we make the following change
of variables:
\begin{equation}
\mathbf{E=-}\left(  \frac{\partial\mathbf{%
%TCIMACRO{\QTR{mathbf}{A}}%
%BeginExpansion
A%
%EndExpansion
}}{\partial t}+\nabla\phi\right) \label{pos1}%
\end{equation}%
\begin{equation}
\mathbf{H}=\nabla\times\mathbf{A}\label{pos2}%
\end{equation}%
\begin{equation}
\rho=-q\left(  \operatorname{Im}\frac{\partial_{t}\psi}{\psi}+q\phi\right)
\left\vert \psi\right\vert ^{2}\label{caricona}%
\end{equation}%
\begin{equation}
\mathbf{j}=q\left(  \operatorname{Im}\frac{\nabla\psi}{\psi}-q\mathbf{A}%
\right)  \left\vert \psi\right\vert ^{2}.\label{tar}%
\end{equation}
So (\ref{e2}) and (\ref{e3}) are the second couple of the Maxwell equations
with respect to a matter distribution whose electric charge and current
densities are respectively $\rho$ and $\mathbf{j}$:
\begin{equation}
\nabla\cdot\mathbf{E}=\rho\label{gauss}%
\end{equation}%
\begin{equation}
\nabla\times\mathbf{H}-\frac{\partial\mathbf{E}}{\partial t}=\mathbf{j}%
.\label{ampere}%
\end{equation}
Equations (\ref{pos1}) and (\ref{pos2}) give rise to the first couple of the
Maxwell equation:
\begin{equation}
\nabla\times\mathbf{E}+\frac{\partial\mathbf{H}}{\partial t}=0\label{faraday}%
\end{equation}%
\begin{equation}
\nabla\cdot\mathbf{H}=0.\label{monopole}%
\end{equation}

If we set%

\[
\psi\left(  t,x\right)  =u\left(  t,x\right)  e^{iS(t,x)},\ u\in\mathbb{R}%
^{+},\ S\in\frac{\mathbb{R}}{2\pi\mathbb{Z}}%
\]
equation (\ref{e1}) can be splitted in the two following ones
\begin{equation}
\square u+W^{\prime}(u)+\left[  \left\vert \nabla S-q\mathbf{A}\right\vert
^{2}-\left(  \frac{\partial S}{\partial t}+q\phi\right)  ^{2}\right]  \,u=0
\end{equation}%
\begin{equation}
\frac{\partial}{\partial t}\left[  \left(  \frac{\partial S}{\partial t}%
+q\phi\right)  u^{2}\right]  -\nabla\cdot\left[  \left(  \nabla S-q\mathbf{A}%
\right)  u^{2}\right]  =0
\end{equation}
and these equations, using the varaiables $\mathbf{j}$ and $\rho$ become
\begin{equation}
\square u+W^{\prime}(u)+\frac{\mathbf{j}^{2}-\rho^{2}}{q^{2}u^{3}%
}=0\label{materia}%
\end{equation}%
\begin{equation}
\frac{\partial\rho}{\partial t}+\nabla\cdot\mathbf{j}=0.\label{continuità}%
\end{equation}
Equation (\ref{continuità}) is the charge continuity equation.

Notice that equation (\ref{continuità}) is a consequence of (\ref{gauss}) and
(\ref{ampere}).

In conclusion, an Abelian gauge theory, via equations (\ref{materia}%
,\ref{gauss},\ref{ampere},\ref{faraday},\ref{monopole}), provides a model of
interaction of the matter field $\psi$ with the electromagnetic field
$(\mathbf{E},\mathbf{H})$.

\subsection{The Hamilton-Jacobi equations\label{hc}}

Observe that the Lagrangian (\ref{marisa}) is invariant with respect to the
gauge transformations%
\begin{equation}
\psi\rightarrow e^{iq\chi}\psi\label{ga}%
\end{equation}%
\begin{equation}
\phi\rightarrow\phi-\partial_{t}\chi\label{ge}%
\end{equation}%
\begin{equation}
\mathbf{A\rightarrow A}+\nabla\chi\label{gi}%
\end{equation}
where $\chi\in C^{\infty}\left(  \mathbb{R}^{4}\right)  $.

So, our equations are gauge invariant; if we use the variable $u,\rho
,\mathbf{j},\mathbf{E},$ $\mathbf{H,}$ this fact can be checked directly since
these variables are gauge invariant.

In fact, equations (\ref{gauss},\ref{ampere},\ref{faraday},\ref{monopole}%
,\ref{materia}) are the gauge invariant formulation of equations
(\ref{e1},\ref{e2},\ref{e3}).

Also, we can replace the variables $\rho,\mathbf{j}$ with the variables
$\Omega$ and $\mathbf{K}$ defined by the following equations:%
\begin{align}
\rho & =q\Omega u^{2}\label{ch}\\
\mathbf{j}  & =q\mathbf{K}u^{2}.\label{cur}%
\end{align}

Using this notation, the continuity equation (\ref{continuità}) becomes%
\[
\partial_{t}(\Omega u^{2})+\nabla\cdot\left(  \mathbf{K}u^{2}\right)  =0.
\]
This equation allows us to interprete the matter field to be a fluid composed
by particles whose density is given by $\Omega u^{2}$ and which move in a
velocity field%
\begin{equation}
\mathbf{v}=\frac{\mathbf{K}}{\Omega}=-\frac{\nabla S-q\mathbf{A}}{\partial
_{t}S+q\phi}.\label{vel}%
\end{equation}
Then, since $\rho$ represents the electric charge density, $q=\rho/\Omega
u^{2}$ is interpreted as the electric charge of each particle. The total
number of particles
\begin{equation}
\sigma=\int\Omega u^{2}dx=-\int(\partial_{t}S+q\phi)u^{2}dx\label{apr}%
\end{equation}
is an integral of motion which, following \cite{bel}, we will call
\textit{hylenic charge}.

We set%
\begin{equation}
W(s)=\frac{m^{2}}{2}s^{2}+N(s),\label{enne}%
\end{equation}
with $N(0)=N^{\prime}(0)=N^{\prime\prime}(0)=0;$ then Equation (\ref{materia}%
), using the variables $\Omega$ and $\mathbf{K,}$ becomes%
\begin{equation}
\square u+N^{\prime}(u)+(m^{2}+\mathbf{K}^{2}-\Omega^{2})u=0\label{ganzetta}%
\end{equation}

If
\begin{equation}
\square u+N^{\prime}(u)\ll u,\label{rosa}%
\end{equation}
this equation, can be approximated by%
\[
\Omega^{2}=m^{2}+\mathbf{K}^{2};
\]
and, using the definition of $\Omega$ and $\mathbf{K}$ we get%
\[
\left(  \partial_{t}S+q\phi\right)  ^{2}=m^{2}+\left(  \nabla S-q\mathbf{A}%
\right)  ^{2}%
\]
or%
\begin{equation}
\partial_{t}S=-q\phi+\sqrt{m^{2}+\left(  \nabla S-q\mathbf{A}\right)  ^{2}%
}.\label{hj}%
\end{equation}
This is the relativistic Hamilton-Jacobi equation of a particle of rest mass
$m$ and charge $q$ in a elecromagnetic field with gauge potentials $\left(
\phi,\mathbf{A}\right)  $ (cf. e.g. \cite{landau} Ch. III). Equations
(\ref{hj}) and (\ref{vel}) completely describe the motion of these particles.

Since $S$ is a phase, then
\begin{align*}
\partial_{t}S  & =\omega\\
\nabla S  & =\mathbf{k};
\end{align*}
where $\omega$ and $\mathbf{k}$ are the local frequency and the local wave
number repectively. Moreover, the energy of each particle moving according to
(\ref{hj}), is given by%
\[
E=\partial_{t}S
\]
and its momentum is given by%
\[
\mathbf{p}=\nabla S;
\]
thus we have that%
\begin{align*}
E  & =\omega\\
\mathbf{p}  & =\mathbf{k};
\end{align*}
these two equations are the De Broglie relation with $\hbar=1;$ it is
interesting to see how the De Broglie relations arise in a natural way also
out of quantum mechanics. We notice that in the De Broglie interpretation of
quantum mechanics, $\Omega$ and $\mathbf{K}$ represent the probability density
and the probability flow of the position of a particle (see \cite{DB}).

If we do not assume (\ref{rosa}), equation (\ref{hj}) needs to be replaced by%
\begin{equation}
\partial_{t}S=-q\phi+\sqrt{m^{2}+\left(  \nabla S-q\mathbf{A}\right)
^{2}+\frac{\square u+N^{\prime}(u)}{u}}.\label{hjq}%
\end{equation}
Concluding, we may think that equation (\ref{e1}) describes a fluid of
particles of mass $m$ and charge $q$ which moves under the action of an
electromagnetic field $(\mathbf{E,H});$ the term $\frac{\square u+N^{\prime
}(u)}{u}$ in (\ref{hjq}) can be regarded as a field describing a sort of
interaction between particles. In the Bohm-De Broglie formulation of quantum
mechanics, this term corresponds to the \textit{quantum potential.}

\subsection{Conservation laws}

Noether's theorem states that any invariance for a one-parameter group of the
Lagrangian implies the existence of an integral of motion (see e.g.
\cite{gelfand}).

In the previuos section, we have seen that the hylenic charge $\sigma$ and,
consequently, the electric charge $Q=q\sigma$ are integrals of motions. This
conservation law is due to the gauge invariance.

Now we will consider other integrals which will be relevant for this paper.

\begin{itemize}
\item \textbf{Energy}. Energy, by definition, is the quantity which is
preserved by the time invariance of the Lagrangian; using the gauge invariant
variables, it takes the following form%
\begin{equation}
\mathcal{E}=\mathcal{E}_{m}+\mathcal{E}_{f}\label{sp}%
\end{equation}

where
\[
\mathcal{E}_{m}=\frac{1}{2}\int\left[  \left(  \frac{\partial u}{\partial
t}\right)  ^{2}+\left\vert \nabla u\right\vert ^{2}+(m^{2}+\Omega
^{2}+\mathbf{K}^{2})u^{2}\right]  +\int N(u)
\]
and%
\[
\mathcal{E}_{f}=\frac{1}{2}\int\left(  \mathbf{E}^{2}+\mathbf{H}^{2}\right)
dx.
\]
(for the computation of $\mathcal{E}$, see e.g. (\cite{befogranas})).

\item \textbf{Momentum. }Momentum, by definition, is the quantity which is
preserved by the space invariance of the Lagrangian; using the gauge invariant
variables, it takes the following form%
\begin{equation}
\mathbf{P}=\mathbf{P}_{m}+\mathbf{P}_{f}\label{spl}%
\end{equation}

where
\[
\mathbf{P}_{m}=\int\left[  -\left(  \partial_{t}u\,\nabla udx\right)
+\mathbf{K}\Omega u^{2}\right]  dx
\]

\end{itemize}

and%
\[
\mathbf{P}_{f}=\int\mathbf{E}\times\mathbf{H}\ dx.
\]

\begin{itemize}
\item \textbf{Angular momentum. }The angular momentum, by definition, is the
quantity which is preserved by virtue of the invariance under space rotations
of the Lagrangian with respect to the origin. Using the gauge invariant
variables, we get:%
\begin{equation}
\mathbf{M}=\mathbf{M}_{m}+\mathbf{M}_{f}\label{spli}%
\end{equation}
where
\begin{equation}
\mathbf{M}_{m}=\int\left[  -\mathbf{x}\times\left(  \nabla u\,\partial
_{t}u\right)  +\mathbf{x}\times\mathbf{K}\Omega u^{2}\right]  dx\label{am}%
\end{equation}
and%
\[
\mathbf{M}_{f}=\int\mathbf{x}\times\left(  \mathbf{E}\times\mathbf{H}\right)
\ dx.
\]

\end{itemize}

Notice that each of the integrals $\mathcal{E}$, $\mathbf{P},\mathbf{M}$ can
be splitted in two parts (see (\ref{sp}), (\ref{spl}), (\ref{spli})). The
first one refers to the "matter field" and the second to the "elecromagnetic field".

\subsection{Stationary solutions and vortices}

We look for stationary solutions of (\ref{e1}), (\ref{e2}), (\ref{e3}), namely
solutions of the form
\begin{align}
\psi\left(  t,x\right)   & =u\left(  x\right)  e^{iS(x,t)},\text{\ }%
u\in\mathbb{R}^{+},\ \omega\in\mathbb{R},\text{\ }S=S_{0}(x)-\omega t\in
\frac{\mathbb{R}}{2\pi\mathbb{Z}}\label{st}\\
\partial_{t}\mathbf{A}  & =0\mathbf{,\ }\partial_{t}\phi=0.\label{str}%
\end{align}

Substituting (\ref{st}) and (\ref{str}) in (\ref{e1}), (\ref{e2}), (\ref{e3}),
we get the following equations:
\begin{equation}
-\Delta u+\left[  \left\vert \nabla S_{0}-q\mathbf{A}\right\vert ^{2}-\left(
\omega-q\phi\right)  ^{2}\right]  \,u+W^{\prime}\left(  u\right)  =0\label{h1}%
\end{equation}%
\begin{equation}
-\nabla\cdot\left[  \left(  \nabla S_{0}-q\mathbf{A}\right)  u^{2}\right]
=0\label{h2}%
\end{equation}%
\begin{equation}
-\Delta\phi=q\left(  \omega-q\phi\right)  u^{2}\;\label{h3}%
\end{equation}%
\begin{equation}
\nabla\times\left(  \nabla\times\mathbf{A}\right)  =q\left(  \nabla
S_{0}-q\mathbf{A}\right)  u^{2}\;.\label{h4}%
\end{equation}
Observe that equation (\ref{h2}) easily follows from equation (\ref{h4}). Then
we are reduced to study the system (\ref{h1}), (\ref{h3}), (\ref{h4}). The
energy of a solution of equations (\ref{h1}), (\ref{h3}), (\ref{h4}) has the
following expression
\begin{align}
\mathcal{E}  & =\frac{1}{2}\int\left(  \left\vert \nabla u\right\vert
^{2}+\left\vert \nabla\phi\right\vert ^{2}+\left\vert \nabla\times
\mathbf{A}\right\vert ^{2}+(\left\vert \nabla S_{0}-q\mathbf{A}\right\vert
^{2}+\left(  \omega-q\phi\right)  ^{2})\,u^{2}\right) \nonumber\\
& +\int W(u)\label{enna}%
\end{align}

Moreover the (electric) charge (see (\ref{ch}) and (\ref{apr}) ) is given by
\begin{equation}
Q=q\sigma\label{ker}%
\end{equation}

where
\begin{equation}
\sigma=\int\Omega u^{2}=\int\left(  \omega-q\phi\right)  u^{2}.\label{car}%
\end{equation}

Clearly, when $u=0,$ the only finite energy gauge potentials which solve
(\ref{h3}), (\ref{h4}) are the trivial ones $\mathbf{A=}0,$ $\phi=0.$

It is possible to have three types of finite energy stationary non trivial solutions:

\begin{itemize}
\item electrostatic solutions: $\mathbf{A}=0$, $\phi\neq0;$

\item magnetostatic solutions: $\mathbf{A}\neq0$, $\phi=0;$

\item electro-magneto-static solutions: $\mathbf{A}\neq0$, $\phi\neq0$.
\end{itemize}

Under suitable assumptions, all these types of solutions exist. The existence
and the non existence of electrostatic solutions for the equations (\ref{h1}),
(\ref{h3}) have been proved under different assumptions on $W.$ In \cite{bf},
\cite{ca}, \cite{tea}, \cite{tea2}, \cite{dav} lower order terms $W$ like
(\ref{oc}) have been taken into account. In \cite{befo} the existence of
electrostatic solutions has been proved for a class of positive lower order
terms $W.$ In particular the existence of radially symmetric, electrostatic
solutions has been analyzed. These solutions have zero angular momentum.

Here we are interested in electro-magneto-static solutions, in particular we
shall study the existence of vortices which are solutions with nonvanishing
angular momentum.

We set
\[
\Sigma=\left\{  \left(  x_{1},x_{2},x_{3}\right)  \in\mathbb{R}^{3}%
:x_{1}=x_{2}=0\right\}
\]
and we define the map
\[
\theta:\mathbb{R}^{3}\backslash\Sigma\rightarrow\frac{\mathbb{R}}%
{2\pi\mathbb{Z}}%
\]%
\[
\theta(x_{1},x_{2},x_{3})=\operatorname{Im}\log(x_{1}+ix_{2}).
\]
In (\ref{st}) we take $S_{0}=\ell\theta$ ($\ell$ integer) and give the
following definition.

\begin{definition}
A finite energy solution of Eq. (\ref{h1}), (\ref{h3}), (\ref{h4}) is called
vortex if $S_{0}=\ell\theta(x)$ with $\ell\neq0.\ $
\end{definition}

In this case, $\psi$ has the following form
\begin{equation}
\psi(t,x)=u(x)\,e^{i\left(  \ell\theta(x)-\omega t\right)  };\ \ell
\in\mathbb{Z-}\left\{  0\right\}  .\label{ans}%
\end{equation}

We shall see (Proposition \ref{ang}) that the angular momentum $\mathbf{M}%
_{m}$ of the matter field of a vortex does not vanish; this fact justifies the
name "vortex".

Observe that $\theta\in C^{\infty}\left(  \mathbb{R}^{3}\backslash\Sigma
,\frac{\mathbb{R}}{2\pi\mathbb{Z}}\right)  .$ We set with abuse of notation%
\[
\nabla\theta(x)=\frac{x_{2}}{x_{1}^{2}+x_{2}^{2}}\mathbf{e}_{1}-\frac{x_{1}%
}{x_{1}^{2}+x_{2}^{2}}\mathbf{e}_{2}%
\]
where $\mathbf{e}_{1},\mathbf{e}_{2},\mathbf{e}_{3}$ is the canonical base in
$\mathbb{R}^{3}.$

Using the ansatz (\ref{ans}), equations (\ref{h1}), (\ref{h3}), (\ref{h4})
become
\begin{equation}
-\Delta u+\left[  \left\vert \ell\nabla\theta-q\mathbf{A}\right\vert
^{2}-\left(  \omega-q\phi\right)  ^{2}\right]  \,u+W^{\prime}(u)=0\label{z1}%
\end{equation}%
\begin{equation}
-\Delta\phi=q\left(  \omega-q\phi\right)  u^{2}\;\label{z3}%
\end{equation}%
\begin{equation}
\nabla\times\left(  \nabla\times\mathbf{A}\right)  =q\left(  \ell\nabla
\theta-q\mathbf{A}\right)  u^{2}.\;\label{z4}%
\end{equation}

In this case, the gauge invariant variables take the following expression:%
\begin{align}
\mathbf{E}  & =-\nabla\phi\label{1}\\
\mathbf{H}  & =\nabla\times\mathbf{A}\label{2}\\
\Omega & =\omega-q\phi\label{3}\\
\mathbf{K}  & =\ell\nabla\theta-q\mathbf{A}\label{4}\\
\rho & =q\Omega u^{2}\label{5}\\
\mathbf{j}  & =q\mathbf{K}u^{2}.\label{6}%
\end{align}
which give the equations%
\begin{align*}
\Delta u-W^{\prime}(u)  & =\left(  \mathbf{K}^{2}-\Omega^{2}\right)  u\\
\nabla\cdot\mathbf{E}  & =q\Omega u^{2}\\
\nabla\times\mathbf{H}  & =q\mathbf{K}u^{2}.
\end{align*}

As we have observed in the introduction, positive, double well shaped
potentials $W$ like (\ref{v1}) are not suitable for the existence of
3-dimensional vortices of type (\ref{ans}). In fact the following proposition holds:

\begin{proposition}
\label{b}Assume that $W$ satisfies the assumptions:%
\begin{equation}
\forall s\geq0:\ W(s)\geq0\label{1w}%
\end{equation}%
\begin{equation}
W(0)>0\label{2w}%
\end{equation}%
\begin{equation}
\exists\bar{s}:W(\bar{s})=0\label{3ww}%
\end{equation}
then (\ref{h1}), (\ref{h3}), (\ref{h4}) has no vortex solution..
\end{proposition}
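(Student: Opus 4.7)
I would argue by contradiction: suppose $(u,\phi,\mathbf A)$ is a finite-energy vortex solution of (\ref{h1}), (\ref{h3}), (\ref{h4}) with $S_0 = \ell\theta$ and $\ell \neq 0$. The plan is to use the hypotheses on $W$ to pin down the asymptotic behaviour of the fields, derive a quantized magnetic flux through every horizontal plane from the vortex topology, and then show this is incompatible with $\int|\mathbf H|^2\,dx<\infty$ in three space dimensions.

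\emph{Reduction to the magnetostatic case.} From $\int W(u)\,dx<\infty$, $W\geq 0$ and $W(0)>0$, together with the regularity of $u$ coming from (\ref{z1}), the field $u$ cannot decay to $0$ at spatial infinity; it must tend to a nontrivial zero of $W$, say $\bar s$. Finiteness of $\int(\omega-q\phi)^2u^2\,dx$ with $u\to\bar s\neq 0$ then forces $\omega-q\phi\to 0$ at infinity. Using the gauge transformation (\ref{ga})--(\ref{gi}) with $\chi=ct$ (which shifts $\omega$ by $qc$ and $\phi$ by $c$ but leaves $\mathbf A$ unchanged) we normalize $\omega=0$, so that $\phi\to 0$. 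Equation (\ref{z3}) becomes $-\Delta\phi+q^2u^2\phi=0$; testing against $\phi$ and integrating by parts gives
\begin{equation*}
\int|\nabla\phi|^2\,dx+q^2\int u^2\phi^2\,dx=0,
\end{equation*}
whence $\phi\equiv 0$. We are reduced to the purely magnetostatic system, with $u\to\bar s$ and $\mathbf K=\ell\nabla\theta-q\mathbf A\to 0$ at infinity (the latter forced by $\int|\mathbf K|^2u^2\,dx<\infty$).

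\emph{Flux quantization and contradiction.} For a horizontal circle $C_R$ of radius $R$ in the plane $\{x_3=z\}$, Stokes' theorem applied to $\mathbf A$ yields
\begin{equation*}
q\int_{D_R}H_3\,dS \;=\; q\oint_{C_R}\mathbf A\cdot d\mathbf x \;=\; 2\pi\ell \;-\; \oint_{C_R}\mathbf K\cdot d\mathbf x .
\end{equation*}
Invoking Fubini on $\int|\mathbf K|^2u^2\,dx<\infty$ and $u\to\bar s$, one selects a sequence $R_n\to\infty$ along which the last term vanishes; together with $\nabla\cdot\mathbf H=0$ (which makes the flux constant in $z$, using a pillbox argument with lateral flux tending to $0$) this gives a quantized vertical flux $\Phi=2\pi\ell/q\neq 0$ through every horizontal plane. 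To close the argument one shows that no divergence-free $\mathbf H\in L^2(\mathbb R^3)$ can carry a nonzero constant flux through all of them: each slice supports a planar Ginzburg--Landau-type vortex with winding $\ell$ and asymptotic value $\bar s$, whose 2D energy admits a positive lower bound $c_0=c_0(\ell,\bar s,q)>0$ via an Abrikosov/Bogomolny-type inequality; integrating over $z\in\mathbb R$ forces $\mathcal E=\infty$, contradicting finite energy. The main obstacle is precisely this last step: making the slice-wise lower bound rigorous and uniform in $z$ requires enough elliptic control on the solution of (\ref{z1})--(\ref{z4}) to upgrade the asymptotic conditions $u\to\bar s$ and $\mathbf K\to 0$ to hold in each 2D plane $\{x_3=z\}$, not merely a.e., which is the technical heart of the proof.
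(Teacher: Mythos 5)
Your argument is not complete, and the gap sits exactly where you say it does. The final step --- that a quantized, nonvanishing flux $2\pi\ell/q$ through every horizontal plane is incompatible with finite energy --- is the entire content of the proposition, and you leave it as a sketch. Worse, the intermediate claim you lean on first, namely that no divergence-free $\mathbf{H}\in L^{2}(\mathbb{R}^{3})$ can carry a nonzero constant flux through all planes $\{x_{3}=z\}$, is false as stated: a flux tube whose cross-section spreads linearly as $|x_{3}|\rightarrow\infty$ (take $\mathbf{A}=a(r,x_{3})\nabla\theta$ with $a\approx \frac{\ell}{q}f(r/|x_{3}|)$, $f(0)=0$, $f(\infty)=1$) is divergence free, has $\nabla\times\mathbf{A}\in L^{2}$, and carries constant flux $2\pi\ell/q$ through every horizontal plane. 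So the contradiction cannot be extracted from $\mathbf{H}$ alone. Your fallback --- a slice-wise Ginzburg--Landau lower bound $c_{0}(\ell,\bar{s},q)>0$ integrated over $z$ --- would work if proved, but under the hypotheses here ($W\geq0$, $W(0)>0$, $W(\bar{s})=0$ and nothing more, e.g.\ no nondegeneracy or coercivity of $W$ near $\bar{s}$) even the two-dimensional lower bound is not off-the-shelf, and you acknowledge that upgrading the a.e.\ asymptotics to genuine per-slice boundary behaviour is unresolved. As it stands the proof does not close. (The preliminary reduction to $\phi\equiv0$ via the gauge shift $\chi=ct$ is fine but also unnecessary.)

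For comparison, the paper's proof is much more elementary and never touches the flux. From $\int W(u)<\infty$ it gets $u\rightarrow\bar{s}\neq0$, hence from the term $\int\left\vert \ell\nabla\theta-q\mathbf{A}\right\vert ^{2}u^{2}$ in the energy (\ref{enna}) it gets $\int\left\vert \ell\nabla\theta-q\mathbf{A}\right\vert ^{2}<\infty$ outside a compact set. On the annular cylinder $\{\delta_{1}<r<\delta_{2},\ |x_{3}|>M\}$ this forces $q\mathbf{A}$ to approximate $\ell\nabla\theta$, so $\left\vert \mathbf{A}\right\vert \geq\mu>0$ on a set of infinite measure; hence $\mathbf{A}\notin L^{6}(\mathbb{R}^{3})$, hence $\int\left\vert \nabla\mathbf{A}\right\vert ^{2}=\infty$ by the Sobolev inequality, contradicting finite energy. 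In other words, the obstruction is carried by $\mathbf{A}$ itself, not by $\mathbf{H}$, which is precisely why your flux-of-$\mathbf{H}$ route stalls. If you want to salvage your approach you must bring the matter-field slice energy into the lower bound, which is substantially harder than the direct $L^{6}$ argument.
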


\begin{proof}
We shall prove that any configuration of the type
\begin{equation}
(u\left(  x\right)  e^{i\left(  \ell\theta-\omega t\right)  },\phi
,\mathbf{A})\text{, }\ell\in\mathbb{Z},\ell\neq0\label{one}%
\end{equation}
has infinite energy $\mathcal{E}$ (\ref{enna}). Arguing by contradiction
assume that (\ref{one}) has finite energy. Since $W$ satisfies (\ref{1w})
(\ref{2w}) and \ref{3ww}), the finiteness of the energy implies
\[
\int W(u)<\infty
\]
so that
\[
u(\infty)=\bar{s}.
\]
So$,$ using again the finiteness of $\mathcal{E},$ we get
\[
\int\left\vert \ell\nabla\theta-q\mathbf{A}\right\vert ^{2}<\infty.
\]
So, if we take $0<\delta_{1}<\delta_{2},$ for all $\varepsilon>0$ there exists
$M>0$ s.t. for all $x=\left(  x_{1},x_{2,}x_{3}\right)  $ with
\[
\delta_{1}<r<\delta_{2},\text{ }\left\vert x_{3}\right\vert >M,\text{ }%
r=\sqrt{x_{1}^{2}+x_{2}^{2}\text{ }},
\]
we have
\[
\left\vert \ell\nabla\theta-q\mathbf{A}\right\vert <\varepsilon.
\]
So, for such $\left(  x_{1},x_{2,}x_{3}\right)  ,$ we get
\[
\frac{\left\vert \ell\right\vert }{\delta_{2}}-\varepsilon<\frac{\left\vert
\ell\right\vert }{r}-\varepsilon=\left\vert \ell\nabla\theta\right\vert
-\varepsilon<\left\vert \mathbf{A}\left(  x\right)  \right\vert .
\]
Then, if $\varepsilon$ is small enough, we get
\[
0<\mu=\frac{\left\vert \ell\right\vert }{\delta_{2}}-\varepsilon<\left\vert
\mathbf{A}\left(  x\right)  \right\vert .
\]
So
\[
\infty=\int_{\delta_{1}}^{\delta_{2}}rdr\int_{\text{ }\left\vert
x_{3}\right\vert >M}\mu^{6}dx_{3}\leq\int\left\vert \mathbf{A}\left(
x\right)  \right\vert ^{6}dx.
\]
Therefore $\mathbf{A}\notin L^{6}(\mathbb{R}^{3})$ and then, by Sobolev
inequality,
\[%
%TCIMACRO{\dint }%
%BeginExpansion
{\displaystyle\int}
%EndExpansion
\left\vert \nabla\mathbf{A}\right\vert ^{2}dx=\infty.
\]
This contradicts the finiteness of the energy $\mathcal{E}$.
\end{proof}

\subsection{The main existence result}

Let $W$ satisfy the following assumptions:

\begin{itemize}
\item W1) $\forall s\geq0:\ W(s)\geq0$

\item W2) $W$ is $C^{2}$ with $W(0)=W^{\prime}(0)=0$, $W^{\prime\prime
}(0)=m^{2}>0,$

\item W3) $\underset{s>0}{\inf}\left(  \frac{W(s)}{\frac{m^{2}}{2}s^{2}%
}\right)  <1$
\end{itemize}

We shall set
\[
W(s)=\frac{m^{2}}{2}s^{2}+N(s).
\]
Clearly assumption W3) is equivalent to require that there exists $s_{0}$ $>0
$ such that
\begin{equation}
N(s_{0})<0.\label{pec}%
\end{equation}

By rescaling time and space we can assume without loss of generality
\[
m^{2}=1.
\]
Moreover, for technical reasons it is useful to assume that $W$ is defined for
all $s\in\mathbb{R}$ just setting
\[
W(s)=W(-s)\ \ for\ s<0.
\]

Now we can state the main existence result.

\begin{theorem}
\label{main}Assume that the function $W$ satisfies assumptions W1),W2),W3).
Then for all $\ell\in\mathbb{Z}$ there exists $\bar{q}>0$ such that for every
$0\leq q\leq\bar{q}$ \ the equations (\ref{z1}), (\ref{z3}), (\ref{z4}) admit
a finite energy solution in the sense of distributions $(u,\omega
,\phi\mathbf{,A}),$ $u\neq0,$ $\omega>0.$ The maps $u,$ $\phi$ depend only on
the variables $r=\sqrt{x_{1}^{2}+x_{2}^{2}}$ and $x_{3}$
\[
u=u(r,x_{3}),\text{ }\phi=\phi(r,x_{3}).
\]
and the magnetic potential $\mathbf{A}$ has the following form%
\begin{equation}
\mathbf{A}=a(r,x_{3})\nabla\theta=a(r,x_{3})\left(  \frac{x_{2}}{r^{2}%
}\mathbf{e}_{1}-\frac{x_{1}}{r^{2}}\mathbf{e}_{2}\right) \label{a}%
\end{equation}
If $q=0$, then $\phi=0,\mathbf{A}=0.$ If $q>0$ then $\phi$ $\neq0.$ Moreover
$\mathbf{A}$ $\neq0$ if and only if $\ell\neq0$
\end{theorem}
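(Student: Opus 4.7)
The plan is variational. I look for critical points of the reduced energy $\mathcal{E}$ of (\ref{enna}) subject to a prescribed hylenic charge $\sigma=\int(\omega-q\phi)u^{2}$, so that $\omega$ appears as the Lagrange multiplier conjugate to $\sigma$. The cylindrical ansatz $u=u(r,x_{3})$, $\phi=\phi(r,x_{3})$, $\mathbf{A}=a(r,x_{3})\nabla\theta$ is imposed from the start; it is compatible with the system (\ref{z1})--(\ref{z4}) because these equations are invariant under rotations about the axis $\Sigma$, and it has the crucial advantage of reducing everything to scalar functions on the half-plane $\{r>0\}\times\mathbb{R}$ with weight $r\,dr\,dx_{3}$, which quotients out a non-compact symmetry. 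For fixed $(u,a)$ with $u\not\equiv 0$ and fixed $\sigma$, the energy is strictly convex and coercive in $\phi$ (the term $q^{2}u^{2}\phi^{2}$ contributes with the right sign) and quadratic in $\omega$; minimizing over these two variables under the charge constraint produces $\phi_{u,a}$ via (\ref{z3}) and $\omega_{u,a}$ via a scalar identity, and substituting back yields a reduced functional $J_{q}(u,a)$ whose critical points on $\{u\not\equiv 0\}$ are in bijective correspondence with solutions of the full system.

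The engine of the argument, in the spirit of \cite{bel} and \cite{befo}, is the hylomorphic inequality
\[
\Lambda_{q}(\sigma)\ :=\ \inf_{u\not\equiv 0}\frac{J_{q}(u,a)}{|\sigma|}\ <\ m\ =\ 1.
\]
I would verify it with a trial configuration consisting of $u\approx s_{0}$ on a thick cylindrical shell $\{R_{1}<r<R_{2},\ |x_{3}|<L\}$ far from $\Sigma$, cut off smoothly outside, with $\omega$ close to $\sqrt{m^{2}+2N(s_{0})/s_{0}^{2}}$ and $(\phi,a)$ taken to be the corresponding inner minimizers. Taking $R_{1}$ large keeps the topological term $|\ell\nabla\theta|^{2}u^{2}$ small, while taking $q$ small makes the Maxwell contribution $O(q^{2})$; assumption W3, namely $N(s_{0})<0$, is precisely what pushes the ratio strictly below $m$. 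Armed with this bound, a minimizing sequence for $J_{q}$ is handled by concentration--compactness adapted to the only residual non-compact action, translations along $x_{3}$: vanishing is ruled out because it would force $\Lambda_{q}(\sigma)\ge m$, and dichotomy by strict subadditivity of $\sigma\mapsto\Lambda_{q}(\sigma)$. After a suitable $x_{3}$-translation the sequence converges strongly in $L^{p}$ to a nontrivial minimizer $(u,a)$, and $(\phi,\omega)$ are then recovered from the inner problem.

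The hardest step is the exclusion of dichotomy in the presence of the electromagnetic coupling: the fields $\phi$ and $a$ are nonlocal functionals of $u$, so splitting the matter density into two widely separated pieces changes the field energy in a way that is not a priori negligible. Controlling the interaction term uses the exponential decay of the Green's function of $-\Delta+q^{2}u^{2}$ and the decay of the Coulomb kernel convolved against localized sources, and this is precisely where the threshold $\bar q$ enters: for $q\le\bar q$ the coupled problem becomes a controlled perturbation of the purely matter-field vortex problem. Once a nontrivial minimizer has been produced, the remaining conclusions follow by inspection: $\omega>0$ upon choosing $\sigma>0$; when $q>0$ the identity $-\Delta\phi=q(\omega-q\phi)u^{2}$ and $u\not\equiv 0$ force $\phi\not\equiv 0$; when $q>0$ and $\ell\neq 0$, substituting $\mathbf{A}\equiv 0$ into (\ref{z4}) would give $0=q\ell\nabla\theta\,u^{2}$, a contradiction, so $\mathbf{A}\not\equiv 0$; when $\ell=0$ the ansatz is purely electrostatic and $\mathbf{A}\equiv 0$ is admissible; and when $q=0$ the Maxwell sector decouples, leaving $(\phi,\mathbf{A})=(0,0)$.
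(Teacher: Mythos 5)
Your overall architecture matches the paper's: reduce out $\phi$ via the equation $-\Delta\phi+q^{2}u^{2}\phi=q\omega u^{2}$, work at fixed hylenic charge so that $\omega$ becomes $\sigma/K_{q}(u)$, impose cylindrical symmetry with $\mathbf{A}=a\nabla\theta$, and drive the existence from the hylomorphic inequality $\Lambda_{\sigma,q}<1$ obtained from a trial function concentrated on a torus far from the axis (this is exactly Lemma \ref{au}, and your use of W3 there is the right one). There are, however, two places where your plan diverges in ways that matter.

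First, the compactness step. You propose full concentration--compactness at fixed charge: vanishing excluded by the hylomorphic bound, dichotomy excluded by strict subadditivity of $\sigma\mapsto\Lambda_{q}(\sigma)$, with the nonlocal interaction controlled by ``exponential decay of the Green's function of $-\Delta+q^{2}u^{2}$ and the decay of the Coulomb kernel.'' This is the weak point: the Coulomb-type potential $\Phi_{u}$ decays only like $|x|^{-1}$, so the cross term between two separating bumps decays polynomially, not exponentially, and strict subadditivity for the functional $I(u,\mathbf{A})+\sigma^{2}/(2K_{q}(u))$ with the nonlocal $K_{q}$ is not established by what you wrote. The paper sidesteps this entirely: it takes a minimizing Palais--Smale sequence, proves an $L^{3}$ lower bound (Proposition \ref{on}), translates along $x_{3}$ and invokes the Esteban--Lions compactness lemma for cylindrically symmetric functions to get a \emph{nontrivial weak limit}, and then passes to the limit in the Euler--Lagrange equations, accepting that the limit is a critical point of $E_{\sigma_{0},q}$ for a possibly \emph{different} charge $\sigma_{0}=\omega_{0}K_{q}(u_{0})>0$ (Proposition \ref{sol}). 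Since the theorem only asserts existence of a solution, not a minimizer at prescribed charge, dichotomy never has to be excluded. You should either adopt that route or supply a genuine proof of subadditivity with the polynomially decaying interaction.

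Second, you assert that critical points of the reduced functional ``are in bijective correspondence with solutions of the full system,'' but this glosses over the singular weight $|\ell\nabla\theta|^{2}=\ell^{2}/r^{2}$. The natural space for $u$ is the weighted space $\hat{H}^{1}$ with norm $\int|\nabla u|^{2}+(1+\ell^{2}/r^{2})u^{2}$, and for $\ell\neq0$ a generic test function in $C_{0}^{\infty}(\mathbb{R}^{3})$ does not belong to it (the integral $\int v^{2}/r^{2}$ can diverge near $\Sigma$). A critical point therefore satisfies the equation a priori only against test functions vanishing near the axis, and one must prove separately that the singularity on $\Sigma$ is removable so that $(u,\phi,\mathbf{A})$ solves (\ref{z1}) in the sense of distributions on all of $\mathbb{R}^{3}$. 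This is the content of Theorem \ref{finale} (via the cutoffs $\chi_{n}$ and monotone convergence), and it is missing from your proposal; without it the conclusion ``finite energy solution in the sense of distributions'' is not justified. The endgame claims ($\phi=\mathbf{A}=0$ for $q=0$, $\phi\neq0$ for $q>0$, $\mathbf{A}\neq0$ iff $\ell\neq0$) you argue exactly as the paper does.
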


\begin{remark}
When there is no coupling with the electromagnetic field, i.e. $q=0,$
equations (\ref{z1}), (\ref{z3}), (\ref{z4}) reduce to find vortices to the
nonlinear Klein-Gordon equation and an analogous result has been obtained in
\cite{bad}.
\end{remark}

\begin{remark}
When $\ell=0$ and $q>0$ the last part of Theorem \ref{main} states the
existence of electrostatic solutions, namely finite energy solutions with
$u\neq0,$ $\phi$ $\neq0$ and $\mathbf{A}$ $=0.$This result generalizes a
recent theorem (see \cite{befo}), where the existence of electrostatic
solutions has been stated under assumptions stronger than W1),W2),W3).
\end{remark}

\begin{remark}
\label{v} By the presence of the term $\nabla\theta$ equations (\ref{z1}),
(\ref{z4}) are not invariant under the $O(3)$ group action as it happens for
the equations (\ref{e1}), (\ref{e2}), (\ref{e3}) we started from. Indeed there
is a breaking of radial symmetry and the solutions $u$, $\phi,$ $\mathbf{A}$
in theorem \ref{main} have only an $S^{1}$ (cylindrical) symmetry.
\end{remark}

\begin{proposition}
\label{ang}Let $(u,\omega,\phi\mathbf{,A})$ be a non trivial, finite energy
solution of equations (\ref{z1}), (\ref{z3}), (\ref{z4}) as in theorem
\ref{main}. Then the angular momentum $\mathbf{M}_{m}$ (see (\ref{am})) has
the following expression
\begin{equation}
\mathbf{M}_{m}=-\left[  \int\left(  \ell-qa\right)  \left(  \omega
-q\phi\right)  u^{2}dx\right]  \mathbf{e}_{3}\label{vc}%
\end{equation}
and, if $\ell\neq0,$ it does not vanish.
\end{proposition}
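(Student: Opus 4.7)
The plan is to evaluate the two pieces of $\mathbf{M}_m$ in (\ref{am}) using the particular form of the solution supplied by Theorem \ref{main}, and then extract the sign.

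First I would dispose of the $\partial_t u$ term. Because the solution is stationary, $u=u(r,x_3)$ is time-independent, so $\partial_t u\equiv 0$ and the first integrand in (\ref{am}) vanishes identically. It therefore suffices to evaluate $\int \mathbf{x}\times\mathbf{K}\,\Omega\,u^2\,dx$. By (\ref{4}) together with the ansatz $\mathbf{A}=a(r,x_3)\nabla\theta$ from (\ref{a}), we have $\mathbf{K}=(\ell-qa)\nabla\theta$, and a direct computation from the paper's formula for $\nabla\theta$ gives
\[
\mathbf{x}\times\nabla\theta=\frac{x_1x_3}{r^{2}}\mathbf{e}_1+\frac{x_2x_3}{r^{2}}\mathbf{e}_2-\mathbf{e}_3 .
\]
Since $\ell-qa$, $\Omega=\omega-q\phi$ and $u^2$ depend only on $r$ and $x_3$, the factors multiplying $\mathbf{e}_1$ and $\mathbf{e}_2$ are odd in $x_1$, respectively $x_2$, after multiplying by the rest of the integrand. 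Fubini then kills those two components, and what remains is precisely the expression (\ref{vc}).

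For the non-vanishing claim I would argue by the maximum principle applied twice. The potential $\phi$ satisfies the linear elliptic equation $-\Delta\phi+q^{2}u^{2}\phi=q\omega u^{2}$ that comes from (\ref{z3}), with $\phi\to 0$ at infinity and right-hand side of fixed sign $\operatorname{sgn}(\omega)>0$. A standard maximum-principle argument yields $0\le q\phi<\omega$ everywhere, hence $\Omega=\omega-q\phi>0$ pointwise. Next, inserting $\mathbf{A}=a\nabla\theta$ into (\ref{z4}) and using $\nabla\times\nabla\theta=0$ away from $\Sigma$, one reduces (\ref{z4}) to a scalar elliptic equation for $a$ of the schematic form $\mathcal{L}a+q^{2}u^{2}a=q\ell u^{2}$, where $\mathcal{L}$ is the positive cylindrical operator arising from $\nabla\times(\nabla\times(a\nabla\theta))$. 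The same maximum-principle argument now gives $0\le qa/\ell\le 1$, so $\ell(\ell-qa)\ge 0$ everywhere, with strict inequality on the open set where $u\neq 0$. Therefore the integrand $(\ell-qa)(\omega-q\phi)u^{2}$ has the sign of $\ell$ on a set of positive measure and never the opposite sign, so $\mathbf{M}_m\neq 0$ whenever $\ell\neq 0$.

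The routine part is the algebraic computation of $\mathbf{x}\times\nabla\theta$ and the parity cancellation, which is forced by the $S^{1}$ symmetry noted in Remark \ref{v}. The delicate step is casting (\ref{z4}) as a scalar elliptic problem for $a$ in a form to which the maximum principle applies cleanly; this is the step I expect to require the most care, because the curl-curl operator must be handled in cylindrical coordinates and the apparent singularity of $\nabla\theta$ along the axis $\Sigma$ must be shown not to obstruct the comparison argument that yields $0\le qa/\ell\le 1$.
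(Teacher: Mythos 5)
Your proposal is correct and follows essentially the same route as the paper: the $\partial_t u$ term drops by stationarity, the $\mathbf{e}_1$ and $\mathbf{e}_2$ components of $\mathbf{x}\times\nabla\theta$ cancel by parity, and the sign of the remaining integrand is controlled by two maximum-principle comparisons, $\phi<\omega/q$ from the reduced equation for $\phi$ and $a<\ell/q$ from the scalar equation obtained via the identity $\nabla\times(\nabla\times a\nabla\theta)=b\,\nabla\theta$. The delicate step you flag is exactly the one the paper handles by that identity.
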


\begin{proof}
By (\ref{4}) and (\ref{a}), we have that%
\begin{equation}
\mathbf{K}=\nabla S-q\mathbf{A}=\ell\nabla\theta-qa\nabla\theta=\left(
\ell-qa\right)  \nabla\theta.\label{new}%
\end{equation}

Then, using (\ref{3}) and (\ref{new}), we have that%
\[
\mathbf{M}_{m}=\int\mathbf{x}\times\mathbf{K}\Omega u^{2}dx=\int
\mathbf{x}\times\nabla\theta\left(  \ell-qa\right)  \left(  \omega
-q\phi\right)  u^{2}dx.
\]
Let us compute%
\begin{align*}
\mathbf{x}\times\nabla\theta & =(x_{1}\mathbf{e}_{1}+x_{2}\mathbf{e}_{2}%
+x_{3}\mathbf{e}_{3})\times\left(  \frac{x_{2}}{r^{2}}\mathbf{e}_{1}%
-\frac{x_{1}}{r^{2}}\mathbf{e}_{2}\right) \\
& =-\frac{x_{1}^{2}}{r^{2}}\mathbf{e}_{3}-\frac{x_{2}^{2}}{r^{2}}%
\mathbf{e}_{3}+\frac{x_{2}x_{3}}{r^{2}}\mathbf{e}_{2}+\frac{x_{1}x_{3}}{r^{2}%
}\mathbf{e}_{1}\\
& =\frac{x_{1}x_{3}}{r^{2}}\mathbf{e}_{1}+\frac{x_{2}x_{3}}{r^{2}}%
\mathbf{e}_{2}-\mathbf{e}_{3}.
\end{align*}
Then%
\begin{equation}
\mathbf{M}_{m}\left(  \psi\right)  =\int\left(  \frac{x_{1}x_{3}}{r^{2}%
}\mathbf{e}_{1}+\frac{x_{2}x_{3}}{r^{2}}\mathbf{e}_{2}-\mathbf{e}_{3}\right)
\left(  \ell-qa\right)  \left(  \omega-q\phi\right)  u^{2}dx.\label{very}%
\end{equation}

On the other hand, since the functions $x_{1}x_{3}\frac{\left(  \ell
-qa\right)  \left(  \omega-q\phi\right)  u^{2}}{r^{2}}$ and $x_{2}x_{3}%
\frac{\left(  \ell-qa\right)  \left(  \omega-q\phi\right)  u^{2}}{r^{2}}$ are
odd in $x_{1}$ and $x_{2}$ respectively, we have
\begin{equation}
\int x_{1}x_{3}\frac{\left(  \ell-qa\right)  \left(  \omega-q\phi\right)
u^{2}}{r^{2}}\mathbf{=}\int x_{2}x_{3}\frac{\left(  \ell-qa\right)  \left(
\omega-q\phi\right)  u^{2}}{r^{2}}=0.\label{zero}%
\end{equation}

Then (\ref{vc}) follows from (\ref{very}) and (\ref{zero}). Now let $\ell
\neq0.$ In order to see that $\mathbf{M}_{m}\neq0,$ it is sufficient to prove
that
\begin{equation}
\left(  \ell-qa\right)  \left(  \omega-q\phi\right)  >0.\label{pat}%
\end{equation}
or that
\begin{equation}
\left(  \ell-qa\right)  \left(  \omega-q\phi\right)  <0.\label{pit}%
\end{equation}

Clearly, since $\ell,\omega\neq0$ (\ref{pat}) or (\ref{pit}) are satisfied
when $q=0.$ Now let $q>0.$ Assume that $\ell>0$ and we show that (\ref{pat})
is verified. The case $\ell<0$ can be treated analogously.

By (\ref{h3}) we have that%
\[
-\Delta\phi+q^{2}u^{2}\phi=q\omega u^{2}.
\]
Since $\omega/q$ is a supersolution, by the maximum principle, $\phi<\omega/q$
and hence $\omega-q\phi>0.$ So, in order to prove (\ref{pat}), it remains to
show that%
\begin{equation}
\ell-qa>0\label{rem}%
\end{equation}

By (\ref{h4}) we have that%
\begin{equation}
\nabla\times\left(  \nabla\times\mathbf{A}\right)  =q\left(  \ell\nabla
\theta-q\mathbf{A}\right)  u^{2}.\;\label{beco}%
\end{equation}

Now a straight computation shows that,
\begin{equation}
\nabla\times(\nabla\times a\nabla\theta)=b\ \nabla\theta\label{bricco}%
\end{equation}

where
\[
b=-\frac{\partial^{2}a}{\partial r^{2}}+\frac{1}{2}\frac{\partial a}{\partial
r}-\frac{\partial^{2}a}{\partial x_{3}^{2}}.
\]
Then, setting $\mathbf{A}=a\nabla\theta$ in (\ref{beco}) and using
(\ref{bricco}), we have%
\[
-\frac{\partial^{2}a}{\partial r^{2}}+\frac{1}{2}\frac{\partial a}{\partial
r}-\frac{\partial^{2}a}{\partial x_{3}^{2}}=q\left(  \ell-qa\right)  u^{2}.\;
\]
Since $\ell/q$ is a supersolution, by the maximum principle, $a<\ell/q$ and
hence (\ref{rem}) is proved.
\end{proof}

\begin{remark}
Observe that in the interpretation given in \ref{hc}, the quantity $\Omega
u^{2}=\left(  \omega-q\phi\right)  u^{2}$ represents the density of
particles$;\ $then by (\ref{vc}) $-\left(  \ell-qa\right)  \mathbf{e}_{3}$
represents the angular momentum of each particle. So, since $\ell$ is an
integer, we see that the classical model described by this Abelian gauge
theory presents a quantization phenomenon. Notice also that, for $q=0,$ the
angular momentum of each particle takes only integer values.
\end{remark}

Finally let us observe that under general assumptions on $W,$ magnetostatic
solutions (i.e. with $\omega=\phi=0)$ do not exist$.$ In fact the following
proposition holds:

\begin{proposition}
\label{A}Assume that $W$ satisfies the assumptions $W(0)=0$ and $W^{\prime
}(s)s\geq0.$ Then (\ref{z1}), (\ref{z3}), (\ref{z4}) has no solutions with
$\omega=\phi=0$.
\end{proposition}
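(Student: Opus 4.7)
The plan is to derive a virial-type identity by testing equation (\ref{z1}) against $u$. After substituting $\omega=\phi=0$ into the system, equation (\ref{z3}) becomes trivially $0=0$, equation (\ref{z4}) reduces to
\begin{equation*}
\nabla\times(\nabla\times\mathbf{A})=q(\ell\nabla\theta-q\mathbf{A})u^{2},
\end{equation*}
and equation (\ref{z1}) reduces to
\begin{equation*}
-\Delta u + |\ell \nabla\theta - q\mathbf{A}|^{2}\,u + W'(u) = 0.
\end{equation*}

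Next I would multiply this last equation by $u$ and integrate over $\mathbb{R}^{3}$. The finite-energy hypothesis supplies enough decay for the Laplacian term to integrate by parts without boundary contributions, yielding
\begin{equation*}
\int |\nabla u|^{2}\,dx + \int |\ell \nabla\theta - q\mathbf{A}|^{2}\,u^{2}\,dx + \int W'(u)\,u\,dx = 0.
\end{equation*}
The first two integrands are pointwise non-negative by inspection, and the third is non-negative exactly by the standing hypothesis $W'(s)s\ge 0$; hence all three integrals vanish separately. In particular $\int|\nabla u|^{2}=0$ forces $u$ to be constant, and then the finiteness of $\int W(u)\,dx$ together with $W(0)=0$ (and the fact that $W'(s)s\ge 0$ makes $W$ non-decreasing on $[0,\infty)$) leaves $u\equiv 0$ as the only possibility in the natural finite-energy function space; feeding $u\equiv 0$ back into the reduced (\ref{z4}) makes the system trivial, contradicting the assumption that we had a genuine solution.

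The step that I expect to require the most care is the very last one, ruling out a non-zero constant $u\equiv c$ rigorously. A non-trivial constant $c>0$ would make $\int W(u)\,dx = W(c)\,\mathrm{vol}(\mathbb{R}^{3})$ infinite unless $W(c)=0$, and even in that degenerate case the vanishing of the middle integral would force $q\mathbf{A}=\ell\nabla\theta$ almost everywhere off the axis $\Sigma$; this is inconsistent with the finite-energy condition $\nabla\times\mathbf{A}\in L^{2}(\mathbb{R}^{3})$ when $\ell\neq 0$, since the distributional curl of $\nabla\theta$ concentrates on $\Sigma$. A short case analysis on the subcases $\ell=0$ and $q=0$ then eliminates the remaining degenerate configurations and closes the argument.
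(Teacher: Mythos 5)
Your proposal is correct and follows essentially the same route as the paper: substitute $\omega=\phi=0$, multiply the reduced equation (\ref{z1}) by $u$, integrate, and observe that all three resulting terms are non-negative (the last by $W'(s)s\ge 0$), forcing $u=0$. Your extra care in ruling out a non-zero constant $u$ is harmless but not really needed, since $u$ lies in the finite-energy space $\hat H^{1}$, whose norm controls $\int u^{2}$, so no non-zero constant is admissible in the first place.
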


\begin{proof}
Set $\omega=0,$ $\phi=0$ in (\ref{z1}) and we get
\[
-\Delta u+\left\vert \ell\nabla\theta-q\mathbf{A}\right\vert ^{2}%
\,u+W^{\prime}(u)=0.
\]
Then, multiplying by $u$ and integrating, we get
\[
\int\left\vert \nabla u\right\vert ^{2}+\left\vert \ell\nabla\theta
-q\mathbf{A}\right\vert ^{2}\,u^{2}+W^{\prime}(u)u=0.
\]
So, since $W^{\prime}(s)s\geq0,$ we get $u=0.$
\end{proof}

\section{The existence proof}

\subsection{The functional setting}

Let $H^{1}$ denote the usual Sobolev space with norm
\[
\left\Vert u\right\Vert _{H^{1}}^{2}=\int(\left\vert \nabla u\right\vert
^{2}+u^{2})dx;
\]
moreover we need to use also the weighted Sobolev space $\hat{H}^{1}$ whose
norm is given by
\[
\left\Vert u\right\Vert _{\hat{H}^{1}}^{2}=\int\left[  \left\vert \nabla
u\right\vert ^{2}+\left(  1+\frac{\ell^{2}}{r^{2}}\right)  u^{2}\right]
dx,\text{ }\ell\in\mathbb{Z}%
\]
where $r=\sqrt{x_{1}^{2}+x_{2}^{2}}.$ Clearly $\hat{H}^{1}=H^{1}$ when
$\ell=0.$

We set $\mathcal{D}=C_{0}^{\infty}(\mathbb{R}^{3})$ and we denote by
$\mathcal{D}^{1,2}$ the completion of $\mathcal{D}$ with respect to the inner
product
\begin{equation}
\left(  v\mid w\right)  _{\mathcal{D}^{1,2}}=\int\nabla v\cdot\nabla
wdx.\label{inner}%
\end{equation}
Here and in the following the dot $\cdot$ will denote the Euclidean inner
product in $\mathbb{R}^{3}.$

We set
\[
H=\hat{H}^{1}\times\mathcal{D}^{1,2}\times\left(  \mathcal{D}^{1,2}\right)
^{3}%
\]%
\begin{equation}
\left\Vert \left(  u,\phi,\mathbf{A}\right)  \right\Vert _{H}^{2}%
=\int\left\vert \nabla u\right\vert ^{2}+\left(  1+\frac{\ell^{2}}{r^{2}%
}\right)  u^{2}+\left\vert \nabla\phi\right\vert ^{2}+\left\vert
\nabla\mathbf{A}\right\vert ^{2}.\label{usual}%
\end{equation}
We shall denote by $u=u(r,x_{3})$ the real maps having cylindrical symmetry,
i.e. those real maps in $\mathbb{R}^{3}$ which depend only from $r=\sqrt
{x_{1}^{2}+x_{2}^{2}}$ and $x_{3}.$ We set%
\begin{equation}
\mathcal{D}_{r}=\left\{  u\in\mathcal{D}:u=u(r,x_{3})\right\} \label{cf}%
\end{equation}
and we shall denote by $\mathcal{D}_{r}^{1,2}$ (respectively $\hat{H}_{r}%
^{1})$ the closure of $\mathcal{D}_{r}$ in the $\mathcal{D}^{1,2}$
(respectively $\hat{H}^{1})$ norm.

Now we consider the functional
\begin{align}
J(u,\phi\mathbf{,A})  & =\frac{1}{2}\int\left\vert \nabla u\right\vert
^{2}-\left\vert \nabla\phi\right\vert ^{2}+\left\vert \nabla\times
\mathbf{A}\right\vert ^{2}\nonumber\\
& +\frac{1}{2}\int\left[  \left\vert \ell\nabla\theta-q\mathbf{A}\right\vert
^{2}-\left(  \omega-q\phi\right)  ^{2}\right]  \,u^{2}+\int
W(u)\label{functional}%
\end{align}

\noindent where $\left(  u,\phi,\mathbf{A}\right)  \in H.$ The equations
(\ref{z1}),\ (\ref{z3}) and (\ref{z4}) are the Euler-Lagrange equations of the
functional $J$. Standard computations show that the following lemma holds:

\ 

\begin{lemma}
\label{tecnico} Assume that $W$ satisfies W1), W2), W3) and%
\begin{equation}
\lim\sup\frac{\left.  W^{\prime}(s)\right.  }{s^{5}}<\infty\text{ for
}s\rightarrow\infty.\label{gratis}%
\end{equation}
Then the functional $J$ is $C^{1}$ on $H$.
\end{lemma}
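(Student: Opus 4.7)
The plan is to decompose $J = J_{\mathrm{quad}} + J_{\mathrm{mat}} + J_W$, where
$$
J_{\mathrm{quad}}(u,\phi,\mathbf{A}) = \tfrac{1}{2}\int\bigl(|\nabla u|^{2} - |\nabla\phi|^{2} + |\nabla\times\mathbf{A}|^{2}\bigr)\,dx,
$$
$$
J_{\mathrm{mat}}(u,\phi,\mathbf{A}) = \tfrac{1}{2}\int\bigl[|\ell\nabla\theta - q\mathbf{A}|^{2} - (\omega - q\phi)^{2}\bigr]u^{2}\,dx,
$$
and $J_W(u) = \int W(u)\,dx$, then verify $C^{1}$ regularity of each piece separately. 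The piece $J_{\mathrm{quad}}$ is a continuous quadratic form on $H$ (using $\|\nabla\times\mathbf{A}\|_{L^{2}} \leq \|\nabla\mathbf{A}\|_{L^{2}}$) and hence smooth. The main tools for the other two pieces will be the Sobolev embeddings $\mathcal{D}^{1,2}(\mathbb{R}^{3})\hookrightarrow L^{6}(\mathbb{R}^{3})$ and $\hat H^{1}\hookrightarrow L^{p}(\mathbb{R}^{3})$ for $2\leq p\leq 6$, together with the fact that the singular weight $|\nabla\theta|^{2}=1/r^{2}$ is, by construction, absorbed into the $\hat H^{1}$-norm through the term $\ell^{2}u^{2}/r^{2}$.

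For the coupling term, I would expand $|\ell\nabla\theta - q\mathbf{A}|^{2}u^{2} = \ell^{2}u^{2}/r^{2} - 2q\ell\,(\nabla\theta\cdot\mathbf{A})u^{2} + q^{2}|\mathbf{A}|^{2}u^{2}$. The first summand is controlled by $\|u\|_{\hat H^{1}}^{2}$; the cross term by Cauchy--Schwarz and H\"older with exponents $(2,6,3)$, giving the bound $\bigl(\int u^{2}/r^{2}\bigr)^{1/2}\|\mathbf{A}\|_{L^{6}}\|u\|_{L^{3}}$; and the third by $\|\mathbf{A}\|_{L^{6}}^{2}\|u\|_{L^{3}}^{2}$. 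The mass-type piece $\int(\omega - q\phi)^{2}u^{2}$ is handled analogously using $\phi\in L^{6}$. Thus $J_{\mathrm{mat}}$ is a finite sum of continuous multilinear forms of degree at most four on $H$, hence smooth, and its Fr\'echet derivative is continuous by standard multilinear calculus.

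The main obstacle is $J_W$. From W2) and (\ref{gratis}) one extracts global polynomial bounds $|W'(s)|\leq C(|s| + |s|^{5})$ and $|W(s)|\leq C(s^{2} + s^{6})$. Since $u\in\hat H^{1}\hookrightarrow L^{2}\cap L^{6}$, the Nemitskii map $u\mapsto W(u)$ lands in $L^{1}$, so $J_W$ is well-defined on $H$. A standard Taylor expansion combined with dominated convergence (with majorant $C(u^{2}+v^{2}+u^{6}+v^{6})$) shows that $J_W$ is G\^ateaux differentiable with $\langle DJ_W(u),v\rangle = \int W'(u)v\,dx$, and this duality is well-defined because $W'(u)\in L^{2}+L^{6/5}$ while $v\in L^{2}\cap L^{6}$. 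Continuity of $DJ_W$ in the $\hat H^{1}$-topology then reduces to continuity of the Nemitskii operator $u\mapsto W'(u)$ from $L^{2}\cap L^{6}$ into $L^{2}+L^{6/5}$, which is the standard Krasnoselski result under the above growth bound. The real technical point is organizing these H\"older estimates uniformly in the axial singularity of $\nabla\theta$, but since the singularity never actually appears outside the combination $\ell\nabla\theta\cdot(\cdots)u$, the $\hat H^{1}$-weight handles it; combining the three pieces yields $J\in C^{1}(H,\mathbb{R})$.
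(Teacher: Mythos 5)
Your argument is correct: the paper itself gives no proof of Lemma \ref{tecnico} (it only says ``standard computations show that the following lemma holds''), and your decomposition into a continuous quadratic form, a sum of continuous multilinear forms controlled via the $\hat H^{1}$-weight $\ell^{2}u^{2}/r^{2}$ and the embeddings $\mathcal{D}^{1,2}\hookrightarrow L^{6}$, $\hat H^{1}\hookrightarrow L^{2}\cap L^{6}$, plus the Krasnoselskii continuity of the Nemytskii operator $u\mapsto W'(u)$ under the growth bound $|W'(s)|\leq C(|s|+|s|^{5})$, is exactly the standard computation being alluded to. The only cosmetic caveat is that you read (\ref{gratis}) as a two-sided bound $|W'(s)|=O(s^{5})$, which is clearly the paper's intended meaning.
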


Without loss of generality we can assume that assumption (\ref{gratis}) is
satisfied. In fact, if (\ref{gratis}) is not satisfied, we can replace
$W^{\prime}(s)$ with $W^{\prime}(\bar{s})s$ for $s>\bar{s}$, where $\bar{s}>0$
is s.t. $W^{\prime}(\bar{s})>0.$ By using a maximum principle argument, it can
be seen that, with this nonlinearity, any solution $u$ takes values between
$0$ and $\bar{s}.$

By the above lemma it follows that the critical points $\left(  u,\phi
,\mathbf{A}\right)  \in H$ of $J$ (with $u\geq0$) are weak solutions of eq.
(\ref{z1}),\ (\ref{z3}) and (\ref{z4}), namely
\begin{equation}
\int\nabla u\cdot\nabla v+\left[  \left\vert \ell\nabla\theta-q\mathbf{A}%
\right\vert ^{2}-\left(  \omega-q\phi\right)  ^{2}\right]  \,uv+W^{\prime
}\left(  u\right)  v=0,\;\forall v\in\hat{H}^{1}\label{w1}%
\end{equation}%
\begin{equation}
\int\nabla\phi\cdot\nabla w+qu^{2}\left(  \omega-q\phi\right)  w=0,\;\forall
w\in\mathcal{D}^{1,2}\label{we}%
\end{equation}
\begin{equation}
\int\nabla\mathbf{A}\cdot\nabla\mathbf{V}-qu^{2}\left(  \ell\nabla
\theta-q\mathbf{A}\right)  \cdot\mathbf{V}=0,\;\forall\mathbf{V}%
\in(\mathcal{D}^{1,2})^{3}.\label{w3}%
\end{equation}

\subsection{Solutions in the sense of distributions}

Since $\mathcal{D}$ is not contained in $\hat{H}^{1},$ a solution $\left(
u,\phi,\mathbf{A}\right)  \in H$ of \ (\ref{w1}), (\ref{we}), (\ref{w3}) need
not to be a solution of (\ref{z1}), (\ref{z3}), (\ref{z4}) in the sense of
distributions on $\mathbb{R}^{3}$. In fact, since $\nabla\theta\left(
x\right)  $ is singular on $\Sigma,$ it might be that for some test function
$v\in\mathcal{D},$ when $\ell\neq0,$ the integral $\int\left\vert \ell
\nabla\theta-q\mathbf{A}\right\vert ^{2}\,uv$ diverges, unless $u$ is
sufficiently small as $x\rightarrow\Sigma.$

In this section we will show that this fact does not occur, namely the
singularity is removable in the sense of the following theorem:

\begin{theorem}
\label{finale}Let $(u_{0},\phi_{0}\mathbf{,A}_{0})$ $\in H,$ $u_{0}\geq0$ be a
solution of (\ref{w1}), (\ref{we}), (\ref{w3}) (i.e. a critical point of $J).$
Then $(u_{0},\phi_{0}\mathbf{,A}_{0})$ is a solution of equations
(\ref{z1}),\ (\ref{z3}) and (\ref{z4}) in the sense of distribution, namely
\begin{equation}
\int\nabla u_{0}\cdot\nabla v+\left[  \left\vert \ell\nabla\theta
-q\mathbf{A}_{0}\right\vert ^{2}-\left(  \omega-q\phi_{0}\right)  ^{2}\right]
\,u_{0}v+W^{\prime}\left(  u_{0}\right)  v=0,\;\forall v\in\mathcal{D}%
\label{im}%
\end{equation}%
\begin{equation}
\int\nabla\phi_{0}\cdot\nabla w-qu_{0}^{2}\left(  \omega-q\phi_{0}\right)
w=0,\;\forall w\in\mathcal{D}\label{dis}%
\end{equation}%
\begin{equation}
\int\nabla\mathbf{A}_{0}\cdot\nabla\mathbf{V}-qu_{0}^{2}\left(  \ell
\nabla\theta-q\mathbf{A}_{0}\right)  \cdot\mathbf{V}=0,\;\forall\mathbf{V}%
\in\mathcal{D}^{3}.\label{um}%
\end{equation}

\end{theorem}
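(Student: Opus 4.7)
\textbf{Proof plan for Theorem \ref{finale}.}
First, (\ref{dis}) and (\ref{um}) follow immediately from (\ref{we}) and (\ref{w3}) upon noting that $\mathcal{D}\subset\mathcal{D}^{1,2}$ by the very definition of $\mathcal{D}^{1,2}$ as the completion of $\mathcal{D}$. The only term whose integrability on $\mathcal{D}^3$-test functions is not automatic is $\int u_0^2\nabla\theta\cdot\mathbf{V}$ in (\ref{um}); it is bounded in modulus by $u_0\cdot(u_0/r)|\mathbf{V}|$, which is in $L^1$ by Cauchy--Schwarz because $u_0$ and $u_0/r$ both lie in $L^2$ (the latter by membership in $\hat{H}^1$). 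If $\ell=0$ then $\hat{H}^1=H^1\supset\mathcal{D}$ and (\ref{im}) is just the restriction of (\ref{w1}) to $\mathcal{D}$, so I may henceforth assume $\ell\ne 0$.

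The genuine task is (\ref{im}), and the obstruction is that a generic $v\in\mathcal{D}$ that does not vanish on $\Sigma$ satisfies $\int v^2/r^2\,dx=\infty$ and hence lies outside $\hat{H}^1$. My plan is to truncate $v$ away from $\Sigma$ so as to bring it into $\hat{H}^1$, apply (\ref{w1}), and then pass to the limit. Fix $\eta\in C^\infty([0,\infty))$ non-decreasing with $\eta\equiv 0$ on $[0,1]$ and $\eta\equiv 1$ on $[2,\infty)$, and set $\eta_n(r)=\eta(nr)$. For $v\in\mathcal{D}$, the product $v\eta_n$ is still in $\mathcal{D}$ and vanishes on the tube $\{r<1/n\}$, so $v\eta_n\in\hat{H}^1$. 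Substituting it into (\ref{w1}) yields an identity that I want to pass to the limit as $n\to\infty$.

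The kinetic term splits as $\int\nabla u_0\cdot\nabla(v\eta_n)=\int\eta_n\nabla u_0\cdot\nabla v+\int v\nabla u_0\cdot\nabla\eta_n$. The first summand converges to $\int\nabla u_0\cdot\nabla v$ by dominated convergence. For the second, a direct computation in cylindrical coordinates gives $\|\nabla\eta_n\|_{L^2(A_n\cap\operatorname{supp}v)}\le C_v$ independently of $n$, where $A_n=\{1/n\le r\le 2/n\}$, while $\|\nabla u_0\|_{L^2(A_n\cap\operatorname{supp}v)}\to 0$ by absolute continuity of the Lebesgue integral; Cauchy--Schwarz then forces $\int v\nabla u_0\cdot\nabla\eta_n\to 0$. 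Expanding $|\ell\nabla\theta-q\mathbf{A}_0|^2$, the four lower-order contributions $(\omega-q\phi_0)^2u_0v$, $(\mathbf{A}_0\cdot\nabla\theta)u_0v$, $|\mathbf{A}_0|^2u_0v$ and $W'(u_0)v$ all lie in $L^1(\mathbb{R}^3)$ by a combination of the Sobolev embedding $\mathcal{D}^{1,2}\hookrightarrow L^6$, the condition $u_0/r\in L^2$ (used for the $\mathbf{A}_0\cdot\nabla\theta$ term via H\"older with exponents $6,2,3$), and the growth (\ref{gratis}); so inserting $\eta_n$ and letting $n\to\infty$ is justified by dominated convergence.

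The crux, and the main obstacle, is the remaining integral $\int(\ell^2/r^2)u_0v\eta_n$, whose untruncated version is not known a priori to be finite: rather than proving the local integrability of $(\ell^2/r^2)u_0$ in advance, I extract it from the equation itself. For $v\ge 0$ the integrand is non-negative and non-decreasing in $n$, so monotone convergence yields $\int(\ell^2/r^2)u_0v\eta_n\to\int(\ell^2/r^2)u_0v\in[0,+\infty]$; since the sum of all terms in the identity is zero and every other limit is finite, this limit is finite, and (\ref{im}) holds for non-negative $v$. For general $v\in\mathcal{D}$ I choose $\varphi\in\mathcal{D}$ with $\varphi\ge\|v\|_\infty$ on $\operatorname{supp}v$; then $\varphi\pm v$ are non-negative test functions, (\ref{im}) holds for each, and linearity gives it for $v$.
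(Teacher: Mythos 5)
Your proof is correct and follows essentially the same route as the paper's: truncate the test function away from $\Sigma$, pass to the limit in every regular term, and recover the finiteness of the singular term $\int(\ell^{2}/r^{2})u_{0}v$ a posteriori from the identity itself via monotone convergence. The only differences are cosmetic: the paper handles the gradient term through a weak-$H^{1}$ convergence lemma for $v^{\pm}\chi_{n}$ and splits $v=v^{+}-v^{-}$, whereas you estimate the cross term $\int v\,\nabla u_{0}\cdot\nabla\eta_{n}$ directly and reduce general $v$ to nonnegative test functions via $\varphi\pm v$ --- both executions are sound.
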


Let $\chi_{n}$ ($n$ positive integer) be a family of smooth functions
depending only on $r=\sqrt{x_{1}^{2}+x_{2}^{2}}$ and $x_{3}$ and which satisfy
the following assumptions:

\begin{itemize}
\item $\chi_{n}\left(  r,x_{3}\right)  =1$ for $r\geq\frac{2}{n}$

\item $\chi_{n}\left(  r,x_{3}\right)  =0$ for $r\leq\frac{1}{n}$

\item $\left|  \chi_{n}\left(  r,x_{3}\right)  \right|  \leq1$

\item $\left|  \nabla\chi_{n}\left(  r,x_{3}\right)  \right|  \leq2n$

\item $\chi_{n+1}\left(  r,x_{3}\right)  \geq\chi_{n}\left(  r,x_{3}\right)  $
\end{itemize}

\begin{lemma}
\label{caccona}Let $\varphi$ be a function in $H^{1}\cap L^{\infty}$ with
bounded support and set $\varphi_{n}=\varphi\cdot\chi_{n}.$ Then, up to a
subsequence, we have that
\[
\varphi_{n}\rightarrow\varphi\text{ weakly in }H^{1}%
\]

\end{lemma}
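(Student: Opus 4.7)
The goal is to establish weak convergence of the truncated sequence $\varphi_n = \varphi \cdot \chi_n$ in $H^1$. The plan has three standard steps: obtain a uniform $H^1$-bound, extract a weak subsequential limit by reflexivity, and identify that limit with $\varphi$ by passing to a stronger (a.e. or $L^2$) mode of convergence.

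First I would establish the uniform bound $\sup_n \|\varphi_n\|_{H^1} < \infty$. Since $|\chi_n| \le 1$ one has $\|\varphi_n\|_{L^2} \le \|\varphi\|_{L^2}$ trivially. For the gradient I would use the product rule
\[
\nabla \varphi_n = \chi_n\, \nabla\varphi + \varphi\, \nabla\chi_n,
\]
so $|\nabla\varphi_n|^2 \le 2|\nabla\varphi|^2 + 2|\varphi|^2|\nabla\chi_n|^2$. The first term is integrable because $\varphi \in H^1$. The second is the only delicate piece, and here the cylindrical geometry of $\chi_n$ is crucial: $\nabla\chi_n$ is supported in the annular tube $\{1/n \le r \le 2/n\}$, whose intersection with the bounded support of $\varphi$ has three-dimensional Lebesgue measure $O(n^{-2})$ (the cross-sectional area is $\pi(4-1)/n^2$ and the $x_3$-length is fixed). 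Combined with $|\nabla\chi_n| \le 2n$ and $\varphi \in L^\infty$, this yields
\[
\int |\varphi|^2 |\nabla\chi_n|^2 \,dx \le 4n^2 \|\varphi\|_\infty^2 \cdot O(n^{-2}) = O(1),
\]
uniformly in $n$.

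Given this uniform bound, reflexivity of $H^1$ furnishes a subsequence (still denoted $\varphi_n$) and a limit $\tilde\varphi \in H^1$ such that $\varphi_n \rightharpoonup \tilde\varphi$ weakly in $H^1$. To identify $\tilde\varphi = \varphi$, I would note that $\chi_n \to 1$ pointwise a.e. (the set $\Sigma$ where it fails has Lebesgue measure zero), hence $\varphi_n \to \varphi$ pointwise a.e. Since $|\varphi_n| \le |\varphi|$ and $\varphi \in L^2$, dominated convergence gives $\varphi_n \to \varphi$ strongly in $L^2$. Combined with weak convergence in $H^1$, the limit must be $\varphi$, completing the proof.

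The main (and really the only) obstacle is the uniform $H^1$ bound, specifically handling the singular factor $|\nabla\chi_n|^2 \sim n^2$. This is overcome precisely because $\chi_n$ depends only on $(r,x_3)$, so the support of $\nabla\chi_n$ shrinks like a tube of cross-section $O(n^{-2})$; it would fail for a general mollifier cut-off supported near a single point, but works here thanks to the one-codimensional nature of the singular set $\Sigma$. The boundedness of the support of $\varphi$ and $\varphi \in L^\infty$ are exactly what is needed to close the estimate.
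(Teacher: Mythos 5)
Your proposal is correct and follows essentially the same route as the paper: the product-rule decomposition of $\nabla\varphi_n$, the bound $|\nabla\chi_n|\le 2n$ combined with the $O(n^{-2})$ measure of the tube where $\nabla\chi_n$ is supported (intersected with the bounded support of $\varphi$), and the $L^\infty$ bound on $\varphi$ to get uniform $H^1$ boundedness, after which a.e.\ convergence identifies the weak limit. Your identification step via dominated convergence in $L^2$ is just a slightly more explicit version of what the paper dismisses as ``standard arguments.''
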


\begin{proof}
Clearly $\varphi_{n}\rightarrow\varphi$ $a.e.$ Then, by standard arguments,
the conclusion holds if we show that $\left\{  \varphi_{n}\right\}  $ is
bounded in $H^{1}.$ Clearly $\left\{  \varphi_{n}\right\}  $ is bounded in
$L^{2}.$ Let us now prove that
\[
\left\{  \int\left\vert \nabla\varphi_{n}\right\vert ^{2}\right\}  \text{ is
bounded.}%
\]
We have
\begin{align*}
\int\left\vert \nabla\varphi_{n}\right\vert ^{2}  & \leq2\int\left\vert
\nabla\varphi\cdot\chi_{n}\right\vert ^{2}+\left\vert \varphi\cdot\nabla
\chi_{n}\right\vert ^{2}\\
& \leq2\int\left\vert \nabla\varphi\right\vert ^{2}+2\int_{\Gamma
_{\varepsilon}}\left\vert \varphi\cdot\nabla\chi_{n}\right\vert ^{2}%
\end{align*}
where
\[
\Gamma_{\varepsilon}=\left\{  x\in\mathbb{R}^{3}:\varphi\neq0\ and\ \left\vert
\nabla\chi_{n}\left(  r,z\right)  \right\vert \neq0\right\}  .
\]
By our construction, $\left\vert \Gamma_{\varepsilon}\right\vert \leq c/n^{2}$
where $c$ depends only on $\varphi.$ Thus
\begin{align*}
\int\left\vert \nabla\varphi_{n}\right\vert ^{2}  & \leq2\int\left\vert
\nabla\varphi\right\vert ^{2}+2\left\Vert \varphi\right\Vert _{L^{\infty}}%
^{2}\int_{\Gamma_{\varepsilon}}\left\vert \nabla\chi_{n}\right\vert ^{2}\\
& \leq2\int\left\vert \nabla\varphi\right\vert ^{2}+2\left\Vert \varphi
\right\Vert _{L^{\infty}}^{2}\cdot\left\vert \Gamma_{\varepsilon}\right\vert
\cdot\left\Vert \nabla\chi_{n}\right\Vert _{L^{\infty}}^{2}\\
& \leq2\int\left\vert \nabla\varphi\right\vert ^{2}+8c\left\Vert
\varphi\right\Vert _{L^{\infty}}^{2}.
\end{align*}
Thus $\varphi_{n}$ is bounded in $H^{1}$ and $\varphi_{n}\rightarrow\varphi$
weakly in $H^{1}.$

\bigskip
\end{proof}

\bigskip

\bigskip Now we are ready to prove Theorem \ref{finale}

\begin{proof}
Clearly (\ref{dis}) and (\ref{um}) immediately follow by (\ref{we}) and
(\ref{w3}). Let us prove (\ref{im}). The case $\ell=0$ is trivial. So assume
$\ell\neq0.$ We take any $v\in\mathcal{D}$ and set $\varphi_{n}=v^{+}\chi_{n}$
where $v^{+}=\frac{\left\vert v\right\vert +v}{2}.$ Then, taking $\varphi_{n}$
as test function in Eq. (\ref{w1}), we have
\begin{equation}
\int\nabla u_{0}\cdot\nabla\varphi_{n}+\left[  \left\vert q\mathbf{A}_{0}%
-\ell\nabla\theta\right\vert ^{2}-\left(  q\phi_{0}-\omega\right)
^{2}\right]  \,u_{0}\varphi_{n}+W^{\prime}\left(  u_{0}\right)  \varphi
_{n}=0\ \label{ecco}%
\end{equation}

Equation (\ref{ecco}) can be written as follows
\begin{equation}
A_{n}+B_{n}+C_{n}+D_{n}=0\label{simbol}%
\end{equation}

where
\begin{equation}
A_{n}=\int\nabla u_{0}\cdot\nabla\varphi_{n},\text{ }B_{n}=\int\left(
q^{2}\mathbf{A}_{0}^{2}u_{0}-\left(  q\phi_{0}-\omega\right)  ^{2}%
u_{0}+W^{\prime}\left(  u_{0}\right)  \right)  \varphi_{n}\label{def1}%
\end{equation}

\begin{equation}
C_{n}=-2\int q\mathbf{A}_{0}\cdot\ell\nabla\theta\ u_{0}\varphi_{n},\text{
}D_{n}=\int\left\vert \ell\nabla\theta\right\vert ^{2}u_{0}\varphi
_{n}.\label{def2}%
\end{equation}
By Lemma \ref{caccona}
\begin{equation}
\varphi_{n}\rightarrow v^{+}\text{ weakly in }H^{1}.\label{w}%
\end{equation}

Then we have
\begin{equation}
A_{n}\rightarrow\int\nabla u_{0}\cdot\nabla v^{+}.\label{pa}%
\end{equation}

Now
\[
\left(  q^{2}\mathbf{A}_{0}^{2}u_{0}-\left(  q\phi_{0}-\omega\right)
^{2}u_{0}+W^{\prime}\left(  u_{0}\right)  \right)  \in L^{6/5}=\left(
L^{6}\right)  ^{\prime}.
\]
Then, using again (\ref{w}) and by the embedding $H^{1}\subset L^{6}$, we
have
\begin{equation}
B_{n}\rightarrow\int\left(  q^{2}\mathbf{A}_{0}^{2}u_{0}-\left(  q\phi
_{0}-\omega\right)  ^{2}u_{0}+W^{\prime}\left(  u_{0}\right)  \right)
v^{+}<\infty.\label{bi}%
\end{equation}

Now we shall prove that
\begin{equation}
C_{n}\rightarrow-2\int q\mathbf{A}_{0}\cdot\ell\nabla\theta\ u_{0}v^{+}%
<\infty.\label{ba}%
\end{equation}
Set
\[
C=B_{R}\times\left[  -d,d\right]  ,\text{ }B_{R}=\left\{  (x_{1},x_{2}%
)\in\mathbb{R}^{2}:r^{2}=x_{1}^{2}+x_{2}^{2}<R\right\}
\]
where $d,$ $R>0$ are so large that the cylinder $C$ contains the support of
$v^{+}.$

Then
\begin{align}
\int\left(  \frac{\varphi_{n}}{r}\right)  ^{\frac{3}{2}}dx  & =\int_{C}\left(
\frac{v^{+}\chi_{n}}{r}\right)  ^{\frac{3}{2}}dx\\
& \leq c_{1}\int_{-d}^{d}\int_{0}^{R}\left(  \frac{1}{r}\right)  ^{\frac{3}%
{2}}rdrdx_{3}=M<\infty\label{pos}%
\end{align}

where $c_{1}=2\pi\sup\left(  v^{+}\right)  ^{\frac{3}{2}}.$ By (\ref{pos}) we
have
\begin{equation}
\int\left\vert \mathbf{A}_{0}\cdot\nabla\theta\ u_{0}\varphi_{n}\right\vert
dx\leq\left\Vert u_{0}\mathbf{A}_{0}\ \right\Vert _{L^{3}}\left\Vert
\frac{\varphi_{n}}{r}\right\Vert _{L^{\frac{3}{2}}}\leq\left\Vert
u_{0}\mathbf{A}_{0}\ \right\Vert _{L^{3}}M^{\frac{2}{3}}.\label{posse}%
\end{equation}

Now
\[
\left\vert \mathbf{A}_{0}\cdot\nabla\theta\ u_{0}\varphi_{n}\right\vert
\rightarrow\left\vert \mathbf{A}_{0}\cdot\nabla\theta\ u_{0}v^{+}\right\vert
\text{ a.e. in }\mathbb{R}^{3}%
\]
and the sequence $\left\{  \left\vert \mathbf{A}_{0}\cdot\nabla\theta
\ u_{0}\varphi_{n}\right\vert \right\}  $ is monotone $.$ Then, by the
monotone convergence theorem, we get
\begin{equation}
\int\left\vert \mathbf{A}_{0}\cdot\ell\nabla\theta\ u_{0}\varphi
_{n}\right\vert dx\rightarrow\int\left\vert \mathbf{A}_{0}\cdot\ell
\nabla\theta\ u_{0}v^{+}\right\vert dx.\label{from}%
\end{equation}

By (\ref{posse}) and (\ref{from}) we deduce that
\begin{equation}
\int\left\vert \mathbf{A}_{0}\cdot\ell\nabla\theta\ u_{0}v^{+}\right\vert
dx<\infty.\label{byy}%
\end{equation}

Then, since
\[
\left\vert \mathbf{A}_{0}\cdot\nabla\theta\ u_{0}\varphi_{n}\right\vert
\leq\left\vert \mathbf{A}_{0}\cdot\nabla\theta\ u_{0}v^{+}\right\vert \in
L^{1},
\]
by the dominated convergence Theorem, we get (\ref{ba}). Finally we prove
that
\begin{equation}
D_{n}\rightarrow\int\left\vert \ell\nabla\theta\right\vert ^{2}u_{0}%
v^{+}<\infty.\label{fin}%
\end{equation}
By (\ref{simbol}), (\ref{pa}), (\ref{bi}) and (\ref{ba}) we have that
\begin{equation}
D_{n}=\int\left\vert \ell\nabla\theta\right\vert ^{2}u_{0}\varphi_{n}\text{ is
bounded.}\label{bu}%
\end{equation}
Then the sequence $\left\vert \nabla\theta\right\vert ^{2}\ u_{0}\varphi_{n}$
is monotone and it converges a.e$.$ to $\left\vert \nabla\theta\right\vert
^{2}\ u_{0}v^{+}.$ Then, by the monotone convergence theorem, we get
\begin{equation}
\int\left\vert \ell\nabla\theta\right\vert ^{2}\ u_{0}\varphi_{n}%
dx\rightarrow\int\left\vert \ell\nabla\theta\right\vert ^{2}\ u_{0}%
v^{+}dx.\label{bubu}%
\end{equation}
By (\ref{bu}) and (\ref{bubu}) we get (\ref{fin}).

Taking the limit in (\ref{simbol}) and by using (\ref{pa}), (\ref{bi}),
(\ref{ba}), (\ref{fin}) we have
\[
\int\nabla u_{0}\cdot\nabla v^{+}+\left[  \left\vert q\mathbf{A}_{0}%
-\ell\nabla\theta\right\vert ^{2}-\left(  q\phi_{0}-\omega\right)
^{2}\right]  \,u_{0}v^{+}+W^{\prime}\left(  u_{0}\right)  v^{+}=0\ .
\]

Taking $\varphi_{n}=v^{-}\chi_{n}$ and arguing in the same way as before, we
get
\[
\int\nabla u_{0}\cdot\nabla v^{-}+\left[  \left\vert q\mathbf{A}_{0}%
-\ell\nabla\theta\right\vert ^{2}-\left(  q\phi_{0}-\omega\right)
^{2}\right]  \,u_{0}v^{-}+W^{\prime}\left(  u_{0}\right)  v^{-}=0.
\]
Then
\[
\int\nabla u_{0}\cdot\nabla v+\left[  \left\vert q\mathbf{A}_{0}-\ell
\nabla\theta\right\vert ^{2}-\left(  q\phi_{0}-\omega\right)  ^{2}\right]
\,u_{0}v+W^{\prime}\left(  u_{0}\right)  v=0.
\]
Since $v\in\mathcal{D}$ is arbitrary, we get that equation (\ref{im}) is solved.
\end{proof}

The presence of the term $-\int\left\vert \nabla\phi\right\vert ^{2}$ gives to
the functional $J$ a strong\ indefiniteness, namely any critical point of $J$
has infinite Morse index: this fact is a great obstacle to a direct study of
the critical points. To avoid this difficulty we shall introduce a
\textit{reduced functional}

\subsection{The reduced functional}

Equation (\ref{z3}) can be written as follows
\begin{equation}
-\Delta\phi+q^{2}u^{2}\phi=q\omega u^{2}\label{a2bis}%
\end{equation}
and it can be easily verified (see \cite{bf}) that for any $u\in
H^{1}(\mathbb{R}^{3}),$ there exists a unique solution $\phi\in\mathcal{D}%
^{1,2}$ of (\ref{a2bis}).

Clearly, if $u\in\hat{H}_{r}^{1}(\mathbb{R}^{3}),$ the solution $\phi$
$=\phi_{u}$ of (\ref{a2bis}) belongs to $\mathcal{D}_{r}^{1,2}.$ Then we can
define the map
\begin{equation}
u\in\hat{H}_{r}^{1}(\mathbb{R}^{3})\rightarrow\ Z_{\omega}\left(  u\right)
=\phi_{u}\in\mathcal{D}_{r}^{1,2}\text{ solution of (\ref{a2bis}).}\label{map}%
\end{equation}
Standard arguments show that the map $Z_{\omega}$ is $C^{1}.$ Since $\phi_{u}$
solves (\ref{a2bis}), clearly we have
\begin{equation}
d_{\phi}J(u,Z_{\omega}\left(  u\right)  ,\mathbf{A)}=0\label{by}%
\end{equation}
where $J$ is defined in (\ref{functional}) and $d_{\phi}J$ denotes the partial
differential of $J$ with respect to $\phi.$ For $u\in H^{1}(\mathbb{R}^{3}),$
let $\Phi=\Phi_{u}$ be the solution of the equation (\ref{a2bis}) with
$\omega=1,$ then $\Phi_{u}$ solves the equation
\begin{equation}
-\Delta\Phi_{u}+q^{2}u^{2}\Phi_{u}=qu^{2}.\label{b2}%
\end{equation}
Clearly
\begin{equation}
\phi_{u}=\omega\Phi_{u}.\label{d}%
\end{equation}
Now let $q>0,$ then by maximum principle arguments it is easy to show
that\ for any $u\in H^{1}(\mathbb{R}^{3})$%
\begin{equation}
0\leq\Phi_{u}\leq\frac{1}{q}.\label{max}%
\end{equation}

Now, if $\left(  u,\mathbf{A}\right)  \in\hat{H}^{1}\times\left(
\mathcal{D}^{1,2}\right)  ^{3},$ we set
\[
\tilde{J}(u,\mathbf{A})=J(u,Z_{\omega}\left(  u\right)  ,\mathbf{A})
\]
where $J$ is defined in (\ref{functional}). By using the chain rule and
equation (\ref{by}), it is easy to verify (see the first part of the proof of
Theorem 16 in (\cite{befov07})) that
\begin{equation}
\left(  \left(  u,\mathbf{A}\right)  \text{ critical point of }\tilde
{J}\right)  \Longrightarrow\left(  (u,Z_{\omega}\left(  u\right)
,\mathbf{A})\text{ critical point of }J\right)  .\label{prem}%
\end{equation}

We will refer to $\tilde{J}(u,\mathbf{A})$ as the \textit{reduced action
functional}.

From (\ref{b2}) we have
\begin{equation}
\int qu^{2}\Phi_{u}dx=\int\left\vert \nabla\Phi_{u}\right\vert ^{2}%
dx+q^{2}\int u^{2}\Phi_{u}^{2}dx.\label{base}%
\end{equation}

Now, by (\ref{d}), (\ref{base}), we have:
\begin{align}
\tilde{J}(u,\mathbf{A})  & =J(u,Z_{\omega}\left(  u\right)  ,\mathbf{A}%
)=\frac{1}{2}\int\left\vert \nabla u\right\vert ^{2}-\left\vert \nabla\phi
_{u}\right\vert ^{2}+\left\vert \nabla\times\mathbf{A}\right\vert
^{2}\nonumber\\
& +\frac{1}{2}\int\left[  \left\vert \ell\nabla\theta-q\mathbf{A}\right\vert
^{2}-\left(  q\phi_{u}-\omega\right)  ^{2}\right]  \,u^{2}+\int W\left(
u\right) \nonumber\\
& =\frac{1}{2}\int\left(  \left\vert \nabla u\right\vert ^{2}+\left\vert
\nabla\times\mathbf{A}\right\vert ^{2}+\left\vert \ell\nabla\theta
-q\mathbf{A}\right\vert ^{2}u^{2}\right) \nonumber\\
& -\frac{1}{2}\omega^{2}\int\left(  \left\vert \nabla\Phi_{u}\right\vert
^{2}+q^{2}u^{2}\Phi_{u}^{2}+\,u^{2}-2qu^{2}\Phi_{u}\right)  +\int W\left(
u\right) \nonumber\\
& =\frac{1}{2}\int\left\vert \nabla u\right\vert ^{2}+\left\vert \nabla
\times\mathbf{A}\right\vert ^{2}+\left\vert \ell\nabla\theta-q\mathbf{A}%
\right\vert ^{2}u^{2}+\int W\left(  u\right) \nonumber\\
& -\frac{\omega^{2}}{2}\int(\left[  1-q{\Phi}_{u}\right]  )\,u^{2}%
.\label{prel}%
\end{align}
Then
\begin{equation}
\tilde{J}(u,\mathbf{A})=I(u,\mathbf{A})-\frac{\omega^{2}}{2}K_{q}%
(u)\label{prel1}%
\end{equation}

where
\[
I(u,\mathbf{A})=\frac{1}{2}\int\left\vert \nabla u\right\vert ^{2}+\left\vert
\nabla\times\mathbf{A}\right\vert ^{2}+\left\vert \ell\nabla\theta
-q\mathbf{A}\right\vert ^{2}u^{2}+\int W\left(  u\right)
\]
and
\begin{equation}
K_{q}(u)=\int(\left[  1-q{\Phi}_{u}\right]  )\,u^{2}.\label{chi}%
\end{equation}

Now, following the same lines as before, we can define the \textit{reduced
energy functional} as follows
\[
\mathcal{\tilde{E}}\left(  u,\mathbf{A}\right)  =\mathcal{E(}u,Z_{\omega
}\left(  u\right)  ,\mathbf{A)}%
\]
Where (see (\ref{enna}))
\begin{align}
\mathcal{E}  & =\frac{1}{2}\int\left(  \left\vert \nabla u\right\vert
^{2}+\left\vert \nabla\phi\right\vert ^{2}+\left\vert \nabla\times
\mathbf{A}\right\vert ^{2}+(\left\vert \ell\nabla\theta-q\mathbf{A}\right\vert
^{2}+\left(  \omega-q\phi\right)  ^{2})\,u^{2}\right) \nonumber\\
& +\int W(u).\label{enna1}%
\end{align}

It can be shown as for (\ref{prel1}) that
\begin{equation}
\mathcal{\tilde{E}}\left(  u,\mathbf{A}\right)  =I(u,\mathbf{A})+\frac
{\omega^{2}}{2}K_{q}(u).\label{prel2}%
\end{equation}

Observe that
\[
Q=q\sigma=q\omega K_{q}(u)
\]
represents the (electric) charge (see (\ref{ker}) and (\ref{car})), so that we
can write for $u\neq0$%
\[
\mathcal{\tilde{E}}\left(  u,\mathbf{A}\right)  =I(u,\mathbf{A})+\frac
{\omega^{2}}{2}K_{q}(u)=I(u,\mathbf{A})+\frac{\sigma^{2}}{2K_{q}(u)}.
\]
Then for any $\sigma$ $\neq0,$ the functional defined by
\begin{equation}
E_{\sigma,q}\left(  u,\mathbf{A}\right)  =I(u,\mathbf{A})+\frac{\sigma^{2}%
}{2K_{q}(u)},\text{ }\left(  u,\mathbf{A}\right)  \in\hat{H}^{1}\times\left(
\mathcal{D}^{1,2}\right)  ^{3},\text{ }u\neq0\label{esigma}%
\end{equation}

represents the energy on the configuration $\mathcal{(}u,\Phi_{u},\mathbf{A)}$
having charge $Q=q\sigma$ or equivalently frequency $\omega=\frac{\sigma
}{K_{q}(u)}.$

The following lemma holds

\begin{lemma}
\label{unno} Consider the functional
\[
\hat{H}^{1}\in u\rightarrow K(u)=\int\,u^{2}(1-q\Phi_{u})dx.
\]
Then for any $u\in\hat{H}^{1}$ we have%
\begin{equation}
K^{\prime}(u)=2u(1-q\Phi_{u})^{2}.\label{sec}%
\end{equation}
\bigskip
\end{lemma}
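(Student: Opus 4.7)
The plan is to compute the Fréchet derivative of $K$ by differentiating through the parameter $t$ in a one-parameter family $u+th$ and then eliminating the unknown derivative of $\Phi_u$ by a clever test-function trick, without actually solving for it.

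Fix $u,h\in\hat{H}^{1}$ and write $\Phi_{t}=\Phi_{u+th}$. Granted that the map $u\mapsto\Phi_{u}$ is $C^{1}$ from $H^{1}$ into $\mathcal{D}^{1,2}$ (which follows from the implicit function theorem applied to the linear, coercive equation \eqref{b2}; the bound \eqref{max} and standard estimates make the hypotheses routine to check), set
\[
\Phi'=\frac{d}{dt}\bigg|_{t=0}\Phi_{t}.
\]
A straightforward differentiation of $K(u+th)$ at $t=0$ gives
\begin{equation}\label{pl1}
K'(u)[h]=\int 2uh(1-q\Phi_{u})\,dx-\int qu^{2}\Phi'\,dx,
\end{equation}
so the problem reduces to identifying the second integral.

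Differentiating the defining equation $-\Delta\Phi_{t}+q^{2}(u+th)^{2}\Phi_{t}=q(u+th)^{2}$ at $t=0$ shows that $\Phi'\in\mathcal{D}^{1,2}$ satisfies
\[
-\Delta\Phi'+q^{2}u^{2}\Phi'=2quh(1-q\Phi_{u}).
\]
To avoid inverting this operator, I would test it against $\Phi_{u}$ and integrate by parts: the left-hand side becomes $\int\nabla\Phi'\cdot\nabla\Phi_{u}+q^{2}u^{2}\Phi_{u}\Phi'\,dx$, which after one more integration by parts and use of \eqref{b2} (i.e.\ $-\Delta\Phi_{u}=qu^{2}-q^{2}u^{2}\Phi_{u}$) equals
\[
\int\bigl(qu^{2}(1-q\Phi_{u})+q^{2}u^{2}\Phi_{u}\bigr)\Phi'\,dx=\int qu^{2}\Phi'\,dx.
\]
Therefore
\[
\int qu^{2}\Phi'\,dx=\int 2quh(1-q\Phi_{u})\Phi_{u}\,dx.
\]

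Plugging this identity into \eqref{pl1} yields
\[
K'(u)[h]=\int 2uh(1-q\Phi_{u})\,dx-\int 2quh(1-q\Phi_{u})\Phi_{u}\,dx=\int 2uh(1-q\Phi_{u})^{2}\,dx,
\]
which is exactly \eqref{sec}. The only step that requires care is the justification that $u\mapsto\Phi_{u}$ is differentiable into $\mathcal{D}^{1,2}$ and that all the integrals appearing are absolutely convergent; this is the main (but essentially routine) obstacle, handled by the coercivity of $-\Delta+q^{2}u^{2}$, the $L^{\infty}$ bound \eqref{max}, and the embedding $\mathcal{D}^{1,2}\hookrightarrow L^{6}$ together with $u\in\hat{H}^{1}\subset H^{1}\subset L^{6}$.
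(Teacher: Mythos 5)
Your proof is correct, and it reaches (\ref{sec}) by a route that is organized differently from the paper's. The paper first rewrites $K(u)=\mathcal{A}(u,\Phi_{u})$ with $\mathcal{A}(u,\Phi)=\int|\nabla\Phi|^{2}+\int u^{2}(1-q\Phi)^{2}$ (this identity is exactly (\ref{base})), observes that $\Phi_{u}$ is a critical point of $\Phi\mapsto\mathcal{A}(u,\Phi)$ because its Euler equation is (\ref{b2}), and then invokes the chain rule so that the term $\partial_{\Phi}\mathcal{A}(u,\Phi_{u})\,\Phi_{u}'$ vanishes identically — an envelope-theorem argument in which the unknown derivative $\Phi_{u}'$ never has to be computed or even paired against anything. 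You instead keep $K$ in its original form, differentiate directly to get the residual term $\int qu^{2}\Phi'$, write down the linearized equation $-\Delta\Phi'+q^{2}u^{2}\Phi'=2quh(1-q\Phi_{u})$, and eliminate $\Phi'$ by testing against $\Phi_{u}$ and using the self-adjointness of $-\Delta+q^{2}u^{2}$ together with (\ref{b2}). The two arguments are dual to one another: your duality pairing is precisely what makes the cross term cancel in the paper's chain rule. What the paper's version buys is brevity and the fact that the linearized equation never appears; what yours buys is that no reformulation of $K$ is needed and the cancellation $(1-q\Phi_{u})-q\Phi_{u}(1-q\Phi_{u})=(1-q\Phi_{u})^{2}$ is completely explicit. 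Both versions rest on the same unproved (but routine) technical point, namely the $C^{1}$ dependence of $\Phi_{u}$ on $u$, which you at least flag and sketch.
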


\begin{proof}
Set
\[
\mathcal{A(}u,\Phi)=\int\left\vert \nabla\Phi\right\vert ^{2}dx+\int
u^{2}(1-q\Phi)^{2}dx.
\]
By (\ref{base}) clearly we have
\[
\mathcal{A(}u,\Phi_{u})=K(u).
\]
Then
\begin{equation}
K^{\prime}(u)=\frac{\partial\mathcal{A}}{\partial u}\mathcal{(}u,\Phi
_{u})+\frac{\partial\mathcal{A}}{\partial\Phi}\mathcal{(}u,\Phi_{u}))\Phi
_{u}^{\prime}\label{give}%
\end{equation}
where $\frac{\partial\mathcal{A}}{\partial u},$ $\frac{\partial\mathcal{A}%
}{\partial\Phi}$ denote the partial derivatives of $\mathcal{A}$ with respect
to $u$ and $\Phi$ respectively. Since $\Phi_{u}$ solves (\ref{b2}), we have
\[
\frac{\partial\mathcal{A}}{\partial\Phi}\mathcal{(}u,\Phi_{u})=0.
\]
Then (\ref{give}) gives
\[
K^{\prime}(u)=\frac{\partial\mathcal{A}}{\partial u}\mathcal{(}u,\Phi
_{u})=2u(1-q\Phi_{u})^{2}.
\]

\end{proof}

The following proposition holds

\begin{proposition}
\label{fond}Let $\sigma\neq0$ and $\left(  u,\mathbf{A}\right)  \in\hat{H}%
^{1}\times\left(  \mathcal{D}^{1,2}\right)  ^{3},$ $u\neq0$ be a critical
point of $E_{\sigma,_{q}}$ (see (\ref{esigma})) . Then, if we set
$\omega=\frac{\sigma}{K_{q}(u)},$ $\mathcal{(}u,Z_{\omega}\left(  u\right)
,\mathbf{A)}$ is a critical point of $J$
\end{proposition}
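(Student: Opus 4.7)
The plan is to show that a critical point of $E_{\sigma,q}$ becomes, after the substitution $\omega=\sigma/K_q(u)$, a critical point of the reduced functional $\tilde J$ with that value of $\omega$; then the conclusion follows immediately from implication (\ref{prem}) applied to $(u,\mathbf{A})$.

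First I would record the identity $E_{\sigma,q}(u,\mathbf{A})=I(u,\mathbf{A})+\frac{\omega^{2}}{2}K_{q}(u)$, which is the tautology $\sigma^{2}/(2K_{q}(u))=\omega^{2}K_{q}(u)/2$ when $\omega=\sigma/K_q(u)$. Note that this requires $K_q(u)>0$, which is guaranteed since $u\neq 0$ and the maximum principle bound (\ref{max}) gives $0\leq q\Phi_u\leq 1$, so $u^2(1-q\Phi_u)\geq 0$ with strict positivity on a set of positive measure.

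Next I would differentiate $E_{\sigma,q}$ at $(u,\mathbf{A})$. Since $K_{q}$ does not depend on $\mathbf{A}$, the partial derivative with respect to $\mathbf{A}$ is simply $d_{\mathbf{A}}I(u,\mathbf{A})$. For the $u$-derivative, Lemma \ref{unno} gives $K_{q}'(u)=2u(1-q\Phi_{u})^{2}$, hence
\[
d_{u}E_{\sigma,q}(u,\mathbf{A})=d_{u}I(u,\mathbf{A})-\frac{\sigma^{2}}{2K_{q}(u)^{2}}\,2u(1-q\Phi_{u})^{2}=d_{u}I(u,\mathbf{A})-\omega^{2}u(1-q\Phi_{u})^{2},
\]
where in the last step I used the defining relation $\omega=\sigma/K_{q}(u)$. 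On the other hand, for fixed $\omega$, the reduced functional $\tilde J(u,\mathbf{A})=I(u,\mathbf{A})-\frac{\omega^{2}}{2}K_{q}(u)$ (formula (\ref{prel1})) has partial derivatives
\[
d_{\mathbf{A}}\tilde J(u,\mathbf{A})=d_{\mathbf{A}}I(u,\mathbf{A}),\qquad d_{u}\tilde J(u,\mathbf{A})=d_{u}I(u,\mathbf{A})-\omega^{2}u(1-q\Phi_{u})^{2}.
\]
Comparing the two pairs of formulas, the vanishing of the differentials of $E_{\sigma,q}$ at $(u,\mathbf{A})$ is equivalent to the vanishing of the differentials of $\tilde J(\cdot,\cdot)$ at $(u,\mathbf{A})$, with $\omega$ fixed to the value $\sigma/K_q(u)$. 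Hence $(u,\mathbf{A})$ is a critical point of $\tilde J$ for this $\omega$.

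Finally, since $\omega=\sigma/K_{q}(u)\neq 0$, the implication (\ref{prem}) yields that $(u,Z_{\omega}(u),\mathbf{A})$ is a critical point of $J$, which is the desired conclusion. The only point that needs mild care is the sign and non-vanishing of $K_{q}(u)$, addressed in the first paragraph above; the rest is a straightforward chain-rule computation hinging on the formula for $K_{q}'(u)$ furnished by Lemma \ref{unno}.
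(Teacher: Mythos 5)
Your proposal is correct and follows essentially the same route as the paper: differentiate $E_{\sigma,q}$, observe that with $\omega=\sigma/K_{q}(u)$ the derivative of the term $\sigma^{2}/(2K_{q}(u))$ coincides with that of $-\tfrac{\omega^{2}}{2}K_{q}(u)$ at fixed $\omega$, conclude that $(u,\mathbf{A})$ is a critical point of $\tilde J$, and invoke (\ref{prem}). The extra details you supply (positivity of $K_{q}(u)$ via (\ref{max}) and the explicit formula for $K_{q}'$ from Lemma \ref{unno}) are harmless refinements of the paper's more compact one-line computation.
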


\begin{proof}
Since $\left(  u,\mathbf{A}\right)  \in\hat{H}^{1}\times\left(  \mathcal{D}%
^{1,2}\right)  ^{3},$ $u\neq0$ is a critical point of $E_{\sigma,q},$ we have
\[
0=E_{\sigma,q}^{\prime}\left(  u,\mathbf{A}\right)  =I^{\prime}(u,\mathbf{A}%
)-\frac{\sigma^{2}K_{q}^{\prime}(u)}{2K_{q}(u)^{2}}=I^{\prime}(u,\mathbf{A}%
)-\frac{\omega^{2}K_{q}^{\prime}(u)}{2},\text{ }\omega=\frac{\sigma}{K_{q}%
(u)}.
\]
Hence $\left(  u,\mathbf{A}\right)  $ is a critical point of the functional
\[
\tilde{J}\left(  u,\mathbf{A}\right)  =I(u,\mathbf{A})-\frac{\omega^{2}%
K_{q}(u)}{2}.
\]
So by (\ref{prem}) $\left(  u,Z_{\omega}\left(  u\right)  ,\mathbf{A}\right)
$ is a critical point of $J.$
\end{proof}

By Proposition \ref{fond} and Theorem \ref{finale} we are reduced to study the
critical points of $E_{\sigma,q}$ which is a functional bounded from below.

However $E_{\sigma,q}$ contains the term $\int\left\vert \nabla\times
\mathbf{A}\right\vert ^{2}$ which is not a Sobolev norm.

In order to avoid this difficulty we introduce a suitable manifold
$V\subset\hat{H}^{1}\times\left(  \mathcal{D}^{1,2}\right)  ^{3}$ such that:

\begin{itemize}
\item the critical points of $J$ restricted to $V$ satisfy Eq. (\ref{z1}),
(\ref{z3}), \ref{z4}); namely $V$ is a \textquotedblright natural
constraint\textquotedblright\ for $J$.

\item The components $\mathbf{A}$ of the elements in $V$ are divergence free,
then the term $\int\left\vert \nabla\times\mathbf{A}\right\vert ^{2} $ can be
replaced by $\left\Vert \mathbf{A}\right\Vert _{\left(  \mathcal{D}%
^{1,2}\right)  ^{3}}^{2}=\int\left\vert \nabla\mathbf{A}\right\vert ^{2}$.
\end{itemize}

We set
\begin{equation}
\mathcal{A}_{0}:=\left\{  \mathbf{X}\in\mathcal{C}_{0}^{\infty}(\mathbb{R}%
^{3}\backslash\Sigma,\mathbb{R}^{3}):\mathbf{X}=b\left(  r,x_{3}\right)
\nabla\theta;\ b\in C_{0}^{\infty}\left(  \mathbb{R}^{3}\backslash
\Sigma,\mathbb{R}\right)  \right\}  .\label{ac}%
\end{equation}
Let $\mathcal{A}$ denote the closure of $\mathcal{A}_{0}$ with respect to the
norm of $\left(  \mathcal{D}^{1,2}\right)  ^{3}.$ \ We shall consider the
following space
\begin{equation}
V:=\hat{H}_{r}^{1}\times\mathcal{A}\label{set}%
\end{equation}
where $\hat{H}_{r}^{1},$ is the closure of $\mathcal{D}_{r}$ with respect to
the $\hat{H}^{1}$ norm. We shall set $U=\left(  u,\mathbf{A}\right)  $ and
\[
\left\Vert U\right\Vert _{V}=\left\Vert \left(  u,\mathbf{A}\right)
\right\Vert _{V}=\left\Vert u\right\Vert _{\hat{H}_{r}^{1}}+\left\Vert
\mathbf{A}\right\Vert _{\left(  \mathcal{D}^{1,2}\right)  ^{3}}.
\]

\begin{lemma}
\label{div}If $\mathbf{A}\in\mathcal{A}$, then
\[
\int\left\vert \nabla\times\mathbf{A}\right\vert ^{2}=\int\left\vert
\nabla\mathbf{A}\right\vert ^{2}.
\]

\end{lemma}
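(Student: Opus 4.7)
The proof relies on the classical identity, valid for smooth compactly supported vector fields on $\mathbb{R}^3$,
\[
\int|\nabla\times\mathbf{A}|^{2}\,dx=\int|\nabla\mathbf{A}|^{2}\,dx-\int(\nabla\cdot\mathbf{A})^{2}\,dx,
\]
which comes from expanding $|\nabla\times\mathbf{A}|^{2}=\partial_{j}A_{k}\partial_{j}A_{k}-\partial_{j}A_{k}\partial_{k}A_{j}$ via the $\epsilon$-identity and then integrating the mixed term by parts twice to turn it into $(\nabla\cdot\mathbf{A})^{2}$. Hence the lemma reduces to showing that every $\mathbf{A}\in\mathcal{A}$ is divergence free.

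First I would verify this on the generating set $\mathcal{A}_{0}$. Take $\mathbf{A}=b(r,x_{3})\nabla\theta$ with $b\in C_{0}^{\infty}(\mathbb{R}^{3}\setminus\Sigma)$, so that
\[
\mathbf{A}=\Bigl(b\,\tfrac{x_{2}}{r^{2}},\,-b\,\tfrac{x_{1}}{r^{2}},\,0\Bigr).
\]
A direct computation, using $\partial r/\partial x_{1}=x_{1}/r$ and $\partial r/\partial x_{2}=x_{2}/r$, gives
\[
\partial_{1}\!\Bigl(b\tfrac{x_{2}}{r^{2}}\Bigr)=\tfrac{x_{1}x_{2}}{r^{3}}b_{r}-\tfrac{2x_{1}x_{2}}{r^{4}}b,\qquad
\partial_{2}\!\Bigl(-b\tfrac{x_{1}}{r^{2}}\Bigr)=-\tfrac{x_{1}x_{2}}{r^{3}}b_{r}+\tfrac{2x_{1}x_{2}}{r^{4}}b,
\]
whose sum vanishes, so $\nabla\cdot\mathbf{A}=0$ on $\mathbb{R}^{3}\setminus\Sigma$, hence pointwise where $\mathbf{A}$ is smooth. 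Then the identity above, applied to $\mathbf{A}\in\mathcal{A}_{0}\subset C_{0}^{\infty}(\mathbb{R}^{3}\setminus\Sigma,\mathbb{R}^{3})$ (where integrations by parts are completely justified since $\mathbf{A}$ has compact support away from $\Sigma$), immediately yields
\[
\int|\nabla\times\mathbf{A}|^{2}\,dx=\int|\nabla\mathbf{A}|^{2}\,dx\qquad\text{for every }\mathbf{A}\in\mathcal{A}_{0}.
\]

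Finally I would extend the equality to the whole of $\mathcal{A}$ by density. Both maps $\mathbf{A}\mapsto\int|\nabla\times\mathbf{A}|^{2}$ and $\mathbf{A}\mapsto\int|\nabla\mathbf{A}|^{2}$ are continuous quadratic forms on $(\mathcal{D}^{1,2})^{3}$ (the first is bounded by the second since each component of $\nabla\times\mathbf{A}$ is a difference of entries of $\nabla\mathbf{A}$), so if $\mathbf{A}_{n}\to\mathbf{A}$ in $(\mathcal{D}^{1,2})^{3}$ with $\mathbf{A}_{n}\in\mathcal{A}_{0}$, passing to the limit in $\int|\nabla\times\mathbf{A}_{n}|^{2}=\int|\nabla\mathbf{A}_{n}|^{2}$ gives the lemma. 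The only conceivable difficulty is making sure the integration by parts in step one is valid at the axis $\Sigma$, but this is automatic because elements of $\mathcal{A}_{0}$ are compactly supported in $\mathbb{R}^{3}\setminus\Sigma$; after the density argument, no further regularity at $\Sigma$ is needed.
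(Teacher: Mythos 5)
Your proof is correct and follows essentially the same route as the paper: both reduce the lemma to showing that every $\mathbf{A}=b\nabla\theta\in\mathcal{A}_{0}$ is divergence free, invoke the $L^{2}$ identity $\int|\nabla\mathbf{A}|^{2}=\int|\nabla\times\mathbf{A}|^{2}+\int(\nabla\cdot\mathbf{A})^{2}$ on the dense set, and pass to $\mathcal{A}$ by continuity. The only (immaterial) difference is that you verify $\nabla\cdot\mathbf{A}=0$ by explicit coordinate computation, while the paper writes $\nabla\cdot(b\nabla\theta)=\nabla b\cdot\nabla\theta+b\Delta\theta$ and uses that $\nabla b\perp\nabla\theta$ and $\theta$ is harmonic off $\Sigma$.
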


\begin{proof}
Let $\mathbf{A=}b\nabla\theta\in\mathcal{A}_{0}.$ Since $b\ $depends only on
$r\ $and $x_{3},$\ it\ is easy to check that
\[
\nabla b\cdot\nabla\theta=0.
\]
Since $\theta$ is harmonic in $\mathbb{R}^{3}\backslash\Sigma$ and $b$ has
support in $\mathbb{R}^{3}\backslash\Sigma$%
\[
b\Delta\theta=0.
\]
Then
\[
\nabla\cdot\mathbf{A=}\nabla\cdot\left(  b\nabla\theta\right)  =\nabla
b\cdot\nabla\theta+b\Delta\theta=0.
\]
Thus, by continuity, we get
\[
\int\left(  \nabla\cdot\mathbf{A}\right)  ^{2}=0\text{ for any }\mathbf{A}%
\in\mathcal{A}\text{ .}%
\]
Then
\[
\int\left\vert \nabla\times\mathbf{A}\right\vert ^{2}=\int\left(  \nabla
\cdot\mathbf{A}\right)  ^{2}+\int\left\vert \nabla\times\mathbf{A}\right\vert
^{2}=\int\left\vert \nabla\mathbf{A}\right\vert ^{2}.
\]

\end{proof}

\subsection{Analysis of the minimizing sequences}

The ratio energy/charge is a crucial quantity for the following lemmas. For a
charge $\sigma>0$ this ratio is defined as function of $u$ and $\mathbf{A} $
in the following way
\[
\Lambda_{\sigma,q}\left(  u,\mathbf{A}\right)  =\frac{E_{\sigma,q}%
(u,\mathbf{A})}{\sigma}=\frac{I(u,\mathbf{A})}{\sigma}+\frac{\sigma}%
{2K_{q}(u)},\text{ }\left(  u,\mathbf{A}\right)  \in\hat{H}^{1}\times\left(
\mathcal{D}^{1,2}\right)  ^{3},\text{ }u\neq0
\]
where
\begin{equation}
K_{q}(u)=\int(\left[  1-q{\Phi}_{u}\right]  )\,u^{2}.\label{qu}%
\end{equation}
In the following we shall always assume that the $W$ satisfies W1),W2),W3).
Firtst we state the following continuity lemma:

\begin{lemma}
\label{continuity} Let $u\in H^{1},$ then%
\[%
%TCIMACRO{\dint }%
%BeginExpansion
{\displaystyle\int}
%EndExpansion
(1-q\Phi_{u})u^{2}\rightarrow%
%TCIMACRO{\dint }%
%BeginExpansion
{\displaystyle\int}
%EndExpansion
u^{2}\text{ as }q\rightarrow0
\]

\end{lemma}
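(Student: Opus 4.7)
The plan is to reduce the statement to showing that $q\int \Phi_u u^2\,dx\to 0$, since
\[
\int(1-q\Phi_u)u^2\,dx - \int u^2\,dx = -q\int \Phi_u u^2\,dx.
\]
Thus everything hinges on an $O(q)$-control of $\int \Phi_u u^2\,dx$ for fixed $u\in H^1$. I would extract this control from the defining equation \eqref{b2} for $\Phi_u$, namely $-\Delta \Phi_u + q^2 u^2 \Phi_u = q u^2$, which supplies two independent pieces of information: the maximum principle bound \eqref{max}, $0\le \Phi_u \le 1/q$, and an energy identity obtained by testing against $\Phi_u$ itself.

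First I would multiply \eqref{b2} by $\Phi_u\in\mathcal{D}^{1,2}$ and integrate, giving the identity
\[
\int|\nabla \Phi_u|^2\,dx + q^2\int u^2 \Phi_u^2\,dx = q\int u^2 \Phi_u\,dx.
\]
Next, using Hölder's inequality with exponents $6/5$ and $6$ together with the Sobolev embedding $\mathcal{D}^{1,2}\hookrightarrow L^6$, I would estimate the right-hand side as
\[
q\int u^2\Phi_u\,dx \le q\,\|u\|_{L^{12/5}}^{2}\,\|\Phi_u\|_{L^6} \le C\,q\,\|u\|_{L^{12/5}}^{2}\,\|\nabla \Phi_u\|_{L^2}.
\]
Since $u\in H^1(\mathbb{R}^3)\hookrightarrow L^{12/5}(\mathbb{R}^3)$, the quantity $\|u\|_{L^{12/5}}$ is a finite constant depending only on $u$. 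Dropping the nonnegative term $q^2\int u^2\Phi_u^2$ from the energy identity yields $\|\nabla\Phi_u\|_{L^2}^2 \le C\,q\,\|u\|_{L^{12/5}}^2\,\|\nabla\Phi_u\|_{L^2}$, so that
\[
\|\nabla \Phi_u\|_{L^2} \le C\,q\,\|u\|_{L^{12/5}}^{2}.
\]
Substituting this bound back into the energy identity gives
\[
0 \le q\int u^2\Phi_u\,dx \le C^2\, q^2\, \|u\|_{L^{12/5}}^{4} \;+\; q^2 \int u^2\Phi_u^2\,dx.
\]
For the last term I would use the pointwise bound $0\le \Phi_u \le 1/q$ to write $q^2 u^2\Phi_u^2 \le q\,u^2\Phi_u$, so that $q^2\int u^2\Phi_u^2\,dx \le q\int u^2\Phi_u\,dx$; absorbing this into the left-hand side (or just iterating once more) gives $q\int u^2\Phi_u\,dx = O(q^2)$ as $q\to 0$.

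Since $u$ is fixed, this proves $q\int \Phi_u u^2\,dx\to 0$, hence the lemma. There is no real obstacle here: the only delicate point is checking that all estimates are uniform in $q$ on a neighborhood of $0$, which is automatic because the constants above depend only on the Sobolev embedding constant and on the fixed function $u\in H^1$.
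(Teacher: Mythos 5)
Your reduction and your main estimate coincide with the paper's own proof: both test equation (\ref{b2}) with $\Phi_u$, apply H\"older with exponents $6/5$ and $6$ together with the Sobolev embedding $\mathcal{D}^{1,2}\hookrightarrow L^{6}$, and arrive at $\Vert\nabla\Phi_u\Vert_{L^{2}}\le Cq\Vert u\Vert_{L^{12/5}}^{2}$. The problem is your final step. Write $A=q\int u^{2}\Phi_u$ and $B=q^{2}\int u^{2}\Phi_u^{2}$; the energy identity reads $\Vert\nabla\Phi_u\Vert_{L^{2}}^{2}+B=A$, so inserting the gradient bound gives $A\le C^{2}q^{2}\Vert u\Vert_{L^{12/5}}^{4}+B$. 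Your pointwise bound $0\le q\Phi_u\le 1$ yields $B\le A$ with constant exactly $1$, so ``absorbing'' $B$ into the left-hand side produces only $A-A\le C^{2}q^{2}\Vert u\Vert_{L^{12/5}}^{4}$, i.e. the vacuous statement $0\le C^{2}q^{2}\Vert u\Vert_{L^{12/5}}^{4}$, and iterating merely reproduces it. As written, this step does not bound $A$ at all.

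The gap is immediately repairable with ingredients you already have, and the repair is exactly what the paper does: feed the bound $\Vert\nabla\Phi_u\Vert_{L^{2}}\le Cq\Vert u\Vert_{L^{12/5}}^{2}$ (equivalently $\Vert\Phi_u\Vert_{L^{6}}\le C'q\Vert u\Vert_{L^{12/5}}^{2}$) back into the H\"older estimate for $A$ itself rather than into the energy identity:
\[
q\int u^{2}\Phi_u\,dx\le q\,\Vert u\Vert_{L^{12/5}}^{2}\,\Vert\Phi_u\Vert_{L^{6}}\le C'q^{2}\Vert u\Vert_{L^{12/5}}^{4}\rightarrow 0\quad\text{as }q\rightarrow 0 .
\]
With that one substitution your argument becomes identical to the one in the paper; in particular the maximum principle bound (\ref{max}) is not needed for this lemma.
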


\begin{proof}
Clearly it is enough to show that
\begin{equation}
q\int\Phi_{u}u^{2}\rightarrow0\text{ as }q\rightarrow0\label{fina}%
\end{equation}
Since $\Phi_{u}$ depends on $q$ a little work is needed to prove (\ref{fina}).
Since $\Phi_{u}$ solves (\ref{b2}), we have%
\[
\left\Vert \Phi_{u}\right\Vert _{\mathcal{D}^{1,2}}^{2}+q^{2}\int u^{2}%
\Phi_{u}^{2}=q\int u^{2}\Phi_{u}\leq
\]%
\begin{equation}
\leq q\left\Vert u\right\Vert _{L^{\frac{12}{5}}}^{2}\left\Vert \Phi
_{u}\right\Vert _{L^{6}}%
\end{equation}
and then, if $u\neq0,$ we have
\[
\frac{\left\Vert \Phi_{u}\right\Vert _{\mathcal{D}^{1,2}}^{2}}{\left\Vert
\Phi_{u}\right\Vert _{L^{6}}}\leq q\left\Vert u\right\Vert _{L^{\frac{12}{5}}%
}^{2}.
\]
So, since $\mathcal{D}^{1,2}$ is continuously embedded into $L^{6}$, we easily
get
\begin{equation}
\left\Vert \Phi_{u}\right\Vert _{\mathcal{D}^{1,2}}\leq c_{1}q\left\Vert
u\right\Vert _{L^{\frac{12}{5}}}^{2},
\end{equation}
where $c_{1}$ is a positive constant. Then we get
\[
q\int u^{2}\Phi_{u}\leq q\left\Vert u\right\Vert _{L^{\frac{12}{5}}}%
^{2}\left\Vert \Phi_{u}\right\Vert _{L^{6}}\leq c_{1}q^{2}\left\Vert
u\right\Vert _{L^{\frac{12}{5}}}^{4}.
\]
From which we deduce (\ref{fina}).
\end{proof}

\begin{lemma}
\label{au} There exist $\sigma,$ $\bar{q}>0,$ such that for all $0\leq
q<\bar{q}$ there exists $u\in\hat{H}_{r}^{1}$ such that
\[
\Lambda_{\sigma,q}(u,0)<1.
\]

\end{lemma}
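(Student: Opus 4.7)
The plan is to work with $\mathbf{A}=0$, choose a suitable cylindrically symmetric test function $u \in \hat H_r^1$ supported away from the axis $\Sigma$, and first handle the case $q=0$, then pass to small $q>0$ by Lemma \ref{continuity}.

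First I would simplify: with $\mathbf{A}=0$ and $m=1$, one has $|\ell\nabla\theta|^2 = \ell^2/r^2$ and $W(s)=\frac{1}{2}s^2+N(s)$, so
\[
I(u,0) = \frac{1}{2}\int\left(|\nabla u|^2 + \frac{\ell^2}{r^2}u^2 + u^2\right)dx + \int N(u)\,dx.
\]
At $q=0$ we have $\Phi_u=0$, hence $K_0(u)=\|u\|_{L^2}^2$. Thus $\Lambda_{\sigma,0}(u,0) = I(u,0)/\sigma + \sigma/(2\|u\|_{L^2}^2)$ is minimized over $\sigma>0$ at $\sigma_\ast=\|u\|_{L^2}\sqrt{2I(u,0)}$, with minimum value $\sqrt{2I(u,0)/\|u\|_{L^2}^2}$. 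This is strictly less than $1$ iff
\[
\int\left(|\nabla u|^2 + \frac{\ell^2}{r^2}u^2\right)dx + 2\int N(u)\,dx < 0. \tag{$\ast$}
\]

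Next I would exhibit $u\in\hat H_r^1$ satisfying $(\ast)$. By W3) (equivalently (\ref{pec})) there exists $s_0>0$ with $N(s_0)<0$. For large $R$ I take a cylindrically symmetric cutoff $u_R\in\mathcal{D}_r$ with $u_R\equiv s_0$ on the cylindrical shell $\Omega_R=\{R\le r\le 2R,\ |x_3|\le R\}$, with $0\le u_R\le s_0$, with support contained in a fixed unit neighborhood of $\Omega_R$, and with $|\nabla u_R|\le C$. Then the bulk term satisfies $\int N(u_R)\,dx \le N(s_0)|\Omega_R| + O(R^2) = -c R^3 + O(R^2)$ for some $c>0$; the gradient term satisfies $\int|\nabla u_R|^2 dx = O(R^2)$ (only the transition layer contributes); and since $u_R$ is supported in $\{r\ge R\}$ we have
\[
\int\frac{\ell^2}{r^2}u_R^2\,dx \le \ell^2 s_0^2\int_{-R-1}^{R+1}\!\!\int_0^{2\pi}\!\!\int_{R}^{2R+1}\frac{1}{r^2}\,r\,dr\,d\theta\,dx_3 = O(\ell^2 R).
\]
So the left-hand side of $(\ast)$ is $-2cR^3 + O(R^2)$, which is negative for $R$ sufficiently large. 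Since $u_R$ vanishes near $\Sigma$, clearly $u_R\in\hat H_r^1$. Fix such an $R$, call the resulting function $u$, and set $\sigma=\sigma_\ast(u)$; then $\Lambda_{\sigma,0}(u,0)<1$.

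Finally, to extend to $q>0$, observe that $I(u,0)$ and the choice of $u$, $\sigma$ are independent of $q$, while the $q$-dependence of $\Lambda_{\sigma,q}(u,0)$ enters only through $K_q(u)$. By Lemma \ref{continuity}, $K_q(u)\to \|u\|_{L^2}^2 = K_0(u)$ as $q\to 0^+$, so $\Lambda_{\sigma,q}(u,0)\to \Lambda_{\sigma,0}(u,0)<1$. Hence there exists $\bar q>0$ such that $\Lambda_{\sigma,q}(u,0)<1$ for all $0\le q<\bar q$, as required. The only real difficulty is handling the singular weight $\ell^2/r^2$ in $(\ast)$; placing the support of $u$ in the region $\{r\ge R\}$ with $R$ large makes that weight contribute only $O(R)$, which is negligible compared to the bulk gain $O(R^3)$ coming from $N(s_0)<0$.
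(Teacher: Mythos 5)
Your proof is correct and follows essentially the same strategy as the paper's: a cylindrically symmetric test function equal to $s_0$ on a region at distance of order $R$ from the axis $\Sigma$, so that the bulk gain from $N(s_0)<0$ scales like $R^{3}$ while the gradient and $\ell^{2}/r^{2}$ terms contribute only $O(R^{2})$ and $O(R)$, combined with Lemma \ref{continuity} to pass to small $q>0$. The only cosmetic differences are your use of a cylindrical shell in place of the paper's solid torus $T_{\lambda,\lambda/2}$, and your exact optimization over $\sigma$ where the paper simply takes $\sigma=\int u_{\lambda}^{2}$.
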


\begin{proof}
For $0<\mu<\lambda$ we set:
\[
T_{\lambda,\mu}=\left\{  \left(  r,x_{3}\right)  :(r-\lambda)^{2}+x_{3}{}%
^{2}\leq\mu\right\}
\]
and, for $\lambda>2,$ we consider a smooth function $u_{\lambda}$ with
cylindrical symmetry such that
\[
u_{\lambda}(r,x_{3})=\left\{
\begin{array}
[c]{cc}%
s_{0} & if\;\;\left(  r,x_{3}\right)  \in T_{\lambda,\lambda/2}\\
& \\
0 & if\;\;\left(  r,x_{3}\right)  \notin T_{\lambda,\lambda/2+1}%
\end{array}
\right.
\]
where $s_{0}$ is such that $N(s_{0})<0$ (see (\ref{pec}))$.$ Moreover we may
assume that
\[
\left\vert \nabla u_{\lambda}\left(  r,x_{3}\right)  \right\vert
\leq2\;\text{for }\left(  r,x_{3}\right)  \in T_{\lambda,\lambda
/2+1}\backslash T_{\lambda,\lambda/2}.
\]

We have that for all $\sigma\neq0$
\begin{align*}
\Lambda_{\sigma,q}(u_{\lambda},0)  & =\frac{1}{\sigma}\int\left[  \frac{1}%
{2}\left\vert \nabla u_{\lambda}\right\vert ^{2}+\frac{\ell^{2}}{2}%
\frac{u_{\lambda}^{2}}{r^{2}}+W(u_{\lambda})\right]  dx+\frac{\sigma}%
{2K_{q}(u_{\lambda})}\\
& =\frac{\int\left[  \left\vert \nabla u_{\lambda}\right\vert ^{2}+\frac
{\ell^{2}u_{\lambda}^{2}}{r^{2}}\right]  dx}{2\sigma}+\frac{\int u_{\lambda
}^{2}}{2\sigma}+\frac{\int N(u_{\lambda})dx}{\sigma}+\frac{\sigma}%
{2K_{q}(u_{\lambda})}.
\end{align*}
Now take
\[
\sigma=\sigma_{\lambda}=\int u_{\lambda}^{2}%
\]
in this case we get
\begin{equation}
\Lambda_{\sigma_{\lambda},q}(u_{\lambda},0)=\frac{1}{2}+\frac{\sigma_{\lambda
}}{2K_{q}(u_{\lambda})}+\frac{\int\left[  \left\vert \nabla u_{\lambda
}\right\vert ^{2}+\frac{\ell^{2}u_{\lambda}^{2}}{r^{2}}\right]  dx}{2\int
u_{\lambda}^{2}}+\frac{\int N(u_{\lambda})dx}{\int u_{\lambda}^{2}%
}.\label{lan}%
\end{equation}

By a direct computation we have that
\begin{equation}
\int\left\vert \nabla u_{\lambda}\right\vert ^{2}\leq c_{1}meas(T_{\lambda
,\lambda/2+1}\backslash T_{\lambda,\lambda/2})=c_{2}\lambda^{2}\label{al}%
\end{equation}%
\begin{equation}
\int\frac{u_{\lambda}^{2}}{r^{2}}\leq\frac{c_{3}}{\lambda^{2}}meas(T_{\lambda
,\lambda/2+1})=c_{4}\lambda\label{all}%
\end{equation}%
\begin{equation}
\int u_{\lambda}^{2}\geq c_{5}meas(T_{\lambda,\lambda/2+1})=c_{6}\lambda
^{3}.\label{alll}%
\end{equation}

So that
\begin{equation}
\frac{\int\left[  \left\vert \nabla u_{\lambda}\right\vert ^{2}+\frac{\ell
^{2}u_{\lambda}^{2}}{r^{2}}\right]  dx}{2\int u_{\lambda}^{2}}=O\left(
\frac{1}{\lambda}\right)  .\label{pin}%
\end{equation}

Moreover
\[
\int N(u_{\lambda})dx\leq N(s_{0})meas(T_{\lambda,\lambda/2})+c_{7}%
meas(T_{\lambda,\lambda/2+1}\backslash T_{\lambda,\lambda/2})=
\]%
\begin{equation}
\leq c_{8}N(s_{0})\lambda^{3}+c_{9}\lambda^{2}.\label{lumi}%
\end{equation}
From (\ref{lumi}) and (\ref{alll}) we get
\begin{equation}
\frac{\int N(u_{\lambda})dx}{\int u_{\lambda}^{2}}\leq c_{10}\frac{N(s_{0}%
)}{s_{0}^{2}}+O\left(  \frac{1}{\lambda}\right)  =g(s_{0},\lambda
).\label{cont}%
\end{equation}
From (\ref{lan}), (\ref{pin}) and (\ref{cont}) we get
\begin{equation}
\Lambda_{\sigma_{\lambda},q}(u_{\lambda},0)\leq\frac{1}{2}+\frac
{\sigma_{\lambda}}{2K_{q}(u_{\lambda})}+g(s_{0},\lambda).\label{anc}%
\end{equation}

Since $N(s_{0})<0,$ we can take $\lambda_{0}$ so large that
\begin{equation}
g(s_{0},\lambda_{0})<0.\label{meno}%
\end{equation}

Now we take
\[
\sigma=\sigma_{\lambda_{0}}=\int u_{\lambda_{0}}^{2}\text{, and }%
u=u_{\lambda_{0}}.
\]
Now, by Lemma \ref{continuity}, we have
\[
K_{q}(u)\rightarrow K_{0}(u)=\sigma\text{ for }q\rightarrow0.
\]
So%
\begin{equation}
\frac{\sigma}{2K_{q}(u)}\rightarrow\frac{1}{2}\text{ for }q\rightarrow
0.\label{lam}%
\end{equation}
Then, by (\ref{anc}), (\ref{meno}) and (\ref{lam}), there is $\bar{q}>0$ so
small that, for all $\ 0\leq q<\bar{q},$ we have%
\[
\Lambda_{\sigma,q}(u,0)\leq\frac{1}{2}+\frac{\sigma}{2K_{q}(u)}+g(s_{0}%
,\lambda_{0})<1.
\]

\end{proof}

Now the following a priori estimate on the minimizing sequences can be obtained

\begin{lemma}
\label{tec}Any minimizing sequence $\left(  u_{n},\mathbf{A}_{n}\right)
\subset V$ for $E_{\sigma,q}\mid_{V}$ is bounded in $\hat{H}^{1}\times\left(
\mathcal{D}^{1,2}\right)  ^{3}.$
\end{lemma}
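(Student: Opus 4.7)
The plan is to exploit the positivity of $E_{\sigma,q}$, which forces each summand of $I$ to be bounded along a minimizing sequence. First I would observe that, since $(u_n,\mathbf{A}_n)$ is minimizing, $E_{\sigma,q}(u_n,\mathbf{A}_n)\le M$ for some $M$. Assumption W1) gives $W\ge 0$, so every summand of $I(u_n,\mathbf{A}_n)$ is nonnegative; and by the maximum principle (\ref{max}) we have $1-q\Phi_{u_n}\ge 0$, whence $K_q(u_n)\ge 0$ and $\sigma^2/(2K_q(u_n))\ge 0$. Thus each of $\int|\nabla u_n|^2$, $\int|\nabla\times\mathbf{A}_n|^2$, $\int|\ell\nabla\theta-q\mathbf{A}_n|^2u_n^2$ and $\int W(u_n)$ is uniformly bounded. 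Lemma \ref{div} then upgrades the curl bound to $\|\mathbf{A}_n\|_{(\mathcal{D}^{1,2})^3}\le C$, and Sobolev's embedding yields $\|u_n\|_{L^6},\|\mathbf{A}_n\|_{L^6}\le C$.

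The delicate step is obtaining a bound on $\int u_n^2$, since W1) alone does not give a quadratic lower bound on $W$. I would split $\mathbb{R}^3$ into three pieces according to the pointwise size of $u_n$: on $\{|u_n|\le\delta\}$, W2) (i.e. $W''(0)=1$) gives $W(s)\ge(1-\varepsilon)s^2/2$; on $\{|u_n|\ge R\}$, the modification made right after Lemma \ref{tecnico} (which forces (\ref{gratis})) ensures $W(s)\ge cs^2$ for large $|s|$; so the bound on $\int W(u_n)$ controls $\int u_n^2$ on $\{|u_n|\le\delta\}\cup\{|u_n|\ge R\}$. On the intermediate region $\{\delta<|u_n|<R\}$, Chebyshev's inequality combined with the $L^6$ bound yields
\[
|\{|u_n|>\delta\}|\le\delta^{-6}\|u_n\|_{L^6}^6\le C\delta^{-6},
\]
so that $\int_{\delta<|u_n|<R}u_n^2\le R^2 C\delta^{-6}$. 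Summing the three contributions gives $\|u_n\|_{L^2}\le C$. I expect this partition-by-level-set argument to be the main obstacle, since it is the only place where W1) and W2) need to be used jointly with the growth condition (\ref{gratis}) to compensate for the fact that $W$ may vanish on an intermediate range.

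Finally, to handle the weighted term $\ell^2\int u_n^2/r^2$ required by the $\hat H^1$ norm (only nontrivial when $\ell\ne 0$, the case $\ell=0$ being already complete), I would exploit the elementary inequality $|\ell\nabla\theta-q\mathbf{A}_n|^2\ge\tfrac{1}{2}|\ell\nabla\theta|^2-q^2|\mathbf{A}_n|^2$ and integrate against $u_n^2$:
\[
\frac{\ell^2}{2}\int\frac{u_n^2}{r^2}\le\int|\ell\nabla\theta-q\mathbf{A}_n|^2u_n^2+q^2\int|\mathbf{A}_n|^2u_n^2.
\]
The first term on the right is already bounded. The second is estimated by Hölder, $\int|\mathbf{A}_n|^2u_n^2\le\|\mathbf{A}_n\|_{L^6}^2\|u_n\|_{L^3}^2$, and $\|u_n\|_{L^3}$ is controlled by interpolation between the just-obtained $L^2$ bound and the Sobolev $L^6$ bound. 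Putting everything together gives boundedness of $(u_n,\mathbf{A}_n)$ in $\hat H^1\times(\mathcal{D}^{1,2})^3$, as required.
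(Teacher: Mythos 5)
Your overall strategy coincides with the paper's: the positivity of every summand of $E_{\sigma,q}$ (via W1) and (\ref{max})) gives uniform bounds on $\int|\nabla u_n|^2$, $\int|\nabla\times\mathbf{A}_n|^2=\int|\nabla\mathbf{A}_n|^2$ and $\int W(u_n)$, hence on the $L^6$ norms; the $L^2$ bound then comes from a level-set decomposition using $W''(0)=1$ near zero and the $L^6$ bound away from zero; finally the weighted term $\ell^2\int u_n^2/r^2$ is extracted. There is, however, one step that would fail as written: on $\{|u_n|\ge R\}$ you invoke the modification made after Lemma \ref{tecnico} to claim $W(s)\ge cs^2$ for large $s$. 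That truncation is performed only when (\ref{gratis}) fails; if $W$ already satisfies (\ref{gratis}) (e.g. $W(s)=\tfrac12 s^2e^{-s^2}$, which meets W1), W2), W3)), no modification is made and $W$ need not dominate $s^2$ at infinity --- it may even tend to $0$ --- so the bound on $\int W(u_n)$ gives you nothing on that region. The repair is immediate and makes your third region superfluous: either note $\int_{\{|u_n|\ge R\}}u_n^2\le R^{-4}\int u_n^6$, or, as the paper does, treat all of $\Omega_n=\{|u_n|>\delta\}$ at once via H\"older, $\int_{\Omega_n}u_n^2\le\bigl(\int u_n^6\bigr)^{1/3}\mathrm{meas}(\Omega_n)^{2/3}$, with $\mathrm{meas}(\Omega_n)$ controlled by Chebyshev and the $L^6$ bound. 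So the gap is real but cosmetic.

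Your treatment of the weighted term is genuinely different from, and somewhat cleaner than, the paper's. You observe that $\int|\ell\nabla\theta-q\mathbf{A}_n|^2u_n^2$ is itself one of the nonnegative, hence bounded, summands of $I$, and then peel off $\tfrac12\ell^2|\nabla\theta|^2=\tfrac12\ell^2/r^2$ by the elementary inequality $|a-b|^2\ge\tfrac12|a|^2-|b|^2$, paying only $q^2\int|\mathbf{A}_n|^2u_n^2\le q^2\|\mathbf{A}_n\|_{L^6}^2\|u_n\|_{L^3}^2$, which is bounded by interpolation between $L^2$ and $L^6$. The paper instead expands $|\ell\nabla\theta-q\mathbf{A}_n|^2$, estimates the cross term by a weighted Cauchy--Schwarz calibrated so that a fraction $\tfrac18\|u_n\|_{\hat H^1_r}^2$ can be absorbed into the left-hand side, and reaches the same conclusion (see (\ref{co})--(\ref{quu})). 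Both routes are correct; yours avoids the absorption bookkeeping. With the large-$u$ region repaired as above, your proof is complete.
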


\begin{proof}
Let $\left(  u_{n},\mathbf{A}_{n}\right)  \subset V$ be a minimizing sequence
for $E_{\sigma,q}\mid_{V}.$ Clearly
\[
\left\Vert \mathbf{A}_{n}\right\Vert _{\left(  \mathcal{D}^{1,2}\right)  ^{3}%
}\text{ is bounded.}%
\]
So it remain to prove that
\begin{equation}
\left\Vert u_{n}\right\Vert _{\hat{H}_{r}^{1}}\text{ is bounded.}\label{bou}%
\end{equation}

To this end we shall first show that
\begin{equation}
\left\Vert u_{n}\right\Vert _{L^{2}}\text{ is bounded.}\label{bouu}%
\end{equation}

Since $\left(  u_{n},\mathbf{A}_{n}\right)  $ is a minimizing sequence for
$E_{\sigma,q}\mid_{V}$ we get
\begin{equation}
\int W(u_{n})\text{ and }\int\left\vert \nabla u_{n}\right\vert ^{2}\text{ are
bounded.}\label{lim}%
\end{equation}
Then we have also that
\begin{equation}
\int u_{n}^{6}\text{ is bounded.}\label{limm}%
\end{equation}

Let $\varepsilon>0$ and set
\[
\Omega_{n}=\left\{  x\in\mathbb{R}^{3}:\left\vert u_{n}(x)\right\vert
>\varepsilon\right\}  \text{ and }\Omega_{n}^{c}=\mathbb{R}^{3}\backslash
\Omega_{n}.
\]
By (\ref{lim}) and since $W\geq0$ we have
\begin{equation}
\int_{\text{ }\Omega_{n}^{c}}W(u_{n})\text{ is bounded .}\label{llim}%
\end{equation}

By $W_{2})$ we can write
\[
W(s)=\frac{1}{2}s^{2}+\circ(s^{2})\text{.}%
\]
Then, if $\varepsilon$ is small enough, there is a constant $c>0$ such that
\begin{equation}
\int_{\text{ }\Omega_{n}^{c}}W(u_{n})\geq c\int_{\Omega_{n}^{c}}u_{n}%
^{2}.\label{ma}%
\end{equation}
By (\ref{llim}) and (\ref{ma}) we get that
\begin{equation}
\int_{\Omega_{n}^{c}}u_{n}^{2}\text{ is bounded.}\label{pi}%
\end{equation}
On the other hand
\begin{equation}
\int_{\Omega_{n}}u_{n}^{2}\leq\left(  \int_{\Omega_{n}}u_{n}^{6}\right)
^{\frac{1}{3}}meas(\Omega_{n})^{\frac{2}{3}}.\label{u}%
\end{equation}

By (\ref{limm}) we have that
\begin{equation}
meas(\Omega_{n})\text{ is bounded.}\label{uu}%
\end{equation}

By (\ref{u}), (\ref{uu}) and again by (\ref{limm}) we get
\begin{equation}
\int_{\Omega_{n}}u_{n}^{2}\text{ is bounded.}\label{ff}%
\end{equation}
So (\ref{bouu}) follows from (\ref{pi}) and (\ref{ff}).

Let us finally prove (\ref{bou}).

Clearly
\[
E_{\sigma,q}\left(  u_{n},\mathbf{A}_{n}\right)  \geq I\left(  u_{n}%
,\mathbf{A}_{n}\right)  \geq
\]%
\[
\frac{1}{2}\int\left(  \left\vert \nabla u_{n}\right\vert ^{2}+\left\vert
\nabla\mathbf{A}_{n}\right\vert ^{2}+q^{2}\left\vert \mathbf{A}_{n}\right\vert
^{2}u_{n}^{2}+\ell^{2}\frac{u_{n}^{2}}{r^{2}}-2q\frac{\ell}{r}\left\vert
\mathbf{A}_{n}\right\vert \left\vert u_{n}\right\vert ^{2}\right)  dx\geq
\]%
\begin{equation}
\frac{1}{2}\left\Vert u_{n}\right\Vert _{\hat{H}_{r}^{1}}^{2}-q%
%TCIMACRO{\dint }%
%BeginExpansion
{\displaystyle\int}
%EndExpansion
\frac{\ell}{r}\left\vert \mathbf{A}_{n}\right\vert \left\vert u_{n}\right\vert
^{2}-\sup\left\Vert u_{n}\right\Vert _{L^{2}}\label{co}%
\end{equation}

Also we have%
\[%
%TCIMACRO{\dint }%
%BeginExpansion
{\displaystyle\int}
%EndExpansion
\frac{q\ell}{r}\left\vert \mathbf{A}_{n}\right\vert \left\vert u_{n}%
\right\vert ^{2}\leq\frac{1}{2}%
%TCIMACRO{\dint }%
%BeginExpansion
{\displaystyle\int}
%EndExpansion
\left(  4q^{2}\ell^{2}\left\vert \mathbf{A}_{n}\right\vert ^{2}+\frac
{1}{4r^{2}}\right)  \left\vert u_{n}\right\vert ^{2}\leq
\]%
\begin{equation}
\frac{1}{8}\left\Vert u_{n}\right\Vert _{\hat{H}_{r}^{1}}^{2}+2q^{2}\ell^{2}%
%TCIMACRO{\dint }%
%BeginExpansion
{\displaystyle\int}
%EndExpansion
\left\vert \mathbf{A}_{n}\right\vert ^{2}\left\vert u_{n}\right\vert
^{2}.\label{coco}%
\end{equation}

Since $E_{\sigma,q}\left(  u_{n},\mathbf{A}_{n}\right)  $ is bounded, by
(\ref{co}) and (\ref{coco}) we deduce that%
\begin{equation}
c_{1}\geq\left(  \frac{1}{2}-\frac{1}{8}\right)  \left\Vert u_{n}\right\Vert
_{\hat{H}_{r}^{1}}^{2}-2q^{2}\ell^{2}%
%TCIMACRO{\dint }%
%BeginExpansion
{\displaystyle\int}
%EndExpansion
\left\vert \mathbf{A}_{n}\right\vert ^{2}\left\vert u_{n}\right\vert
^{2}.\label{apo}%
\end{equation}
Here $c_{1},c_{2}$ will denote suitable constants.

Now, since $\left\Vert u_{n}\right\Vert _{L^{2}}$ and $\left\Vert
u_{n}\right\Vert _{L^{6}\text{ }}$are bounded, also $\left\Vert u_{n}%
\right\Vert _{L^{3}\text{ }}$is bounded.

Then, by using also the boundeness of $\left\Vert \mathbf{A}_{n}\right\Vert
_{L^{6}},$ we get
\begin{equation}%
%TCIMACRO{\dint }%
%BeginExpansion
{\displaystyle\int}
%EndExpansion
\left\vert \mathbf{A}_{n}\right\vert ^{2}\left\vert u_{n}\right\vert ^{2}%
\leq\left(  \left\Vert \mathbf{A}_{n}\right\Vert _{L^{6}}\right)  ^{\frac
{1}{3}}\left(  \left\Vert u_{n}\right\Vert _{L^{3}\text{ }}\right)  ^{\frac
{2}{3}}\leq c_{2}.\label{quu}%
\end{equation}
From (\ref{apo}) and (\ref{quu}) we deduce the boundeness of $\left\Vert
u_{n}\right\Vert _{\hat{H}_{r}^{1}}^{2}$.
\end{proof}

By Lemma \ref{tec} any minimizing sequence $U_{n}:=\left(  u_{n}%
,\mathbf{A}_{n}\right)  \subset V$ of $E_{\sigma,q}\mid_{V}$ weakly converges
(up to a subsequence). Observe that $E_{\sigma,q}$ is invariant for
translations along the $x_{3}$-axis, namely for $U\in V$ and $L\in\mathbb{R}$
we have
\[
E_{\sigma,q}(T_{L}U)=E_{\sigma,q}(U)
\]
where
\begin{equation}
T_{L}\left(  U\right)  \left(  x_{1},x_{2},x_{3}\right)  =U\left(  x_{1}%
,x_{2},x_{3}+L\right)  .\label{trasla}%
\end{equation}

As consequence of this invariance we have that $\left(  u_{n},\mathbf{A}%
_{n}\right)  $ does not contain in general a (strongly) convergent
subsequence. So we argue as follows: we prove that for suitable $\sigma,$ $q $
there exists a minimizing sequence $\left(  u_{n},\mathbf{A}_{n}\right)  $ of
$E_{\sigma,q}\mid_{V}$ which, up to translations along the $x_{3}$-direction,
weakly converges to a non trivial limit $\left(  u_{0},\mathbf{A}_{0}\right)
.$ This limit will be actually a critical point of $E_{\sigma_{0},q}$ for some
charge $\sigma_{0}.$

To follow the above program we first prove the following Lemma

\begin{lemma}
\label{mi}Let $U_{n}=\left(  u_{n},\mathbf{A}_{n}\right)  \subset V$ be a
minimizing sequence of $E_{\sigma,q}\mid_{V},$ $\sigma>0.$ Then there exist
$\delta,M>0$ such that%
\[
\delta\leq\omega_{n}\leq M
\]
where
\[
\omega_{n}=\frac{\sigma}{K_{q}(u_{n})}.
\]

\end{lemma}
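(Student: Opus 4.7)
The plan is to exploit the identity $\omega_n = \sigma/K_q(u_n)$ by bounding $K_q(u_n)$ both above and below away from zero, using the decomposition $E_{\sigma,q}(u,\mathbf{A}) = I(u,\mathbf{A}) + \sigma^2/(2K_q(u))$ together with the a priori bounds already established in Lemma \ref{tec} and the maximum-principle estimate (\ref{max}).

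First I would derive the lower bound $\omega_n \geq \delta$, equivalently an upper bound for $K_q(u_n)$. When $q>0$, estimate (\ref{max}) gives $0 \leq 1-q\Phi_{u_n} \leq 1$ pointwise, and when $q=0$ the equation (\ref{b2}) forces $\Phi_{u_n} \equiv 0$. In either case
\[
K_q(u_n) = \int (1-q\Phi_{u_n})\,u_n^2\,dx \leq \int u_n^2\,dx \leq \|u_n\|_{\hat H^1_r}^2,
\]
and by Lemma \ref{tec} this is bounded by some constant $C'$. Hence $\omega_n = \sigma/K_q(u_n) \geq \sigma/C' =: \delta > 0$.

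Next I would derive the upper bound $\omega_n \leq M$, equivalently a positive lower bound for $K_q(u_n)$. Since $(u_n,\mathbf{A}_n)$ is a minimizing sequence for $E_{\sigma,q}|_V$, there is a constant $C$ with $E_{\sigma,q}(u_n,\mathbf{A}_n) \leq C$ for all $n$. Observing that every term in
\[
I(u,\mathbf{A}) = \tfrac{1}{2}\int\left(|\nabla u|^2 + |\nabla\times\mathbf{A}|^2 + |\ell\nabla\theta - q\mathbf{A}|^2 u^2\right)dx + \int W(u)\,dx
\]
is nonnegative thanks to W1), we have $I(u_n,\mathbf{A}_n) \geq 0$. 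Consequently
\[
\frac{\sigma^2}{2K_q(u_n)} \leq E_{\sigma,q}(u_n,\mathbf{A}_n) \leq C,
\]
which yields $K_q(u_n) \geq \sigma^2/(2C) > 0$ and therefore $\omega_n \leq 2C/\sigma =: M$.

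Putting the two estimates together gives the desired two-sided bound $\delta \leq \omega_n \leq M$. The only delicate point, hardly an obstacle, is to remark that the argument uses $\sigma>0$ in an essential way (both to ensure that $E_{\sigma,q}$ is nontrivial and to pass from bounds on $K_q$ to bounds on $\omega_n$), and that the bound $0 \leq \Phi_{u_n} \leq 1/q$ from (\ref{max}) must be supplemented by the observation $\Phi_u = 0$ when $q = 0$ so that the case $q=0$ is covered uniformly.
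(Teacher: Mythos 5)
Your proof is correct and follows essentially the same route as the paper: the upper bound on $K_q(u_n)$ comes from $0\leq 1-q\Phi_{u_n}\leq 1$ together with the $L^2$-bound from Lemma \ref{tec}, and the lower bound on $K_q(u_n)$ comes from $I\geq 0$ and the boundedness of $E_{\sigma,q}(u_n,\mathbf{A}_n)$ along a minimizing sequence. Your explicit treatment of the case $q=0$ (where $\Phi_u\equiv 0$) is a small but welcome addition that the paper leaves implicit.
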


\begin{proof}
\bigskip\ Since $\left(  u_{n},\mathbf{A}_{n}\right)  \subset V$ is a
minimizing sequence of the functional $E_{\sigma,q}\mid_{V}$ defined by
\[
E_{\sigma,q}\left(  u,\mathbf{A}\right)  =I(u,\mathbf{A})+\frac{\sigma^{2}%
}{2K_{q}(u)},
\]
we have that for some constant $c_{1}>0$
\begin{equation}
c_{1}\leq K_{q}(u_{n}).\label{s}%
\end{equation}

Also for some constant $c_{2}>0$ we have
\begin{equation}
K_{q}(u_{n})\text{ }\leq c_{2}.\label{ss}%
\end{equation}

In fact, arguing by contradiction, we assume that, up to a subsequence%
\[
K_{q}(u_{n})=\int(\left[  1-q{\Phi}_{u_{n}}\right]  )\,u_{n}^{2}%
\rightarrow\infty,
\]
then by (\ref{max}) also we get%
\[
\int\,u_{n}^{2}\rightarrow\infty
\]
contradicting (\ref{bouu}).

Finally the conclusion immediately follows from (\ref{s}) and (\ref{ss}).
\end{proof}

Now we shall prove the following proposition

\begin{proposition}
\label{on} There exist $\sigma,$ $\bar{q}>0$ such that for all $0\leq
q<\bar{q},$ for any minimizing sequence $\left(  u_{n},\mathbf{A}_{n}\right)
\subset V$ of $E_{\sigma,q}\mid_{V}$ we have
\[
\left\Vert u_{n}\right\Vert _{L^{3}}\geq c>0\text{ for }n\text{ large.}%
\]

\end{proposition}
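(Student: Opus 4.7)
The plan is to argue by contradiction: suppose that, along a subsequence, $\|u_n\|_{L^3} \to 0$, and derive a contradiction with Lemma \ref{au}. Let $\sigma,\bar{q}$ be as in Lemma \ref{au}, fix $q\in[0,\bar{q})$, and set $\Lambda^* := \inf\{\Lambda_{\sigma,q}(u,\mathbf{A}) : (u,\mathbf{A})\in V,\, u\neq 0\}$; the lemma (applied with $\mathbf{A}=0$) yields $\Lambda^* < 1$.

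The first step is a lower bound on $E_{\sigma,q}(u_n,\mathbf{A}_n)$ in which the only term that can be small is $\|u_n\|_{L^3}$. Since $W\geq 0$ and all the gradient, magnetic, and cross terms in $I(u,\mathbf{A})$ are nonnegative, I discard them to obtain
\[
E_{\sigma,q}(u_n,\mathbf{A}_n) \;\geq\; \int W(u_n)\,dx \;+\; \frac{\sigma^2}{2K_q(u_n)}.
\]
Writing $W(s) = \tfrac{1}{2}s^2 + N(s)$ and using W2), for any $\varepsilon\in(0,\tfrac{1}{2})$ I pick $\delta_\varepsilon>0$ so that $|N(s)|\leq\varepsilon s^2$ for $|s|\leq\delta_\varepsilon$. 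Then $W(s)\geq(\tfrac{1}{2}-\varepsilon)s^2$ on $\{|s|\leq\delta_\varepsilon\}$ while $W\geq 0$ elsewhere; combined with the Chebyshev-type bound $u_n^2\chi_{\{|u_n|>\delta_\varepsilon\}} \leq \delta_\varepsilon^{-1}|u_n|^3$ this yields
\[
\int W(u_n)\,dx \;\geq\; \Bigl(\tfrac{1}{2}-\varepsilon\Bigr)\int u_n^2\,dx \;-\; C_\varepsilon\|u_n\|_{L^3}^3,\qquad C_\varepsilon := (1/2-\varepsilon)/\delta_\varepsilon.
\]

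Next I invoke \eqref{max}: $0\leq q\Phi_{u_n}\leq 1$, whence $K_q(u_n) \leq \int u_n^2$ and $\tfrac{\sigma^2}{2K_q(u_n)} \geq \tfrac{\sigma^2}{2\int u_n^2}$. AM--GM applied to the two positive terms then gives
\[
\Bigl(\tfrac{1}{2}-\varepsilon\Bigr)\int u_n^2\,dx + \frac{\sigma^2}{2\int u_n^2\,dx} \;\geq\; \sigma\sqrt{1-2\varepsilon},
\]
so that
\[
E_{\sigma,q}(u_n,\mathbf{A}_n) \;\geq\; \sigma\sqrt{1-2\varepsilon} \;-\; C_\varepsilon\|u_n\|_{L^3}^3.
\]

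To close the argument, choose $\varepsilon>0$ small enough that $\sqrt{1-2\varepsilon} > \Lambda^*$, which is possible because $\Lambda^*<1$. Since $(u_n,\mathbf{A}_n)$ is minimizing, $E_{\sigma,q}(u_n,\mathbf{A}_n) \to \Lambda^*\sigma$; letting $n\to\infty$ in the previous display I obtain
\[
\liminf_{n\to\infty}\|u_n\|_{L^3}^3 \;\geq\; \frac{\sigma}{C_\varepsilon}\bigl(\sqrt{1-2\varepsilon} - \Lambda^*\bigr) \;>\; 0,
\]
contradicting $\|u_n\|_{L^3}\to 0$. The main obstacle is that a minimizing sequence cannot be forced to vanish in $L^6$ or in the $H^1$ seminorm, so a crude estimate like $|N(s)|\leq\varepsilon s^2 + C_\varepsilon s^6$ would leave an uncontrolled residue; the key trick is to use the positivity $W\geq 0$ only where the quadratic approximation dominates (on $\{|u_n|\leq\delta_\varepsilon\}$) and to trade the complementary set via Chebyshev, so that the error appears precisely as $\|u_n\|_{L^3}^3$—the very quantity we wish to bound from below.
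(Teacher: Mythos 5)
Your proof is correct, and its skeleton coincides with the paper's: both arguments exploit the fact that, by Lemma \ref{au}, the infimum of $\Lambda_{\sigma,q}$ over $V$ is strictly below $1$, while the ``quadratic part'' of the energy alone already forces the ratio up to (essentially) $1$ via the AM--GM inequality between $\int u_n^2/(2\sigma)$ and $\sigma/(2\int u_n^2)$ (together with $K_q(u_n)\leq\int u_n^2$), so the deficit must be paid by the negative part of $N$, which is in turn controlled by $\|u_n\|_{L^3}^3$. Where you genuinely differ is in how that last control is obtained. The paper keeps the full coefficient $\tfrac12$ on the quadratic term, deduces $\int N(u_n)\,dx\leq-\delta\sigma$, and then invokes the pointwise bound $N(s)\geq-bs^3$ ``by the assumptions on $W$''. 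That bound is not literally a consequence of W1)--W3): W2) only gives $N(s)=o(s^2)$ as $s\to0$, which does not imply $N(s)\geq-bs^3$ near the origin (consider $N(s)=-s^2/\left\vert\log s\right\vert$, which is compatible with W1)--W3)), so the paper's final step implicitly assumes third-order vanishing of $N$ at $0$. Your $\varepsilon$-splitting --- using $W\geq(\tfrac12-\varepsilon)s^2$ only on $\{\left\vert u_n\right\vert\leq\delta_\varepsilon\}$ and converting the complementary region into $\|u_n\|_{L^3}^3$ by Chebyshev --- needs only $N(s)=o(s^2)$ and $W\geq0$, and the resulting loss in the AM--GM constant ($\sigma\sqrt{1-2\varepsilon}$ instead of $\sigma$) is harmless precisely because the strict inequality $\Lambda^*<1$ leaves room to absorb it. So your route is slightly longer but rests on strictly weaker pointwise information about $N$; it also delivers a lower bound on $\liminf_n\|u_n\|_{L^3}$ that is explicit and uniform over all minimizing sequences, and in fact needs no contradiction argument at all.
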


\begin{proof}
Let $\sigma$ and $q$ be chosen as required in Lemma \ref{au}. Now let $\left(
u_{n},\mathbf{A}_{n}\right)  \subset V$ be a minimizing sequence of
$E_{\sigma,q}$ and hence of $\Lambda_{\sigma,q}$. Then by Lemma\ref{au} we
get
\begin{equation}
\Lambda_{\sigma,q}(u_{n},\mathbf{A}_{n})\leq1-\delta,\;\delta>0\label{mil}%
\end{equation}
Then we have also%

\[
\frac{\int\left[  \left\vert \nabla u_{n}\right\vert ^{2}+\frac{\ell^{2}%
u_{n}^{2}}{r^{2}}\right]  dx}{2\sigma}+\frac{\int u_{n}^{2}}{2\sigma}%
+\frac{\int N(u_{n})dx}{\sigma}+\frac{\sigma}{2\int u_{n}^{2}}\leq1-\delta.
\]
Thus
\[
\frac{\int N(u_{n})dx}{\sigma}\leq1-\delta-\left(  \frac{\int u_{n}^{2}%
}{2\sigma}+\frac{\sigma}{2\int u_{n}^{2}}\right)  \leq-\delta.
\]
This implies that
\[
\int N(u_{n})dx\leq-\delta\sigma.
\]
On the other hand, by the assumptions on $W,$ we have that
\[
N(s)\geq-bs^{3},\text{ }b>0
\]
Then
\[
b\int\left\vert u_{n}\right\vert ^{3}\geq-\int N(u_{n})dx\geq\delta\sigma.
\]
\bigskip
\end{proof}

\begin{proposition}
\label{quasi}For any $\sigma,$ $q\geq0$ there exists a minimizing sequence
$\left(  u_{n},\mathbf{A}_{n}\right)  $ of $E_{\sigma,q}\mid_{V},$ with
$u_{n}\geq0$ and which is also a P.S. sequence for $E_{\sigma,q},$ i.e.
\[
E_{\sigma,q}^{\prime}\left(  u_{n},\mathbf{A}_{n}\right)  \rightarrow0.
\]

\end{proposition}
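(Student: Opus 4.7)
The strategy is a direct application of Ekeland's variational principle combined with the natural-constraint role of $V$ and the $\mathbb{Z}_2$-symmetry $u\mapsto -u$.

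First I would verify that $E_{\sigma,q}$ is bounded below on $V$ and that $V$ is a complete metric space. The lower bound is immediate: every integrand in $I(u,\mathbf{A})$ is nonnegative (using W1 for the $\int W(u)$ term), so $I\ge 0$, while the maximum-principle estimate \eqref{max} gives $1-q\Phi_u\ge 0$, hence $K_q(u)\ge 0$ with $K_q(u)>0$ for $u\neq 0$; therefore $E_{\sigma,q}\ge 0$ on $V\setminus\{u=0\}$ (for $q=0$ one has $\Phi_u=0$ and $K_0(u)=\int u^2$, so the conclusion is the same). Completeness of $V$ is clear because it is a closed subspace of the Hilbert space $\hat{H}^1\times(\mathcal{D}^{1,2})^3$, and $E_{\sigma,q}$ is $C^1$ on $V$ by Lemma \ref{tecnico} combined with smoothness of the map $Z_\omega$ and Lemma \ref{unno}.

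Next I would apply Ekeland's variational principle on $V$. This yields a sequence $U_n=(u_n,\mathbf{A}_n)\in V$ with $E_{\sigma,q}(U_n)\to\inf_V E_{\sigma,q}$ and $\|dE_{\sigma,q}|_V(U_n)\|_{V^*}\to 0$. To upgrade from a P.S. sequence on $V$ to a P.S. sequence on the full space $\hat{H}^1\times(\mathcal{D}^{1,2})^3$, I would invoke the natural-constraint property of $V$ recorded in Section 3.3: because the Lagrangian is invariant under rotations around the $x_3$-axis and under the reflection that sends $\theta\mapsto-\theta$, the ansatz space $\hat{H}^1_r\times\mathcal{A}$ coincides with the fixed-point set of an isometric group action, and Palais' principle of symmetric criticality gives $E_{\sigma,q}'(U_n)\to 0$ in $(\hat{H}^1\times(\mathcal{D}^{1,2})^3)^*$.

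To enforce $u_n\ge 0$, I would run Ekeland not on $V$ itself but on the closed convex cone $V^+=V\cap\{u\ge 0\}$. This is legitimate since $V^+$ is still complete and, by evenness of $W$ and the identity $K_q(|u|)=K_q(u)$ (which follows from the fact that $\Phi_u$ depends on $u$ only through $u^2$ via \eqref{b2}), one has $E_{\sigma,q}(|u|,\mathbf{A})=E_{\sigma,q}(u,\mathbf{A})$, so $\inf_{V^+}E_{\sigma,q}=\inf_V E_{\sigma,q}$. Ekeland on $V^+$ yields a minimizing sequence in $V^+$ satisfying the one-sided variational inequality $\langle E_{\sigma,q}'(U_n),(v,\mathbf{V})\rangle\ge-\varepsilon_n\|(v,\mathbf{V})\|$ for every admissible perturbation $(v,\mathbf{V})$ with $u_n+tv\ge 0$ for small $t>0$.

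The main obstacle is upgrading this one-sided inequality into a genuine two-sided P.S. estimate. For test pairs $(v,\mathbf{V})$ with $v$ supported where $u_n>0$ the cone is locally two-sided, so both inequalities hold and there is nothing to do. For $v$ supported on $\{u_n=0\}$ one needs to observe that $\nabla u_n=0$ a.e. on the zero set (Stampacchia's lemma), that $W'(u_n)=W'(0)=0$ there by W2, and that the remaining contributions to $\langle E_{\sigma,q}'(U_n),(v,0)\rangle$ carry $u_n$ as a factor (including the $K_q'$ term $2u_n(1-q\Phi_{u_n})^2$ from Lemma \ref{unno}); hence $\langle E_{\sigma,q}'(U_n),(v,0)\rangle=0$ on such $v$ and the one-sided inequality trivially becomes two-sided. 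Splitting an arbitrary $(v,\mathbf{V})\in V$ into these two pieces and combining the estimates gives $\|E_{\sigma,q}'(U_n)\|_{V^*}\to 0$, and then Palais' principle upgrades this to the full space as before.
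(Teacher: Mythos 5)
Your overall skeleton --- $E_{\sigma,q}\geq 0$ on $V$, Ekeland's principle to produce a minimizing sequence that is P.S. for $E_{\sigma,q}\mid_{V}$, and a natural-constraint/symmetric-criticality argument to upgrade $E_{\sigma,q}^{\prime}\mid_{V}(U_{n})\rightarrow0$ to $E_{\sigma,q}^{\prime}(U_{n})\rightarrow0$ --- is exactly what the paper does (it compresses the first two steps into ``standard variational arguments'' and refers the last one to the proof of Theorem 16 of \cite{befov07}). The genuine problem is your treatment of the constraint $u_{n}\geq0$. First, the splitting of an arbitrary test function $v\in\hat{H}^{1}$ into a piece supported in $\left\{  u_{n}>0\right\}  $ and a piece supported in $\left\{  u_{n}=0\right\}  $ is not available: these are merely measurable sets, and $v\chi_{\left\{  u_{n}>0\right\}  }$ is in general not in $H^{1}$, so the final ``combine the estimates'' step has nothing to combine. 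Second, even for $v$ with $\mathrm{supp}\,v\subset\left\{  u_{n}>0\right\}  $ the cone is \emph{not} locally two-sided: $u_{n}-tv$ stays nonnegative for small $t>0$ only if $u_{n}$ is bounded below by a positive constant on $\mathrm{supp}\,v$, which a generic $H^{1}$ function is not. Hence the passage from the one-sided Ekeland inequality on $V^{+}$ to $\left\Vert E_{\sigma,q}^{\prime}(U_{n})\right\Vert \rightarrow0$ does not go through as written.

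The fix is the observation you already made and then did not exploit: since $W$ is even, $\left\vert \nabla\left\vert u\right\vert \right\vert =\left\vert \nabla u\right\vert $ a.e. and $\Phi_{\left\vert u\right\vert }=\Phi_{u}$ by (\ref{b2}), one has $E_{\sigma,q}(\left\vert u\right\vert ,\mathbf{A})=E_{\sigma,q}(u,\mathbf{A})$ and $(\left\vert u\right\vert ,\mathbf{A})\in V$ whenever $(u,\mathbf{A})\in V$. The paper simply replaces $u_{n}$ by $\left\vert u_{n}\right\vert $ in the minimizing sequence and never introduces the cone $V^{+}$ or a one-sided variational inequality at all; nonnegativity comes for free from the evenness, and Ekeland is then applied on the whole of $V$. (If one is being scrupulous, the Ekeland perturbation of a nonnegative minimizing sequence need not itself be pointwise nonnegative; but it differs from the nonnegative one by $o(1)$ in norm, so the nonnegativity that is actually used downstream --- that of the weak limit $u_{0}$ --- is preserved.) With that replacement your argument reduces to the paper's.
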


\begin{proof}
\textbf{\ }Let $\left(  u_{n},\mathbf{A}_{n}\right)  \subset V$ be a
minimizing sequence for $E_{\sigma}\mid_{V}.$ It is not restrictive to assume
that $u_{n}\geq0,$ in fact, if not, we can replace $u_{n}$ with $\left\vert
u_{n}\right\vert $ (see (\ref{enna1})). By standard variational arguments we
can also assume that $\left(  u_{n},\mathbf{A}_{n}\right)  $ is a P.S.
sequence for $E_{\sigma}\mid_{V},$ namely we can assume that
\[
E_{\sigma,q}^{\prime}\mid_{V}\left(  u_{n},\mathbf{A}_{n}\right)
\rightarrow0.
\]
By using the same arguments in proving Theorem 16 in \cite{befov07}, it can be
shown that $\left(  u_{n},\mathbf{A}_{n}\right)  $ is a P.S. sequence also for
$E_{\sigma,q},$ i.e.
\begin{equation}
E_{\sigma,q}^{\prime}\left(  u_{n},\mathbf{A}_{n}\right)  \rightarrow
0.\label{pal}%
\end{equation}

\end{proof}

\begin{proposition}
\label{la}There exist $\sigma$, $\bar{q}>0$ such that for all $0\leq q<\bar
{q}$ there exists a P.S. sequence $U_{n}=\left(  u_{n},\mathbf{A}_{n}\right)
$ for $E_{\sigma,q}$ which weakly converges to $\left(  u_{0},\mathbf{A}%
_{0}\right)  ,$ $u_{0}\geq0$ and $u_{0}\neq0.$.
\end{proposition}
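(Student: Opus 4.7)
The plan is to combine the three ingredients already prepared: the uniform bound of Lemma \ref{tec}, the Palais--Smale property of Proposition \ref{quasi}, and the $L^{3}$--lower bound of Proposition \ref{on}, together with the translation invariance of $E_{\sigma,q}$ along the $x_{3}$-axis, to produce a non-vanishing weak limit.

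First I would fix $\sigma$ and $\bar q$ as in Lemma \ref{au} and Proposition \ref{on}, and take $0\le q<\bar q$. By Proposition \ref{quasi} there exists a minimizing sequence $U_{n}=(u_{n},\mathbf{A}_{n})\in V$ for $E_{\sigma,q}\mid_{V}$ with $u_{n}\ge 0$ which is simultaneously a Palais--Smale sequence for $E_{\sigma,q}$ on $\hat H^{1}\times(\mathcal{D}^{1,2})^{3}$. By Lemma \ref{tec}, $(u_{n},\mathbf{A}_{n})$ is bounded in $\hat H^{1}\times(\mathcal{D}^{1,2})^{3}$, so up to a subsequence it converges weakly to some $(u_{0},\mathbf{A}_{0})$ with $u_{0}\ge 0$. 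The only missing point is therefore the non-triviality $u_{0}\neq 0$, and the rest of the proof concentrates on this.

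By Proposition \ref{on} one has $\|u_{n}\|_{L^{3}}\ge c>0$ for $n$ large. Since the sequence is bounded in $H^{1}(\mathbb{R}^{3})$, a standard Lions-type vanishing lemma gives $y_{n}\in\mathbb{R}^{3}$ and constants $R,\eta>0$ such that
\[
\int_{B_{R}(y_{n})}u_{n}^{2}\,dx\ge\eta.
\]
Here the decisive point is to show that, by the cylindrical symmetry $u_{n}=u_{n}(r,x_{3})\in\hat H_{r}^{1}$, the horizontal components of $y_{n}$ are bounded. Indeed, if we write $y_{n}=(\rho_{n}\cos\alpha_{n},\rho_{n}\sin\alpha_{n},L_{n})$, the cylindrical symmetry forces the same mass $\ge\eta$ on every translate of $B_{R}(y_{n})$ obtained by rotation around the $x_{3}$-axis; summing these disjoint contributions over an arc of length $\sim \rho_{n}$ produces a lower bound of order $\rho_{n}\eta/R$ for $\|u_{n}\|_{L^{2}}^{2}$, contradicting the $L^{2}$-bound of (\ref{bouu}) unless $\rho_{n}$ stays bounded. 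Hence up to a bounded error we may assume $y_{n}=(0,0,L_{n})$.

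At this point I exploit the translation invariance $E_{\sigma,q}(T_{L}U)=E_{\sigma,q}(U)$ (see (\ref{trasla})): the translated sequence $\tilde U_{n}:=T_{-L_{n}}U_{n}$ is again a minimizing Palais--Smale sequence, lies in $V$, satisfies $\tilde u_{n}\ge 0$, is bounded in $\hat H^{1}\times(\mathcal{D}^{1,2})^{3}$ by the same estimate, and by construction satisfies
\[
\int_{B_{R'}(0)}\tilde u_{n}^{2}\,dx\ge\eta
\]
for some fixed $R'>0$ (absorbing the bounded $\rho_{n}$). Extracting a weakly convergent subsequence $\tilde U_{n}\rightharpoonup(u_{0},\mathbf{A}_{0})$ in $\hat H^{1}\times(\mathcal{D}^{1,2})^{3}$ and using the compact embedding $\hat H^{1}(B_{R'})\hookrightarrow L^{2}(B_{R'})$, one passes to the limit in the local $L^{2}$ lower bound and obtains $\int_{B_{R'}(0)}u_{0}^{2}\ge\eta>0$, hence $u_{0}\neq 0$, while $u_{0}\ge 0$ is preserved by weak convergence. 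Relabeling $\tilde U_{n}$ as $U_{n}$ finishes the proof.

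The main obstacle in this plan is the dichotomy/concentration analysis: it is essential to confine the concentration points to a bounded neighborhood of the symmetry axis, and this is precisely where the cylindrical symmetry (the space $\hat H_{r}^{1}$) together with the $L^{2}$-bound is used. Once this is established, the remainder is a routine combination of translation invariance, weak compactness, and local Rellich compactness.
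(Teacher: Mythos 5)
Your proof is correct, and while the overall skeleton (Proposition \ref{quasi} for the P.S.\ minimizing sequence, Lemma \ref{tec} for boundedness, Proposition \ref{on} for the $L^{3}$ lower bound, then an $x_{3}$-translation to rescue a nontrivial weak limit) matches the paper, the decisive non-vanishing step is handled by a genuinely different mechanism. The paper slices $\mathbb{R}^{3}$ into unit slabs $\Omega_{j}=\{j\leq x_{3}<j+1\}$, uses the elementary chain $\|u_{n}\|_{L^{3}}^{3}\leq C\sup_{j}\|u_{n}\|_{L^{3}(\Omega_{j})}\cdot\|u_{n}\|_{H^{1}}^{2}$ to find a slab carrying a fixed amount of $L^{3}$ mass, translates by the corresponding integer $j_{n}$, and then invokes the Esteban--Lions compactness lemma \cite{el} for cylindrically symmetric functions to get strong $L^{3}$ convergence on the \emph{unbounded} slab $\mathbb{R}^{2}\times(-2,2)$, which transfers the lower bound to the limit. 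You instead run the Lions vanishing alternative on balls, use the cylindrical symmetry together with the $L^{2}$ bound (\ref{bouu}) to confine the concentration balls to a bounded distance from the $x_{3}$-axis (the rotated-disjoint-copies count $\|u_{n}\|_{L^{2}}^{2}\gtrsim\rho_{n}\eta/R$ is sound), and then conclude with ordinary Rellich compactness on a fixed ball. Both routes exploit the symmetry, but in different places: the paper feeds it into a compact embedding on an unbounded slab, you feed it into the localization of the concentration points. Your version is more self-contained in that it replaces the Esteban--Lions lemma by standard local compactness, at the cost of the extra dichotomy/localization discussion; translating by real $L_{n}$ rather than integers is immaterial since $E_{\sigma,q}$ and $V$ are invariant under all $x_{3}$-translations.
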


\begin{proof}
Take $\sigma$, $q$ as in Proposition \ref{on}. By Proposition \ref{quasi}
there exists a minimizing sequence $U_{n}=\left(  u_{n},\mathbf{A}_{n}\right)
$ of $E_{\sigma,q}\mid_{V}$ with $u_{n}\geq0$ and which is also a P.S.
sequence for $E_{\sigma,q},$ i.e.
\[
E_{\sigma,q}^{\prime}\left(  U_{n}\right)  \rightarrow0.
\]
By Proposition \ref{on} we can assume that%
\[
\left\Vert u_{n}\right\Vert _{L^{3}}\geq c>0\text{ for }n\text{ large.}%
\]
By Lemma \ref{tec} the sequence $\left\{  U_{n}\right\}  $ is bounded in
$\hat{H}^{1}\times\left(  \mathcal{D}^{1,2}\right)  ^{3}$ so we can assume
that it weakly converges. However the weak limit could be trivial. We will
show that there is a sequence of integers $j_{n}$ such that (up to a
subsequence) $V_{n}:=T_{j_{n}}U_{n}\rightharpoonup U_{0}=\left(
u_{0},\mathbf{A}_{0}\right)  ,$ $u_{0}\neq0,$ weakly in $H^{1}\times\left(
\mathcal{D}^{1,2}\right)  ^{3}$.

We set
\[
\Omega_{j}=\left\{  \left(  x_{1},x_{2},x_{3}\right)  :j\leq x_{3}%
<j+1\right\}  \text{, }j\text{ integer}%
\]
In the following $c_{1},...,c_{4}$ denote positive constants.

We have for $n$ large
\begin{align*}
0  & <c_{1}\leq\left\Vert u_{n}\right\Vert _{L^{3}}^{3}=\sum_{j}\int
_{\Omega_{j}}\left\vert u_{n}\right\vert ^{3}=\sum_{j}\left(  \int_{\Omega
_{j}}\left\vert u_{n}\right\vert ^{3}\right)  ^{1/3}\cdot\left(  \int
_{\Omega_{j}}\left\vert u_{n}\right\vert ^{3}\right)  ^{2/3}\\
& \leq\sup_{j}\left\Vert u_{n}\right\Vert _{L^{3}\left(  \Omega_{j}\right)
}\sum_{j}\left(  \int_{\Omega_{j}}\left\vert u_{n}\right\vert ^{3}\right)
^{2/3}\leq c_{2}\cdot\sup_{j}\left\Vert u_{n}\right\Vert _{L^{3}\left(
\Omega_{j}\right)  }\cdot\sum_{j}\left\Vert u_{n}\right\Vert _{H^{1}\left(
\Omega_{j}\right)  }^{2}\\
& \leq c_{2}\cdot\sup_{j}\left\Vert u_{n}\right\Vert _{L^{3}\left(  \Omega
_{j}\right)  }\cdot\left\Vert u_{n}\right\Vert _{H^{1}\left(  \mathbb{R}%
^{3}\right)  }^{2}\leq(\text{since }\left\Vert u_{n}\right\Vert _{H^{1}\left(
\mathbb{R}^{3}\right)  }^{2}\leq c_{3})\text{ }\\
& \leq c_{2}c_{3}\sup_{j}\left\Vert u_{n}\right\Vert _{L^{3}\left(  \Omega
_{j}\right)  }.
\end{align*}
Then, for $n$ large$,$ there exists an integer $j_{n}$ such that
\begin{equation}
\left\Vert u_{n}\right\Vert _{L^{3}\left(  \Omega_{j_{n}}\right)  }\geq
\frac{c_{1}}{2c_{2}c_{3}}:=c_{4}>0.\label{paracula}%
\end{equation}
Now set
\[
\left(  u_{n}^{\prime},\mathbf{A}_{n}^{^{\prime}}\right)  =\ U_{n}^{\prime
}(x_{1},x_{2},x_{3})=U_{n}(x_{1},x_{2},x_{3}+j_{n})=T_{j_{n}}\left(
U_{n}\right)
\]
By Lemma \ref{tec} the sequence $u_{n}^{\prime}$ is bounded $\hat{H}%
^{1}\left(  \mathbb{R}^{3}\right)  ,$ then (up to a subsequence) it converges
weakly to $u_{0}\in\hat{H}^{1}\left(  \mathbb{R}^{3}\right)  .$ Clearly
$u_{0}\geq0$, since $u_{n}^{\prime}\geq0.$ We want to show that $u_{0}\neq0.$
Now, let $\varphi=\varphi\left(  x_{3}\right)  $ be a nonnegative, $C^{\infty
}$-function whose value is $1$ for $0<x_{3}<1$ and $0 $ for $\left\vert
x_{3}\right\vert >2.$ Then, the sequence $\varphi u_{n}^{\prime}$ is bounded
in $H_{0}^{1}(\mathbb{R}^{2}\times(-2,2)),$ moreover $\varphi u_{n}^{\prime}$
has cylindrical symmetry. Then, using the compactness result proved in
\cite{el}, we have
\[
\varphi u_{n}^{\prime}\rightarrow\chi\text{ strongly in }L^{3}(\mathbb{R}%
^{2}\times(-2,2)).
\]
On the other hand
\begin{equation}
\varphi u_{n}^{\prime}\rightarrow\varphi u_{0}\text{ }a.e\text{.}\label{punto}%
\end{equation}

Then
\begin{equation}
\varphi u_{n}^{\prime}\rightarrow\varphi u_{0}\text{ strongly in }%
L^{3}(\mathbb{R}^{2}\times(-2,2)).\label{virgola}%
\end{equation}

Moreover by (\ref{paracula})
\begin{equation}
\left\Vert \varphi u_{n}^{\prime}\right\Vert _{L^{3}\left(  \mathbb{R}%
^{2}\times(-2,2)\right)  }\geq\left\Vert u_{n}^{\prime}\right\Vert
_{L^{3}\left(  \Omega_{0}\right)  }=\left\Vert u_{n}\right\Vert _{L^{3}\left(
\Omega_{j_{n}}\right)  }\geq c_{4}.\label{pp}%
\end{equation}

Then$\ $by (\ref{virgola}) and (\ref{pp})
\[
\left\Vert \varphi u_{0}\right\Vert _{L^{3}\left(  \mathbb{R}^{2}%
\times(-2,2)\right)  }\geq c_{4}>0.
\]
Thus we have that $u_{0}\neq0.$
\end{proof}

\begin{proposition}
\label{sol} There exists $\bar{q}>0$ such that, for all $0\leq q<\bar{q},$ for
some charge $\sigma_{0}>0,$ $E_{\sigma_{0},q}$ has a critical point $\left(
u_{0},\mathbf{A}_{0}\right)  $ $u_{0}\neq0$, $u_{0}\geq0$.
\end{proposition}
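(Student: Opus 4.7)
The plan is to take the P.S. sequence $U_n = (u_n, \mathbf{A}_n)$ produced by Proposition \ref{la}, which (after translation in the $x_3$-direction) converges weakly in $\hat{H}^1 \times (\mathcal{D}^{1,2})^3$ to a nontrivial $U_0 = (u_0, \mathbf{A}_0)$ with $u_0 \geq 0$, and then to show that $U_0$ is a critical point of $E_{\sigma_0, q}$ for an appropriate charge $\sigma_0 > 0$. The key point is that, since $E_{\sigma,q}$ depends explicitly on $\sigma$ and weak limits may lose charge, $\sigma_0$ will generally differ from $\sigma$; it will be determined through the limiting ``frequency'' $\omega_0$ of the sequence $\omega_n := \sigma / K_q(u_n)$.

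First I would control $\omega_n$. By Lemma \ref{mi} we have $\delta \leq \omega_n \leq M$ with $\delta, M > 0$, so up to a subsequence $\omega_n \to \omega_0 \in [\delta, M]$. Using Lemma \ref{unno}, the P.S. condition $E'_{\sigma,q}(U_n) \to 0$ can be rewritten as
\[
I'(u_n, \mathbf{A}_n) - \omega_n^2 \bigl( u_n (1 - q\Phi_{u_n})^2,\, 0 \bigr) \longrightarrow 0
\]
in the dual of $\hat{H}^1 \times (\mathcal{D}^{1,2})^3$. Testing against any fixed $(v, \mathbf{V})$, the quadratic contributions $\int \nabla u_n \cdot \nabla v$, $\int \nabla \mathbf{A}_n \cdot \nabla \mathbf{V}$ and $\int \ell^2 u_n v / r^2$ pass to the limit by weak convergence. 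The genuinely nonlinear terms (those of the form $\int q^2 |\mathbf{A}_n|^2 u_n v$, $\int u_n^2 \mathbf{A}_n \cdot \mathbf{V}$, $\int q\ell\, \mathbf{A}_n \cdot \nabla \theta\, u_n v$, $\int W'(u_n) v$, and the reduction term $\int u_n (1 - q \Phi_{u_n})^2 v$) can be handled via local strong convergence $u_n \to u_0$ in $L^p_{\mathrm{loc}}$, which follows for cylindrically symmetric functions from the compactness results already invoked in Proposition \ref{la}, together with the subcritical growth bound \eqref{gratis} and the a priori estimate \eqref{max}. Continuity of the map $u \mapsto \Phi_u$ in the appropriate sense is obtained by subtracting the equations \eqref{b2} for $u_n$ and $u_0$ and testing against $\Phi_{u_n} - \Phi_{u_0}$.

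Having passed to the limit, $U_0$ satisfies $I'(u_0, \mathbf{A}_0) - \omega_0^2 (u_0 (1 - q\Phi_{u_0})^2, 0) = 0$, i.e.\ $U_0$ is a critical point of the functional $I - \tfrac{\omega_0^2}{2} K_q$. Now set $\sigma_0 := \omega_0 K_q(u_0)$. Since $u_0 \neq 0$ and $0 \leq q\Phi_{u_0} < 1$ pointwise by \eqref{max}, we have $K_q(u_0) > 0$, hence $\sigma_0 > 0$; moreover $\omega_0 = \sigma_0 / K_q(u_0)$, and reading the algebra \eqref{prel1} backwards gives $E'_{\sigma_0, q}(U_0) = 0$, as desired. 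The weak limits of $u_n \in \hat{H}_r^1$ and $\mathbf{A}_n \in \mathcal{A}$ retain the cylindrical/axial symmetry built into $V$, so $U_0 \in V$.

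The main obstacle is the passage to the limit in the two non-compact pieces: the nonlocal term involving $\Phi_{u_n}$ and the nonlinear potential $W'(u_n)$. The former is controlled via the uniform pointwise bound $0 \leq q\Phi_{u_n} \leq 1$ combined with the weak formulation of \eqref{b2}; the latter uses the cutoff implicit in \eqref{gratis} which reduces to a subcritical Sobolev exponent. A subsidiary concern is preventing complete loss of charge under weak convergence, and this is precisely what the translation argument of Proposition \ref{la} achieves by delivering $u_0 \neq 0$, ensuring $\sigma_0 > 0$ and not merely $\sigma_0 \geq 0$.
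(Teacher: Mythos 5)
Your proposal is correct and follows essentially the same route as the paper's proof: extract $\omega_n=\sigma/K_q(u_n)\to\omega_0>0$ via Lemma \ref{mi}, rewrite the P.S.\ condition through Lemma \ref{unno} as $d_uI(U_n)-\tfrac{\omega_n^2}{2}K_q'(u_n)\to0$, pass to the limit term by term (weak convergence for the quadratic pieces, local compactness and the bound \eqref{max} for the nonlinear and nonlocal pieces), and then define $\sigma_0=\omega_0K_q(u_0)>0$ so that $U_0$ is a critical point of $E_{\sigma_0,q}$. The only cosmetic difference is that the paper tests against $w\in C_0^{\infty}(\mathbb{R}^3\backslash\Sigma)$ and closes by density, while you test against general elements of $\hat H^1\times(\mathcal{D}^{1,2})^3$ directly; both are fine.
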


\begin{proof}
Let $\sigma$, $q>0$ be as in Proposition \ref{la}, then there exists a
sequence $U_{n}=\left(  u_{n},\mathbf{A}_{n}\right)  $ in $V$, with $u_{n}%
\geq0$ and such that%
\begin{equation}
E_{\sigma,q}^{\prime}\left(  u_{n},\mathbf{A}_{n}\right)  \rightarrow
0\label{pus}%
\end{equation}

and
\[
\left(  u_{n},\mathbf{A}_{n}\right)  \rightarrow\left(  u_{0},\mathbf{A}%
_{0}\right)  \text{ weakly, }u_{0}\neq0
\]
Since $u_{n}\geq0$ we have $u_{0}\geq0$ .

Let us show that $U_{0}=\left(  u_{0},\mathbf{A}_{0}\right)  $ is a critical
point of $E_{\sigma_{0},q}$ for some charge $\sigma_{0}>0.$

By (\ref{pus}) we get that%
\[
dE_{\sigma,q}\left(  U_{n}\right)  \left[  w,0\right]  \rightarrow0\text{ and
}dE_{\sigma,q}\left(  U_{n}\right)  \left[  0,\mathbf{w}\right]
\rightarrow0\text{ for any }\left(  w,\mathbf{w}\right)  \in\hat{H}^{1}%
\times\left(  \mathcal{D}^{1,2}\right)  ^{3}.
\]

Then for any $w\in C_{0}^{\infty}(\mathbb{R}^{3}\backslash\Sigma)$ and
$\mathbf{w}\in\left(  C_{0}^{\infty}(\mathbb{R}^{3})\right)  ^{3}$ we have
\begin{equation}
d_{u}I(U_{n})\left[  w\right]  +d_{u}\left(  \frac{\sigma^{2}}{2K_{q}(u_{n}%
)}\right)  \left[  w\right]  \rightarrow0\label{uni}%
\end{equation}

and%
\begin{equation}
d_{\mathbf{A}}I(U_{n})\left[  \mathbf{w}\right]  \rightarrow0\label{dui}%
\end{equation}
where $d_{u}$ and $d_{\mathbf{A}}$ denote the partial differentials of $I$
with respect $u$ and $\mathbf{A}.$ So from (\ref{uni}) we get for any $w\in
C_{0}^{\infty}(\mathbb{R}^{3}\backslash\Sigma)$
\[
d_{u}I(U_{n})\left[  w\right]  -\frac{\sigma^{2}K_{q}^{\prime}(u_{n}%
)}{2\left(  K_{q}(u_{n})\right)  ^{2}}\left[  w\right]  \rightarrow0
\]
which can be written as follows%
\begin{equation}
d_{u}I(U_{n})\left[  w\right]  -\frac{\omega_{n}^{2}K_{q}^{\prime}(u_{n})}%
{2}\left[  w\right]  \rightarrow0\label{duo}%
\end{equation}

where%
\[
\omega_{n}=\frac{\sigma}{K_{q}(u_{n})}.
\]
By Lemma \ref{mi} we have (up to a subsequence)
\[
\omega_{n}\rightarrow\omega_{0}>0
\]
Then by (\ref{duo}) we get for any $w\in C_{0}^{\infty}(\mathbb{R}%
^{3}\backslash\Sigma)$%
\begin{equation}
d_{u}I(U_{n})\left[  w\right]  -\frac{\omega_{0}^{2}K_{q}^{\prime}(u_{n})}%
{2}\left[  w\right]  \rightarrow0.\label{dudu}%
\end{equation}
Now let $\Phi_{n}$ be the solution in $\mathcal{D}^{1,2}$ of the equation
\begin{equation}
-\Delta\Phi_{n}+q^{2}u_{n}^{2}\Phi_{n}=qu_{n}^{2}.\label{fract}%
\end{equation}

Since $\left\{  u_{n}\right\}  $ is bounded in $H^{1}$ (see (\ref{bou}) and
(\ref{bouu})) and since $\Phi_{n}$ solves (\ref{fract}), standard Sobolev
estimates show that $\left\{  \Phi_{n}\right\}  $ is bounded in $\mathcal{D}%
^{1,2}$ and that its weak limit (up to subsequence) $\Phi_{0}$ is a weak
solution of
\begin{equation}
-\Delta\Phi_{0}+q^{2}u_{0}^{2}\Phi_{0}=qu_{0}^{2}.\label{med}%
\end{equation}
Then, by Lemma \ref{unno}, we have%
\begin{equation}
K_{q}^{\prime}(u_{n})=2u_{n}(1-q\Phi_{n})^{2}\text{ and }K_{q}^{\prime}%
(u_{0})=2u_{0}(1-q\Phi_{0})^{2}.\label{util}%
\end{equation}
By standard calculations we have:
\[
\text{for any }w\in C_{0}^{\infty}(\mathbb{R}^{3}\backslash\Sigma)
\]%
\begin{equation}%
%TCIMACRO{\dint }%
%BeginExpansion
{\displaystyle\int}
%EndExpansion
u_{n}(1-q\Phi_{n})^{2}w\rightarrow%
%TCIMACRO{\dint }%
%BeginExpansion
{\displaystyle\int}
%EndExpansion
u_{0}(1-q\Phi_{0})^{2}w.\label{biutil}%
\end{equation}
Then, by (\ref{util}) and (\ref{biutil}), we get for any $w\in C_{0}^{\infty
}(\mathbb{R}^{3}\backslash\Sigma)$%
\begin{equation}
K_{q}^{\prime}(u_{n})\left[  w\right]  \rightarrow\text{ }K_{q}^{\prime}%
(u_{0})\left[  w\right]  .\label{triutil}%
\end{equation}
Similar standard estimates show that for any $w\in C_{0}^{\infty}%
(\mathbb{R}^{3}\backslash\Sigma)$%
\begin{equation}
d_{u}I(U_{n})\left[  w\right]  \rightarrow d_{u}I(U_{0})\left[  w\right]
.\label{dado}%
\end{equation}
Then, passing to the limit in (\ref{dudu}), by (\ref{triutil}) and
(\ref{dado}), we get%
\begin{equation}
d_{u}I(U_{0})\left[  w\right]  -\frac{\omega_{0}^{2}K_{q}^{\prime}(u_{0})}%
{2}\left[  w\right]  =0\text{ for any }w\in C_{0}^{\infty}(\mathbb{R}%
^{3}\backslash\Sigma).\label{pe}%
\end{equation}
On the other hand similar arguments show that we can pass to the limit also in
$d_{\mathbf{A}}I(U_{n})\left[  \mathbf{w}\right]  $ and have
\[
\text{for all }\mathbf{w}\in\left(  C_{0}^{\infty}(\mathbb{R}^{3})\right)
^{3}%
\]%
\begin{equation}
d_{\mathbf{A}}I(U_{n})\left[  \mathbf{w}\right]  \rightarrow d_{\mathbf{A}%
}I(U_{0})\left[  \mathbf{w}\right]  .\label{due}%
\end{equation}
From (\ref{dui}) and (\ref{due}) we get%
\begin{equation}
d_{\mathbf{A}}I(U_{0})\left[  \mathbf{w}\right]  =0\text{ for all }%
\mathbf{w}\in\left(  C_{0}^{\infty}(\mathbb{R}^{3})\right)  ^{3}.\label{po}%
\end{equation}
By (\ref{pe}) and (\ref{po}) we deduce, by using density and continuity
arguments, that $U_{0}=\left(  u_{0},\mathbf{A}_{0}\right)  $ is a critical
point of $E_{\sigma_{0},q}$ with $\sigma_{0}=\omega_{0}K_{q}(u_{0})>0.$
\end{proof}

Proof of Theorem \ref{main}

\begin{proof}
The first part of Theorem \ref{main} immediately follows from Propositions
\ref{sol}, \ref{fond} and Theorem \ref{finale}. In fact, if $u_{0}%
,\mathbf{A}_{0}$ are like in Proposition \ref{sol}, by Proposition \ref{fond}
and Theorem \ref{finale} we deduce that $(u_{0},\omega_{0},\phi_{0}%
\mathbf{,A}_{0})$ with $\omega_{0}=$ $\frac{\sigma_{0}}{K_{q}(u_{0})},$
$\phi_{0}=Z_{\omega_{0}}(u_{0})$ solves (\ref{z1}), (\ref{z3}), (\ref{z4}).

Now assume $q=0,$ then, by (\ref{z3}) and (\ref{z4}), we easily deduce that
$\phi_{0}=0$ and $\mathbf{A}_{0}=0.$ Finally assume that $q>0.$ Then, since
$\omega_{0}>0$, by (\ref{z3}) we deduce that $\phi_{0}\neq0.$ Moreover by
(\ref{z4}) we easily deduce that $\mathbf{A}_{0}\neq0$ if and only if
$\ell\neq0.$
\end{proof}

\bigskip

\end{document}